\def\IID{\mathop{\mathrm{i.i.d.}}}  
\def\diag{\mathop{\mathrm{diag}}}   
\def\trace{\mathop{\mathrm{trace}}} 
\def\rank{\mathop{\mathrm{rank}}}    
\def\op{\mathop{\mathrm{op}}}       
\def\rdf{\mathop{\mathrm{RDF}}}     
\def\awgn{\mathop{\mathrm{AWGN}}}   
\def\na{\mathop{\mathrm{na}}}       
\def\mse{\mathop{\mathrm{MSE}}}     
\def\mimo{\mathop{\mathrm{MIMO}}}     
\def\nrdf{\mathop{\mathrm{NRDF}}}   
\def\opta{\mathop{\mathrm{OPTA}}}   
\def\op{\mathop{\mathrm{op}}}       
\def\zd{\mathop{\mathrm{ZD}}}       
\def\sd{\mathop{\mathrm{SD}}}       
\def\sdusq{\mathop{\mathrm{SDUSQ}}} 
\def\ecdq{\mathop{\mathrm{ECDQ}}}   
\def\abs{\mathop{\mathrm{abs}}}      
\def\gp{\mathop{\mathrm{GP}}}       
\def\ed{\mathop{\mathrm{ED}}}       
\def\unif{\mathop{\mathrm{Unif}}}   
\def\kkt{\mathop{\mathrm{KKT}}}     
\def\rvs{\mathop{\mathrm{RVs}}}     
\def\rv{\mathop{\mathrm{RV}}}       
\def\ar{\mathop{\mathrm{AR}}}       
\def\vol{\mathop{\mathrm{Vol}}}      
\def\sdp{\mathop{\mathrm{SDP}}}     
\def\lmi{\mathop{\mathrm{LMI}}}     
\newtheorem{theorem}{Theorem}
\newtheorem{lemma}{Lemma}
\newtheorem{definition}{Definition}
\newtheorem{corollary}{Corollary}
\newtheorem{remark}{Remark}
\newtheorem{example}{Example}
\newtheorem{problem}{Problem}
\newcommand{\T}{^{\mbox{\tiny T}}} 
\newcommand{\G}{^{\mbox{\tiny G}}}
\newcommand{\noi}{\noindent}
\newcommand{\be}{\begin{equation}}
\newcommand{\ee}{\end{equation}}
\newcommand{\bea}{\begin{eqnarray}}
\newcommand{\eea}{\end{eqnarray}}
\newcommand{\bes}{\begin{eqnarray*}}
\newcommand{\ees}{\end{eqnarray*}}
\newcommand{\bfi}{\begin{figure}}
\newcommand{\bfit}{\begin{figure}[t]}
\newcommand{\bfib}{\begin{figure}[b]}
\newcommand{\bfih}{\begin{figure}[h]}
\newcommand{\bfip}{\begin{figure}[p]}
\newcommand{\efi}{\end{figure}}
\newcommand{\bi}{\begin{itemize}}
\newcommand{\ei}{\end{itemize}}
\newcommand{\ben}{\begin{enumerate}}
\newcommand{\een}{\end{enumerate}}
\newcommand{\bp}{\begin{problem}}
\newcommand{\ep}{\end{problem}}
\begin{document}

\sloppy

\title{Zero-Delay Rate Distortion via Filtering for Vector-Valued Gaussian Sources}


\author{{\IEEEauthorblockN{Photios A. Stavrou, Jan {\O}stergaard and Charalambos D. Charalambous
\thanks{Part of this work was presented at the Symposium on Information Theory and Signal Processing in the Benelux, Delft, Netherlands \cite{stavrou:2017} and at the IEEE Information Theory Workshop (ITW), Kaohsiung, Taiwan \cite{stavrou:2017itw}. This work has received funding from VILLUM FONDEN Young
Investigator Programme, under grant agreement No. 10095.}
\thanks{\IEEEauthorblockA{
P. A. Stavrou is with the Department of Information Science and Engineering, KTH Royal Institute of Technology, Sweden}, {\it email: fstavrou@kth.se}}
\thanks{\IEEEauthorblockA{J. {\O}stergaard is with the Department of Electronic Systems, Aalborg University, Denmark}, {\it email: jo@es.aau.dk}}
\thanks{\IEEEauthorblockA{C. D. Charalambous is with the Department of Electrical and Computer Engineering, University of Cyprus, Cyprus}, {\it email: chadcha@uc.ac.cy}}
}}}


\maketitle

\begin{abstract}
We deal with zero-delay source coding of a vector-valued Gauss-Markov source subject to a mean-squared error ($\mse$) fidelity criterion {characterized by the operational zero-delay vector-valued Gaussian rate distortion function ($\rdf$). We address this problem by considering the nonanticipative $\rdf$ ($\nrdf$) which is a lower bound to the causal optimal performance theoretically attainable ($\opta$) function (or simply causal $\rdf$) and operational zero-delay $\rdf$. We recall the realization that corresponds to the optimal ``test-channel'' of the Gaussian $\nrdf$, when considering a vector Gauss-Markov source subject to a $\mse$ distortion in the finite time horizon. Then, we introduce sufficient conditions to show existence of solution for this problem in the infinite time horizon (or asymptotic regime). For the asymptotic regime, we use the asymptotic characterization of the Gaussian $\nrdf$ to provide a new equivalent realization scheme with feedback which is characterized by a resource allocation (reverse-waterfilling) problem across the dimension of the vector source. We leverage the new realization to derive a predictive coding scheme via lattice quantization with subtractive dither and joint memoryless entropy coding. This coding scheme offers an upper bound to the operational zero-delay vector-valued Gaussian $\rdf$. When we use scalar quantization, then for $r$ active dimensions of the vector Gauss-Markov source the gap between the obtained lower and theoretical upper bounds is less than or equal to $0.254r + 1$ bits/vector. However, we further show that it is possible when we use vector quantization, and assume infinite dimensional Gauss-Markov sources to make the previous gap to be negligible, i.e., Gaussian $\nrdf$ approximates the operational zero-delay Gaussian $\rdf$. We also extend our results to vector-valued Gaussian sources of any finite memory under mild conditions}. Our theoretical framework is demonstrated with illustrative numerical experiments.
\end{abstract}

%
%
%
%
%
\section{Introduction}\label{sec:introduction}

\par Rate distortion theory describes the fundamental limits between the desired bitrate and the associated achievable distortion or vice versa, for a specific source and distortion measure \cite{berger:1971}. The source coders and decoders, which are able to get very close to the fundamental rate-distortion limits are generally computationally expensive, non-causal, and tends to impose long delays on the end-to-end processing of information. When source coding is to be part of a bigger infrastructure such as distributed data processing over sensor networks, networked control systems, etc., there will often be strict requirements on the tolerable delay and system complexity. {This necessitates real-time communication between the systems involved whereas delays play a critical role on the performance or even the stability of these systems.}

\par To achieve near instantaneous encoding and decoding, it is necessary that the source encoder and decoder are causal \cite{neuhoff:1982}. Unfortunately, causality comes with a price. In particular, it was shown in \cite{derpich:2012} that imposing causality on the coder results in an increase in the bitrate due to the quantizer's space-filling loss and reduced de-noising capabilities due to causal filtering at the decoder. If zero-delay is furthermore imposed, there will be an additional increase in the bitrate due to having a finite (and often small) alphabet in the entropy coder~\cite{derpich:2012}.

\par In applications where both instantaneous encoding and
decoding are required, it is common to use the term zero-delay source coding \cite{linder:2001}. Zero-delay source coding is particularly relevant for networked control systems, where an unstable plant is to be stabilized via a communication channel. At each time step, the feedback signal of the plant needs to be encoded, transmitted over a channel, decoded, and reproduced at the controller's side. Some indicative works on zero-delay source coding for control systems can be found, for instance, in \cite{witsenhausen1979,tatikonda:2004,nair:2007,silva:2011,yuksel:2014,tanaka:2016,silva:2016,khina:2017itw}.

\par In the field of information theory, there is a tradition to establish achievability of a certain rate-distortion performance by showing a construction based on random codebooks, which requires asymptotically large source vector dimensions \cite{cover-thomas2006}. However, in the case of zero-delay source coding, the random coding based technique is often not applicable. Indeed, the optimal rate-distortion performance for zero-delay source coding{, hereinafter called zero-delay rate distortion,} is hard to establish and is, for example, not known for the case of general Gaussian sources subject to a mean squared error ($\mse$) distortion, whereas the non-causal {classical} rate distortion function ($\rdf$) is, in general, known. {To overcome the computational complexity of the zero-delay $\rdf$, there has been a turn in studying variants of classical $\rdf$ that are lower bounds to the zero-delay $\rdf$. One such variant is the so-called nonanticipative $\rdf$ ($\nrdf$) also found as nonanticipatory $\epsilon-$entropy and sequential $\rdf$ in the literature. {The $\nrdf$ was first introduced in \cite{gorbunov-pinsker1973} and extensively  analysed for Gauss-Markov sources  in \cite{gorbunov-pinsker1974}.  In \cite[Theorem 5]{gorbunov-pinsker1974}, the authors derived a partial characterization (because certain parameters are not found) for $\nrdf$ for time-varying vector-valued Gauss-Markov sources with square-error distortion function,  by providing a parametric realization of the test channel conditional distribution of the reproduction process, that is first-order Markov with respect to source symbols and depends only on the previous reproduction symbol. Moreover, in \cite[Examples 1, 2]{gorbunov-pinsker1974} the authors derive the complete characterization of the $\nrdf$, for time-varying and stationary scalar-valued Gaussian first-order autoregressive ($\ar(1)$) processes, with pointwise or per-letter mean squared-error ($\mse$) distortion {fidelity} and gave the expression of finite-time $\nrdf$ in terms of a reverse-waterfilling at each time instant and the corresponding expression in the asymptotic regime.} Tatikonda {\it et al.}  in \cite{tatikonda:2004} leverage the results of \cite[Examples 1, 2]{gorbunov-pinsker1974} and applied the asymptotic $\nrdf$ to compute the Gaussian $\nrdf$ for time-invariant scalar-valued Gaussian $\ar(1)$ sources with an asymptotic $\mse$ distortion constraint. {In addition, they gave a parametric expression of the $\nrdf$ for time-invariant vector-valued Gauss-Markov
sources, that is described by a reverse-waterfilling algorithm which is unfortunately suboptimal (the suboptimality is demonstrated via a counterexample in \cite{stavrou-tanaka-tatikonda:2018}).} {It should be noted that in \cite{tatikonda:2004} and also \cite{stavrou-tanaka-tatikonda:2018} the authors do not attempt to identify the parameters of the realization given in \cite[Theorem 5]{gorbunov-pinsker1974}. {Derpich and ${\O}$stergaard in \cite{derpich:2012} considered  variants of $\nrdf$ for stationary scalar-valued Gaussian autoregressive models of any order with pointwise $\mse$ distortion {fidelity} and computed the asymptotic expression of the Gaussian $\nrdf$ for stationary scalar-valued Gaussian $\ar (1)$ sources which was first derived in \cite[Equation (1.43)]{gorbunov-pinsker1974} using alternative methods.   Tanaka {\it et al.} in \cite{tanaka:2017} revisited the finite-time $\nrdf$ for vector-valued Gauss-Markov sources  with pointwise $\mse$ distortion {fidelity} following the line of work of \cite{tatikonda:2004} and showed that the resulting optimization problem is semidefinite representable, thus, it can be solved numerically. However, none of the previously discussed works, i.e., \cite{tatikonda:2004,derpich:2012,tanaka:2017,stavrou-tanaka-tatikonda:2018} provide a realization of the test channel conditional distribution that achieves the $\nrdf$ or attempt to identify the parameters in the realization given in \cite[Theorem 5]{gorbunov-pinsker1974}.} Stavrou {\it et al.} in \cite{stavrou-charalambous-charalambous2016} considered the $\nrdf$ of the time-varying vector-valued Gauss-Markov source under a total $\mse$ distortion {fidelity}, and gave a sub-optimal realization of the design coefficient in the parametric realization given in \cite[Theorem 5]{gorbunov-pinsker1974}. Further, in \cite[Theorem 2]{stavrou-charalambous-charalambous2016} the computation of the finite-time $\nrdf$ via  a dynamic reverse-waterfilling algorithm is not optimal (this is explained in \cite{stavrou-charalambous-charalambous:2018lcss}).  Recently, in  \cite{stavrou-charalambous-charalambous-loyka2018siam} the authors computed the finite-time $\nrdf$ for vector-valued Gauss-Markov sources subject to a total and per-letter $\mse$ distortion constraint, using convex optimization techniques and gave a parametric solution via a dynamic reverse-waterfilling algorithm, that identifies the parametric realization given in \cite[Theorem 5]{gorbunov-pinsker1974}. The results obtained in \cite{stavrou-charalambous-charalambous-loyka2018siam} did not consider the asymptotic regime.}}
\par {The signal processing approaches to source coding can roughly be classified into transform coding \cite{goyal:2001,vetterli:2001}, filterbanks~\cite{vaidyanathan:1993,gosse:1997}, and predictive coders~\cite{atal:1970,makhoul:1975,gersho:1991,vaidyanathan:1993}. A transform can be put on a matrix form, which is multiplied on the signal vector. Clearly, this operation is only causal if the matrix is lower triangular (when multiplied on the left hand side of the signal vector). Low delay filters have been considered in \cite{nayebi:1993}, and zero-delay filtering in \cite{charalambous-stavrou-ahmed2014,stavrou-charalambous-charalambous2016}.  Predictive coders usually operates directly on the time-domain signal, and can easily be made causal (and of zero-delay) by simply only making use of the current and past samples of the source signal. Recently, it has been shown that the causal $\rdf$ of a stationary colored scalar Gaussian process can be achieved by \emph{causal} prediction and noise-shaping \cite{derpich:2012}.\footnote{This result parallels that of \cite{zamir:2008}, where it was shown that the non-causal $\rdf$ of a  stationary colored scalar Gaussian process under $\mse$ can be achieved by (non-causal) prediction.}

\par {In this paper, we deal with zero-delay source coding of a vector-valued Gauss-Markov source expressed in state space form subject to a $\mse$ distortion constraint. We recall the complete characterization of the finite-time $\nrdf$ for time-varying Gauss-Markov sources subject to a total $\mse$ distortion developed for scalar-valued sources in \cite{stavrou-charalambous-charalambous:2018lcss} and for vector-valued sources in \cite{stavrou-charalambous-charalambous-loyka2018siam} to obtain the following results.
\begin{itemize}
\item[(1)] Sufficient conditions to ensure by construction existence of solution of the per unit time asymptotic limit of the finite-time Gaussian $\nrdf$. The asymptotic Gaussian $\nrdf$ provides a lower bound to the operational zero-delay vector-valued Gaussian $\rdf$.
\item[(2)] A new feedback realization scheme that corresponds to the optimal test channel of the asymptotic Gaussian $\nrdf$. This scheme is characterized by a resource allocation problem across the dimension of the source. 
\item[(3)] A coding scheme based on predictive coding  which is applied to this feedback realization scheme using scalar or vector quantization and joint entropy coding separately across every dimension of the vector-valued Gauss-Markov source. This scheme provides an achievable (upper) bound to the operational zero-delay vector-valued Gaussian $\rdf$.
\item[(4)] Several numerical examples that demonstrate our theoretical framework. These examples take into account both stable and unstable Gaussian sources.
\end{itemize}
In addition to the previous main results, we explain how our scheme can be generalized to vector-valued Gauss-Markov processes of any finite order.}\\
{The new feedback realization scheme has a Kalman filter in the feedback loop. The feedback loop serves two purposes; if the Gaussian source is unstable then the filter with the help of the feedback loop tracks it while the estimation error converges, and it removes most of the source redundancy along the temporal direction by means of closed-loop vector prediction. On the other hand, the feed-forward path transforms the residual (innovations) vector Gaussian source into a new vector source, which has independent spatial components, and thereby can be efficiently encoded by applying for example scalar quantization and joint entropy coding separately across each dimension of the vector.  Our construction makes use of simple building blocks such as non-singular joint diagonalizers (KLT matrices), diagonal scaling matrices, Kalman filters, and scalar (or lattice) quantization. It also demonstrates the resource allocation of the source signals depending on the data rate budget. This means that our scheme demonstrates which dimensions are active when the reverse-waterfilling kicks in. This issue and the complete machinery to obtain theoretical lower and upper bounds as well as the operational rates to the zero-delay Gaussian $\rdf$ is not demonstrated in the recent works of \cite{stavrou:2017,stavrou:2017itw}.}
\par {Our coding results demonstrate that when we use scalar quantization, the gap between the obtained lower and theoretical upper bounds is less than or equal to $0.254r + 1$ bits/vector where $r$ denotes the number of active dimensions of the vector-valued Gauss-Markov source. Moreover, at high rates our simulation experiments demonstrate that the gap between the lower bound and the operational rates mitigates to approximately $0.254{r}$ bits/vector. For vector quantization, we show that in the limit of asymptotically large vector dimensions, it is possible for the causal and zero-delay $\rdf$ to coincide with the Gaussian $\nrdf$.}
\par { It should be noted that our realization scheme can be paralleled to the work developed in \cite[Chapter 11]{matveev-savkin:2009} (see also the references therein) where various (possibly partially observable) source signals are communicated via an observer or controller over parallel Gaussian channels with spatially independent delays. Compared to that framework we investigate perfect prediction in the sense that we do not take into account issues like data dropouts or delays within the parallel channels or even conditions for stability of the estimator. Potentially, one can leverage our framework to investigate similar problems to \cite[Chapter 11]{matveev-savkin:2009}. }  

\par This paper is structured as follows. {In \S \ref{sec:problem_formulation} we characterize the zero-delay source coding problem for vector-valued Gauss-Markov sources subject to an asymptotic $\mse$ distortion constraint in terms of zero-delay Gaussian $\rdf$. In \S\ref{sec:lower_bounds_zero_delay_codes} we give known lower bounds to zero-delay Gaussian $\rdf$ using general Gaussian sources subject to a $\mse$ distortion whereas in \S\ref{subsec:specific_prob} we concentrate on the $\nrdf$ of vector-valued Gauss-Markov source. In \S\ref{sec:new_realization_scheme} we show existence of solution to the asymptotic Gaussian $\nrdf$ and we provide a new feedback realization scheme that corresponds to the asymptotic test-channel of this problem. \S\ref{sec:main_results} derives upper bounds to the zero-delay Gaussian $\rdf$ in terms of the Gaussian $\nrdf$ using scalar and vector quantization with memoryless entropy coding. In \S\ref{sect:examples} we demonstrate our theoretical framework via several numerical experiments. We draw conclusions and discuss future directions in \S\ref{sec:conclusions}.}

%
%
%
%
\paragraph*{\bf Notation}  $\mathbb{R}$  denotes the set of real numbers,  $\mathbb{Z}$  the set of integers,  $\mathbb{N}_0$ the set of natural numbers including zero, and $\mathbb{N}^n_0\triangleq\{0,\ldots,n\}, n \in \mathbb{N}_0$. {Let ${\cal X}$ be a finite dimensional Euclidean space, and ${\cal B}({\cal X})$ be the Borel $\sigma$-algebra on ${\cal X}$.} A random variable ($\rv$) defined on some probability space $(\Omega, {\cal F}, {\mathbb P})$ is a map $X: \Omega \longmapsto {\cal X}$. We denote a sequence of $\rvs$ by ${\bf x}_r^t \triangleq ({\bf x}_r, {\bf x}_{r+1}, \ldots,{\bf x}_t), (r, t) \in {\mathbb Z}\times {\mathbb Z}, t\geq r$, and their values by ${x}_r^t \in  {\cal X}_r^t \triangleq \times_{k=r}^t {\cal X}_k$, with ${\cal X}_k={\cal X}$, for simplicity. If $r=-\infty$ and $t=-1$, we use the notation ${\bf x}_{-\infty}^{-1}={\bf x}^{-1}$, and if $r=0$,  we use the notation ${\bf x}_0^t = {\bf x}^t$. The distribution of the $\rv$ ${\bf x}$ on ${\cal X}$ is denoted by ${\bf P}_{\bf x}\equiv{\bf P}(dx)$. The conditional distribution of ${\rv}$ ${\bf y}$ given ${\bf x}=x$ is denoted by ${\bf P}_{{\bf y}|{\bf x}}\equiv{\bf P}(dy|x)$. The transpose of a matrix or vector ${K}$ is denoted by ${K}{\T}$. The covariance of a random vector $K$ is denoted by $\Sigma_K$. {For a square matrix $K\in \mathbb{R}^{p\times p}$, we denote the diagonal by $\diag(\mu_{{K},i})$, where $\mu_{{K},i}$ denotes the $i$th eigenvalue of matrix $K$, its  determinant by $|K|$, its trace by $\trace\{K\}$, and its rank by $\rank(K)$. We denote by $K \succ 0$ (respectively, $K \succeq 0$) a symmetric positive-definite matrix (respectively, symmetric positive-semidefinite matrix). The statement $\Sigma_K\succeq\Sigma_{K^{\prime}}$ means that $\Sigma_K-\Sigma_{K^{\prime}}$ is positive semidefinite. We denote identity matrix by $I$. We denote by $(\cdot)\G$ any $\rv$ or a vector that is Gaussian distributed. We denote by $\mathbb{H}(\cdot)$ the discrete entropy and by $h(\cdot)$ the differential entropy. $\mathbb{D}(P||Q)$  denotes the relative entropy of probability distribution $P$ with respect to probability distribution $Q$. We denote by $\log\text{abs}(\cdot)$ the absolute value of a quantity in the logarithm.}

%
%
%
%
%
\section{Problem Statement}\label{sec:problem_formulation}

\par In this paper we consider the zero-delay source coding setting illustrated in Fig.~\ref{fig:zero_delay_system}. In this setting, the $\mathbb{R}^p$-valued Gaussian source is governed by the following discrete-time linear time-invariant Gaussian state-space model
\begin{align}
{\bf x}_{t+1}=A{\bf x}_{t}+B{\bf w}_t,~{\bf x}_0=\bar{x},~t\in\mathbb{N}_0,\label{state_space_model}
\end{align}
\noi where $A\in\mathbb{R}^{p\times{p}}$ and $B\in\mathbb{R}^{p\times{q}}$ are deterministic matrices, ${\bf x}_0\in\mathbb{R}^p\sim{\cal N}(0;\Sigma_{{\bf x}_0})$ is the initial state, ${\bf w}_t\in\mathbb{R}^q\sim{\cal N}(0;\Sigma_{\bf w})$, $\Sigma_{\bf w}=I$, is an $\IID$ Gaussian sequence, independent of ${\bf x}_0$.    
\par The system operates as follows. At every time step $t\in\mathbb{N}_0$, the {\it encoder} observes the vector source ${\bf x}^t$ and produces a single binary codeword ${\bf z}_t$ from a predefined set of codewords ${\cal Z}_t$ of at most countable number of codewords. Since the source is random, ${\bf z}_t$ and its length ${\bf l}_t$ (in bits) are random variables. Upon receiving ${\bf z}^t$, the {\it decoder} produces an estimate ${\bf y}_t$ of the source sample {${\bf x}_t$, under the assumption that ${\bf y}^{t-1}$ is already reproduced}. We assume that both the encoder and decoder process information without delay.
\begin{figure}[htp]
\centering
\includegraphics[width=\columnwidth]{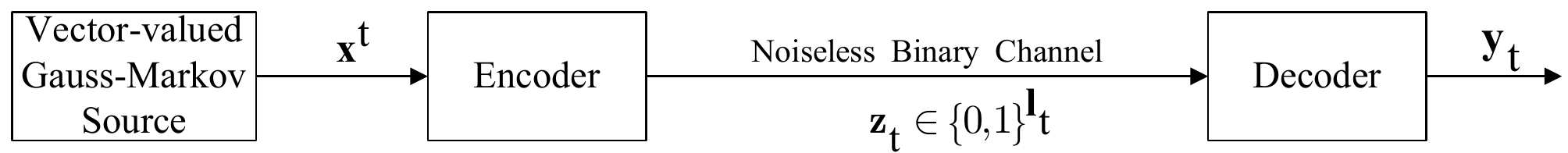}
\caption{A zero-delay source coding scenario using variable-length binary codewords.}\label{fig:zero_delay_system}
\end{figure}

\par The analysis of the noiseless digital channel is restricted to the class of instantaneous variable-length binary codes ${\bf z}_t$. The countable set of all codewords (codebook) ${\cal Z}_t$ is time-varying to allow the binary representation ${\bf z}_t$ to be an arbitrarily long sequence. 

\paragraph*{Zero-delay source coding} Formally, the zero-delay source coding problem of Fig. \ref{fig:zero_delay_system} can be explained as follows. Define the input and output alphabet of the noiseless digital channel by ${\cal M}=\{1,2,\ldots, {M}\}$ where ${M}=\max_t|{\cal Z}_t|$ (possibly infinite). The elements in ${\cal M}$ enumerate the codewords of ${\cal Z}_t$. The encoder is specified by the sequence of measurable functions $\{f_t:~t\in\mathbb{N}_0\}$ with $f_t:{\cal M}^{t-1}\times{\cal X}^{t}\rightarrow{\cal M}$. At time $t\in\mathbb{N}_0$, the output of the encoder is a message ${z}_t=f_t({z}^{t-1},{x}^t)$ with $z_0=f_0(x_0)$ which is transmitted through a noiseless channel to the decoder. The decoder is specified by the sequence of measurable functions $\{g_t:~t\in\mathbb{N}_0\}$ with $g_t:{\cal M}^{t}\rightarrow{\cal Y}_t$. For each $t\in\mathbb{N}_0$, the decoder generates ${y}_t=g_t(z^t)$ with ${y}_0=g_0(z_0)$ assuming $y^{t-1}$ is already generated.

\paragraph*{Asymptotic $\mse$ distortion constraint} The design in Fig. \ref{fig:zero_delay_system} is required to yield an asymptotic average distortion $\limsup_{n\longrightarrow\infty}\frac{1}{n+1}\mathbb{E}\{d({\bf x}^n,{\bf y}^n)\}\leq{D}$, where $D>0$ is the pre-specified distortion level, $d({\bf x}^n,{\bf y}^n)\triangleq\sum_{t=0}^n||{\bf x}_t-{\bf y}_t||_{2}^2$. For the asymptotic regime, the objective is to minimize the expected average codeword length, i.e., the total number of bits received by the decoder at the time it reproduces $\{{\bf y}_t:~t\in\mathbb{N}_0\}$, denoted by $\limsup_{n\longrightarrow\infty}\frac{1}{n+1}\sum_{t=0}^n\mathbb{E}({\bf l}_t)$, over all measurable zero-delay encoding and decoding functions $\{(f_t,g_t):~t\in\mathbb{N}_0\}$. {We denote by ${\bf L}_n\triangleq\sum_{t=0}^n{\bf l}_t$ the accumulated number of bits received by the decoder at the time it reproduces the estimate ${\bf y}_n$.}
{
\begin{problem}(Zero-delay vector-valued Gaussian $\rdf$)\label{problem1}{\ \\}
The previous design requirements are formally cast by the following optimization problem:
\begin{align}
R^{\op}_{\zd}(D)&\triangleq\inf_{\substack{z_t=f_t(z^{t-1},x^t),~t\in\mathbb{N}_0\\~y_t=g_t(z^t)}}\limsup_{n\longrightarrow\infty}\frac{1}{n+1}\mathbb{E}({\bf L}_n).\label{def:operational_zero_delay}\\
&\text{s. t.}~~\limsup_{n\longrightarrow\infty}\frac{1}{n+1}\mathbb{E}\left\{\sum_{t=0}^n||{\bf x}_t-{\bf y}_t||_2^2\right\}\leq{D}\nonumber
\end{align}
We refer to  \eqref{def:operational_zero_delay} as the {\it operational zero-delay Gaussian $\rdf$}.
\end{problem}
}
\par Unfortunately, the solution of Problem \ref{problem1} is very hard to find because it is defined over all operational codes. For this reason, in the next section we introduce a lower bound to this problem which is defined based on information theoretic quantities.

%
%
%

\section{Lower Bounds on {Problem \ref{problem1}}}\label{sec:lower_bounds_zero_delay_codes}

\par In this section, we present {known lower bounds to the operational zero-delay Gaussian $\rdf$ of Problem \ref{problem1}. To do so, first we formally introduce the definitions of causal source coder, causal optimal performance theoretically attainable ($\opta$) function and $\nrdf$ {(and its relation to \cite{gorbunov-pinsker1973})} assuming  general Gaussian sources (although the same bounds apply to non-Gaussian sources) subject to an asymptotic $\mse$ distortion constraint. Then, we concentrate on the specific lower bound to Problem \ref{problem1} investigated in this paper.}

\paragraph*{Causal $\opta$ function} {In general source coding, a source encoder-decoder ($\ed$) pair encodes a source $\{{\bf x}_t:~t\in\mathbb{N}_0\}$ distributed according to ${\bf P}_{{\bf x}^\infty}\equiv{\bf P}(dx^\infty)\triangleq\otimes_{t=0}^\infty{\bf P}(dx_t|x^{t-1})$ with ${\bf P}(dx_0|x^{-1})={\bf P}(dx_0)$, into binary representations from which the estimate $\{{\bf y}_t:~t\in\mathbb{N}_0\}$ of $\{{\bf x}_t:~t\in\mathbb{N}_0\}$ is generated. The end-to-end effect of any $\ed$ pair is captured by a sequence of reproduction functions $\{{f}_t:~t\in\mathbb{N}_0\}$ such that
\begin{align*}
y_t=f_t(x^{\infty}),~t\in\mathbb{N}_0.
\end{align*}
Following \cite{neuhoff:1982} the $\ed$ pair is called causal if the following definition holds.
\begin{definition}({Causal reproduction coder}){\ \\}\label{def:causal_reproduction_coder}
A sequence of reproduction coders $\{f_t:~t\in\mathbb{N}_0\}$, is called causal if for each $t$,
\begin{equation}
f_t(x^\infty)=f_t(\tilde{x}^\infty)~~\text{whenever}~~x^t={\tilde x}^t,~t\in\mathbb{N}_0.\label{causal_reproduction_coder:eq.1}
\end{equation}
A causal source code is induced by a causal reproduction coder.
\end{definition}
}
\par {Next, we give the definition of the causal $\opta$ function \cite{neuhoff:1982}.
\begin{definition}(Causal $\opta$ function)\label{definition:causal_opta}{\ \\} 
For $D>0$, the minimum rate achievable by any causal source code  with distortion not exceeding $D$ is given by the causal $\opta$ function defined by 
\begin{align}
R_{c}^{\op}(D)&\triangleq\inf_{\substack{y_t:~y_t=f_t(x^t),~{t}\in\mathbb{N}_0\\~\{f_t:~t\in\mathbb{N}_0\}~\text{is~causal}}}\limsup_{n\longrightarrow\infty}\frac{1}{n+1}{\mathbb{E}}\left({\bf L}_n\right).\label{causal_opta_function}\\
&\text{s. t.}~~\limsup_{n\longrightarrow\infty}\frac{1}{n+1}\mathbb{E}\left\{\sum_{t=0}^n||{\bf x}_t-{\bf y}_t||_2^2\right\}\leq{D}\nonumber
\end{align}
\end{definition}}
\vspace{-0.15cm}
\paragraph*{$\nrdf$} We consider {a source} that randomly generates sequences ${\bf x}_t=x_t \in {\cal X}_t, t\in\mathbb{N}_0^n$, that we wish to reproduce or reconstruct  by  ${\bf y}_t=y_t \in {\cal Y}_t,~t\in\mathbb{N}_0^n$, subject to a $\mse$ distortion constraint defined by $d_{0,n}(x^n, y^n)\triangleq\sum_{t=0}^n ||{x}_t-{y}_t||_2^2$.\\
\noi{\bf Source.} The source distribution satisfies conditional independence  
\begin{align}
{\bf P}_{{\bf x}_t|{\bf x}^{t-1}, {\bf y}^{t-1}}\triangleq {\bf P}(dx_t|x^{t-1}), \ \ t\in\mathbb{N}_0^n. \label{source;distributions}
\end{align}
Since no initial information is assumed, the distribution at $t=0$  is ${\bf P}(dx_0)$. Also, by Bayes' rule we obtain
${\bf P}_{{\bf x}^n}\equiv{\bf P}(dx^n)\triangleq\otimes_{t=0}^n{\bf P}(dx_t|x^{t-1})$. Note that for model (\ref{state_space_model}), (\ref{source;distributions}) implies that ${\bf w}_t$ is independent of the past reproductions ${\bf y}^{t-1}$.

\noi{\bf Reproduction or ``test-channel''.} {Suppose the reproduction ${\bf y}^n=y^n, n \in {\mathbb N}_0$ of $x^n $ is randomly generated, according to the collection of conditional distributions, known as test-channels, by 
\begin{align}
{\bf P}_{{\bf y}_t|{\bf y}^{t-1}, {\bf x}^{t}}\triangleq {\bf P}(dy_t|y^{t-1},x^t), \ \ t\in\mathbb{N}_0^n. \label{reproduction:distributions}
\end{align}
{At $t=0$, no initial state information is assumed, hence  ${\bf P}(dy_0|y^{-1},x^0)={\bf P}(dy_0|x_0)$.} From  \cite[Remark 1]{charalambous-stavrou2016}, we know that the conditional distributions ${\bf P}(dy_t|y^{t-1},x^t)$ in  \eqref{reproduction:distributions}, uniquely define the family of conditional distributions on ${\cal Y}^{n}$ parametrized by $x^n \in {\cal X}^{n}$, given by
\begin{align}
\overrightarrow{\bf Q}(dy^n|x^n)\triangleq\otimes_{t=0}^n{\bf P}(dy_t|y^{t-1},x^t),\nonumber
\end{align}
and vice-versa. By \eqref{source;distributions} and \eqref{reproduction:distributions}, we can uniquely define the joint distribution {of $\{({\bf x}^n,{\bf y}^n):~t\in\mathbb{N}_0^n\}$ by
\begin{align}
{\bf P}_{{\bf x}^n, {\bf y}^n}^{\overrightarrow{\bf Q}}(dx^n,dy^n)={\bf P}(dx^n)\otimes \overrightarrow{\bf Q}(dy^n|x^n).\label{joint_distribution}
\end{align}
In addition, from \eqref{joint_distribution}, we can uniquely define the ${\cal Y}^{n}-$marginal distribution by 
\begin{align}
{\bf P}_{{\bf y}^n}^{\overrightarrow{\bf Q}}(dy^n)\triangleq \int_{{\cal X}^n}{\bf P}(dx^n)\otimes \overrightarrow{\bf Q}(dy^n|x^n),\nonumber
\end{align}
}
and the conditional distributions ${\bf P}_{{\bf y}_t|{\bf y}^{t-1}}^{\overrightarrow{\bf Q}}$,~$t\in\mathbb{N}_0^n$.}
\par {Given the above construction of distributions, {we introduce the information measure using relative entropy} as follows: 
\begin{subequations}
\begin{align}
&I^{\overrightarrow{\bf Q}}({\bf x}^n;{\bf y}^n) \stackrel{(a)}\triangleq {\mathbb D}({\bf P}_{{\bf x}^n,{\bf y}^n}^{\overrightarrow{\bf Q}}|| {\bf P}_{{\bf y}^n}^{\overrightarrow{\bf Q}}\times {\bf P}_{{\bf x}^n}) \in [0,\infty] \nonumber \\
&\stackrel{(b)}{=}\int_{{\cal X}^{n} \times {\cal Y}^{n}} \log \left( \frac{\overrightarrow{\bf Q}(\cdot|x^n)}{{\bf P}_{{\bf y}^n}^{\overrightarrow{\bf Q}}(\cdot)}(y^n)\right){\bf P}_{{\bf x}^n, {\bf y}^n}^{\overrightarrow{\bf Q}}(dx^n,dy^n)\label{eqdi5}
\end{align}
\begin{align}
&\stackrel{(c)}{=} \sum_{t=0}^n \mathbb{E}\left\{\log \left( \frac{{\bf P}(\cdot|{\bf y}^{t-1},{\bf x}^t)}{{\bf P}_{{\bf y_t|y^{t-1}}}^{\overrightarrow{\bf Q}}(\cdot|{\bf y}^{t-1})}({\bf y}_t)\right)\right\}\\
&\stackrel{(d)}{=}\sum_{t=0}^n I({\bf x}^t; {\bf y}_t|{\bf y}^{t-1}),  \label{eq_s}
\end{align} 
\end{subequations}
where $(a)$ follows by definition of relative entropy; 
$(b)$ is due to the Radon-Nikodym derivative theorem \cite[Appendix A.C]{charalambous-stavrou2016}; $(c)$ is due to chain rule of relative entropy;  $(d)$ follows by definition. Often, we use either  \eqref{eqdi5} or \eqref{eq_s}. It should be remarked that since \eqref{source;distributions}   and \eqref{reproduction:distributions} hold, then  (\ref{eq_s}) is a special case of directed information from ${\bf x}^n$ to ${\bf y}^n$ (see \cite{massey1990}).}

\par {Next, we formally define the Gaussian $\nrdf$ subject to a $\mse$ distortion. Recall that the following definition was announced in \cite{gorbunov-pinsker1973} for general distortion functions (including $\mse$ distortions) and \cite{gorbunov-pinsker1974} for pointwise $\mse$ distortion functions.  
\begin{definition}(Asymptotic Gaussian {$\nrdf$} subject to a $\mse$ distortion)\label{def:nonanticipative_rdf}{\ \\}
For the fixed Gaussian source of \eqref{source;distributions}, and a $\mse$ distortion the following holds.\\
(1) The finite-time $\nrdf$ is defined by 
\begin{eqnarray}
{R}^{\na}_{0,n}(D) \triangleq \inf_{\substack{{\bf P}(dy_t|y^{t-1},x^t):~t\in\mathbb{N}_0^n\\~\frac{1}{n+1}\mathbb{E}\left\{\sum_{t=0}^n||{\bf x}_t-{\bf y}_t||_2^2\right\}\leq{D}}}\frac{1}{n+1} I^{\overrightarrow{\bf Q}}({\bf x}^n;{\bf y}^n),\label{finite_time_nrdf}
\end{eqnarray}
{assuming the infimum is achieved in the set.}\\
(2) The asymptotic limit of \eqref{finite_time_nrdf} is defined by 
\begin{align}
{R}^{\na}&(D)=\lim_{n\longrightarrow\infty}{R}^{\na}_{0,n}(D),\label{infinite_time_nrdf}
\end{align}
assuming the infimum is achieved in the set and the limit exists and it is finite.
\end{definition}
}
\par {If one interchanges $\liminf$ to $\inf\lim$ in \eqref{infinite_time_nrdf}, then an upper bound to $R^{\na}(D)$ is obtained, defined as follows: 
\begin{align}
\widehat{R}^{\na}(D)&\triangleq\inf_{\overrightarrow{\bf Q}(dy^\infty|x^\infty)} \lim_{n\longrightarrow\infty}  \frac{1}{n+1} I^{\overrightarrow{\bf Q}}    ({\bf x}^n;{\bf y}^n)\label{infinite_time_nrdf_upper_bound} \\
&\text{s. t}\lim_{n\longrightarrow\infty} \frac{1}{n+1}\mathbb{E}\left\{\sum_{t=0}^n||{\bf x}_t-{\bf y}_t||_2^2 \right\} \leq D \nonumber
\end{align}
where $\overrightarrow{\bf Q}(dy^\infty|x^\infty)$ denotes the sequence of conditional probability distributions $\{{\bf P}(dy_t|y^{t-1},x^t):~t\in\mathbb{N}_0\}$.
\par {Next, we discuss some properties of the $\nrdf$ that can be extracted from different references.  First, it can be shown that the  optimization problem  \eqref{finite_time_nrdf}, in contrast to the one of \eqref{def:operational_zero_delay},  is convex with respect to the test channel,  for $D \in[D_{\min}, D_{\max}]\subseteq[0, \infty]$. Moreover, under mild conditions (given in \cite[Theorem 15]{charalambous-stavrou2016}), when the source is not necessarily Gaussian, the infimum is achieved and the $\nrdf$ is finite. By the structural properties of the test channel derived in { \cite[Theorem 1]{stavrou-charalambous-charalambous2016},} if the source is first-order Markov, i.e., with distribution ${\bf P}(dx_t|x_{t-1}), t\in\mathbb{N}_0^n$, the test channel distribution is of the form ${\bf P}(dy_t| y^{t-1},x_t),~ t\in\mathbb{N}_0^n$. Finally, combining this structural result, with  \cite[Theorem 1.8.6]{ihara1993}, it can be shown that if ${\bf x}^n$ is  Gaussian  then a jointly Gaussian  process $\{({\bf x}_t,{\bf y}_t):~t\in\mathbb{N}_0\}$ achieves a smaller value of the $\nrdf$, and if ${\bf x}^n$ is Gaussian and Markov, then the infimum in the $\nrdf$ can be restricted to test channel distributions which are Gaussian, of the form  ${\bf P}^{\gp}(dy_t|y_{t-1},x_t)$,  with  linear mean in $(x_t, y_{t-1})$ and conditional covariance which is non-random,~$t\in\mathbb{N}_0^n$. The above results are also derived in \cite[Theorem 5]{gorbunov-pinsker1974} for pointwise $\mse$ distortion constraint.}
\par {In view of the above results, the following hold.
\begin{problem}(A lower bound on Problem \ref{problem1})\label{problem2}{\ \\}
Consider the vector-valued Gaussian source model in \eqref{state_space_model}. Then, the finite-time Gaussian $\nrdf$ is characterized by the expression
\begin{align}
{R}^{\na}_{\text{GM},0,n}(D) \triangleq & \inf_{\substack{{\bf P}^{\gp}(dy_t|y_{t-1},x_t):~t\in\mathbb{N}_0^n\\\frac{1}{n+1}\mathbb{E}\left\{\sum_{t=0}^n||{\bf x}_t-{\bf y}_t||_2^2\right\}\leq{D}}}\frac{1}{n+1}I^{\overrightarrow{\bf Q}}({\bf x}^n;{\bf y}^n).\label{prob_stat:eq.2}
\end{align}
provided the infimum is achieved in the set.\\
The asymptotic limit of \eqref{prob_stat:eq.2} is defined as:
\begin{align}
{R}_{\text{GM}}^{\na}(D) &\triangleq \lim_{n\longrightarrow\infty}{R}^{\na}_{\text{GM},0,n}(D),\label{infinite:eq.2}
\end{align}
provided the infimum is achieved in the set and the limit exists and it is finite.
If the source model of \eqref{state_space_model} is stationary (or asymptotically stationary) then $R_{\text{GM}}^{\na}(D)=\widehat{R}_{\text{GM}}^{\na}(D)$ (see \cite[Theorem 4]{gorbunov-pinsker1974}), where $\widehat{R}_{\text{GM}}^{\na}(D)$ is defined as in \eqref{infinite_time_nrdf_upper_bound} but $\overrightarrow{\bf Q}^{\gp}(dy^\infty|x^\infty)$ denotes the sequence of conditional probability distributions $\{{\bf P}^{\gp}(dy_t|y_{t-1},x_t):~t\in\mathbb{N}_0\}$.
\end{problem}
}
\par{The next theorem, provides a series of inequalities that connect all previously discussed information measures in the context of Gaussian sources with asymptotic $\mse$ distortions.
\begin{theorem}(Inequalities)\label{theorem:bounds}{\ \\}
\noi For Gaussian sources with asymptotic $\mse$ distortion constraint, the following bounds hold.
\begin{align}
R(D)\leq{R}^{\na}(D)\leq\widehat{R}^{\na}(D)\leq{R}^{\op}_c(D)\leq{R}_{\zd}^{\op}(D).\label{bounds}
\end{align}  
where $R(D)$ denotes the classical $\rdf$ \cite{berger:1971}.
\end{theorem}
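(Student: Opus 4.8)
The plan is to establish the chain \eqref{bounds} one link at a time, reading left to right, and to lean throughout on two recurring facts. The first is that, because the source \eqref{source;distributions} evolves with no feedback from the reproduction, the conditional independence ${\bf P}(dx_t|x^{t-1},y^{t-1})={\bf P}(dx_t|x^{t-1})$ makes the reverse directed information vanish, $\sum_{t=0}^n I({\bf y}^{t-1};{\bf x}_t|{\bf x}^{t-1})=0$, so that for any \emph{causal} test channel the directed information $I^{\overrightarrow{\bf Q}}({\bf x}^n;{\bf y}^n)$ coincides with the ordinary mutual information $I({\bf x}^n;{\bf y}^n)$ (this is exactly the remark made after \eqref{eq_s}). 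The second is that the admissible classes are nested, non-causal channels $\supseteq$ causal channels $\supseteq$ causal codes $\supseteq$ zero-delay codes, so an infimum taken over a smaller class can only increase.

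For $R(D)\leq R^{\na}(D)$, recall that the finite-horizon classical $\rdf$ $R_{0,n}(D)$ is the infimum of $\frac{1}{n+1}I({\bf x}^n;{\bf y}^n)$ over all (possibly non-causal) test channels meeting the distortion constraint, whereas $R^{\na}_{0,n}(D)$ in \eqref{finite_time_nrdf} is the infimum of $\frac{1}{n+1}I^{\overrightarrow{\bf Q}}({\bf x}^n;{\bf y}^n)$ over causal channels. Using the first fact to replace $I^{\overrightarrow{\bf Q}}$ by $I$ and the nesting to enlarge the feasible set, I obtain $R_{0,n}(D)\leq R^{\na}_{0,n}(D)$ for every $n$, and passing to the limit gives $R(D)\leq R^{\na}(D)$. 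The second link $R^{\na}(D)\leq\widehat{R}^{\na}(D)$ is the $\lim\inf\leq\inf\lim$ interchange already flagged after \eqref{infinite_time_nrdf_upper_bound}: for any fixed infinite-horizon $\overrightarrow{\bf Q}$ that is asymptotically feasible, its truncation to horizon $n$ is feasible for $R^{\na}_{0,n}(D+\epsilon)$ once $n$ is large, so $R^{\na}_{0,n}(D+\epsilon)\leq\frac{1}{n+1}I^{\overrightarrow{\bf Q}}_{0,n}$; sending $n\to\infty$, then taking the infimum over $\overrightarrow{\bf Q}$, and finally letting $\epsilon\downarrow 0$ using monotonicity and continuity of $R^{\na}_{0,n}(\cdot)$ in $D$ yields the claim.

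For $\widehat{R}^{\na}(D)\leq R^{\op}_c(D)$ I would argue operationally. Any causal code produces messages ${\bf z}^n$ that are a deterministic causal function of ${\bf x}^n$, so by the Kraft inequality for instantaneous codes $\mathbb{E}({\bf L}_n)\geq\mathbb{H}({\bf z}^n)=I({\bf x}^n;{\bf z}^n)\geq I({\bf x}^n;{\bf y}^n)$, the last step by the data-processing inequality through ${\bf y}^n=g(z^n)$. Causality of the code in the sense of Definition \ref{def:causal_reproduction_coder} makes the induced test channel causal, so the first fact turns $I({\bf x}^n;{\bf y}^n)$ into $I^{\overrightarrow{\bf Q}}({\bf x}^n;{\bf y}^n)$; since the code-induced channels form a subclass of all causal channels, the infimum over codes of $\frac{1}{n+1}$ times this quantity dominates $\widehat{R}^{\na}(D)$. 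The final link $R^{\op}_c(D)\leq R^{\op}_{\zd}(D)$ is pure nesting: unrolling $z_t=f_t(z^{t-1},x^t)$ and $y_t=g_t(z^t)$ shows every zero-delay reproduction obeys $y_t=\tilde f_t(x^t)$, hence is causal per Definition \ref{def:causal_reproduction_coder}, so zero-delay codes are a subset of causal codes minimizing the same rate under the same distortion. I expect the third inequality to be the main obstacle, as it alone must bridge the operational codeword-length quantities and the information-theoretic directed information; in particular the entropy lower bound must be justified for the countable, time-varying codebooks, and the $\limsup$ appearing in the operational definitions \eqref{causal_opta_function} and \eqref{def:operational_zero_delay} must be reconciled with the $\lim$ used in $\widehat{R}^{\na}(D)$.
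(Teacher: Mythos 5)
The paper does not actually prove this theorem internally: it cites \cite[eq.~(11)]{derpich:2012} for the first three inequalities (attributing the first two also to \cite{gorbunov-pinsker1973}) and settles the last by the inclusion of zero-delay codes in causal codes. Your reconstruction is therefore a genuinely more self-contained route, and the two pillars you lean on are the right ones: for a source with no feedback from the reproduction the reverse directed information vanishes, so $I^{\overrightarrow{\bf Q}}({\bf x}^n;{\bf y}^n)=I({\bf x}^n;{\bf y}^n)$ on causal channels, and the admissible classes are nested. Links one, two and four are sound as you state them; in particular your $\epsilon$-relaxation of the distortion level plus continuity of $R^{\na}_{0,n}(\cdot)$ is the correct way to handle the fact that truncating an asymptotically feasible channel need not satisfy the finite-horizon constraint exactly. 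The entropy lower bound you worry about is also fine: conditioning on ${\bf z}^{t-1}$ (which fixes the codebook in use) and applying the Kraft inequality codeword by codeword gives $\mathbb{E}({\bf L}_n)\geq\sum_t\mathbb{H}({\bf z}_t|{\bf z}^{t-1})=\mathbb{H}({\bf z}^n)$ even for countable, time-varying codebooks.

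The one genuine gap is the one you half-flag yourself, in the third inequality. Your chain $\mathbb{E}({\bf L}_n)\geq\mathbb{H}({\bf z}^n)=I({\bf x}^n;{\bf z}^n)\geq I({\bf x}^n;{\bf y}^n)=I^{\overrightarrow{\bf Q}}({\bf x}^n;{\bf y}^n)$ lower-bounds the per-letter rate of a causal code by $R^{\na}_{0,n}(D_n)$ for each $n$, hence after taking $\limsup$ by $R^{\na}(D)$ --- not by $\widehat{R}^{\na}(D)$. To dominate $\widehat{R}^{\na}(D)$ you must exhibit the code-induced channel as an admissible point of the $\inf\lim$ in \eqref{infinite_time_nrdf_upper_bound}, which requires that $\lim_{n}\frac{1}{n+1}I^{\overrightarrow{\bf Q}}({\bf x}^n;{\bf y}^n)$ and the limiting average distortion actually exist for that channel, whereas the operational definitions only guarantee $\limsup$'s. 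Class nesting alone does not deliver this; one needs an existence-of-limits argument (stationarity, subadditivity, or a time-sharing/blocking construction), which is exactly the step supplied in \cite{derpich:2012} and left open in your sketch. As written, your argument rigorously establishes $R(D)\leq R^{\na}(D)\leq\widehat{R}^{\na}(D)$ and $R^{\na}(D)\leq R_c^{\op}(D)\leq R_{\zd}^{\op}(D)$, but not the middle comparison $\widehat{R}^{\na}(D)\leq R_c^{\op}(D)$ of \eqref{bounds}.
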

\begin{proof}
The bounds  are derived in \cite[eq. (11)]{derpich:2012} (the first two inequalities are also derived in  \cite{gorbunov-pinsker1973}). The last inequality follows by definition of the operational zero-delay Gaussian $\rdf$ and causal $\opta$ function.
\end{proof}
In the next remark, we state a bound on $R^{\na}_{\text{GM}}(D)$ for unstable Gauss-Markov sources and asymptotic $\mse$ distortion.
\begin{remark}(Bound on unstable $\mathbb{R}^p$-valued Gauss-Markov sources)\label{remark:1}{\ \\}
Consider the time-invariant vector-valued Gauss-Markov source of \eqref{state_space_model} where $A$ has eigenvalues with magnitude greater than one and the asymptotic $\mse$ distortion. {Then from  \cite{tatikonda-mitter2004}} 
\begin{align}
{R}_{\text{GM}}^{\na}(D)\geq\sum_{\mu_{A,i}>1}\log|\mu_{A,i}|.\label{unstable_eig}
\end{align}
\end{remark}
}
\subsection{Characterization of Problem \ref{problem2} via filtering and Markov realization}\label{subsec:specific_prob}

\par {The what follows, we leverage the Markov realization of the optimal test-channel that corresponds to Problem \ref{problem2}, \eqref{prob_stat:eq.2} {to provide} the complete characterization of Problem \ref{problem2}. We note that the following two results are derived in \cite{stavrou-charalambous-charalambous-loyka2018siam} but  we provide them herein for completeness.}
\par {The first result serves as an intermediate step towards the complete characterization of Problem \ref{problem2} and is a simple extension of the result derived for the scalar case in \cite[Lemma 1]{stavrou-charalambous-charalambous:2018lcss} hence we omit its proof.}
\begin{lemma}(Realization of $\{{\bf P}^*(dy_t|y^{t-1},x_t):~t\in\mathbb{N}_0^n\}$){\  \\}
\label{lemma:conditionally_gaussian}
Consider the class of test channels $\{{\bf P}^*(dy_t|y^{t-1},x_t):~t\in\mathbb{N}_0^n\}$. Then, the following statements hold.\\
(1) Any candidate of $\{{\bf P}^*(dy_t|y^{t-1},x_t):~t\in\mathbb{N}^n_0\}$ is realized by the recursion 
\begin{align}
{\bf y}_t=& H_t\left({\bf x}_t -\widehat{{\bf x}}_{t|t-1}\right)+ \widehat{{\bf x}}_{t|t-1}+ {\bf v}_t, \ \ t\in\mathbb{N}^n_0 \label{CG_1}
\end{align}
where $\widehat{{\bf x}}_{t|t-1}\triangleq\mathbb{E}\{{\bf x}_t|{\bf y}^{t-1}\}$,~$\{{\bf v}_t\in\mathbb{R}^p\sim{\cal N}(0; \Sigma_{{\bf v}_t}): ~t\in\mathbb{N}^n_0\}$ is an independent Gaussian process independent of $\{{\bf w}_t: ~t \in {\mathbb N}_0^{n-1}\}$ and ${\bf x}_0$, and  $\{H_t\in\mathbb{R}^{p\times{p}}:~~t\in\mathbb{N}^n_0\}$ are time-varying deterministic matrices. \\
Moreover, the innovations process $\{\tilde{\bf k}_t\in\mathbb{R}^p:~t\in\mathbb{N}^n_0\}$ of (\ref{CG_1}) is the orthogonal process defined by    
\begin{align}
\tilde{\bf k}_t&\triangleq {\bf y}_t-\mathbb{E}\left\{{\bf y}_t|{\bf y}^{t-1}\right\} \nonumber \\
&={\bf y}_t-\widehat{{\bf x}}_{t|t-1} =H_t \left({\bf x}_t -\widehat{{\bf x}}_{t|t-1} \right)+ {\bf v}_{t},  \label{inn_2} 
 \end{align}
where  $\tilde{\bf k}_t\sim{\cal N}(0;\Sigma_{\tilde{\bf k}_t})$,~$\Sigma_{\tilde{\bf k}_t}=H_t\Sigma_{t|t-1}H_t\T+\Sigma_{{\bf v}_t}$ and~$\Sigma_{t|t-1}\triangleq\mathbb{E}\left\{({\bf x}_t-\widehat{\bf x}_{t|t-1})({\bf x}_t-\widehat{\bf x}_{t|t-1})\T|{\bf y}^{t-1}\right\}$. \\
(2) Let $\widehat{\bf x}_{t|t}\triangleq\mathbb{E}\{{\bf x}_t|{\bf y}^{t}\}$ and $\Sigma_{t|t}\triangleq\mathbb{E}\left\{({\bf x}_t-\widehat{\bf x}_{t|t})({\bf x}_t-\widehat{\bf x}_{t|t})\T|{\bf y}^{t}\right\}$. Then, $\{\widehat{\bf x}_{t|t-1},~\Sigma_{t|t-1}:  t\in\mathbb{N}^n_0\}$ satisfy the following vector-valued {equations}:
\begin{subequations}
\label{kalman_filter}
\begin{align}
~&\widehat{\bf x}_{t|t-1}=A\widehat{\bf x}_{t-1|t-1},\label{kalman:1}\\
&\Sigma_{t|t-1}={A}\Sigma_{t-1|t-1}{A}\T+BB\T,\label{kalman:2}\\
~&\widehat{\bf x}_{t|t}=\widehat{\bf x}_{t|t-1}+{N}_t\tilde{\bf k}_t, \label{kalman:3} \\
&{N}_t=\Sigma_{t|t-1}H_t\T\Sigma_{\tilde{\bf k}_t}^{-1} ~\mbox{(Kalman ~Gain)},  \label{kalman:3a}  \\
&\Sigma_{t|t}=\Sigma_{t|t-1}-\Sigma_{t|t-1}H_t\T\Sigma_{\tilde{\bf k}_t}^{-1}H_t\Sigma_{t|t-1},\label{kalman:4}
\end{align}
\end{subequations}
where $\Sigma_{t|t}=\Sigma_{t|t}\T\succeq{0}$ and $\Sigma_{t|t-1}=\Sigma_{t|t-1}\T\succeq{0}$.\\
(3) $R^{\na}_{\text{GM},0,n}(D)$ is given by
\begin{subequations}
\label{initial:optimization}
\begin{align*}
R^{\na}_{\text{GM},0,n}(D)&= \inf_{H_t\succeq{0},~\Sigma_{{\bf v}_t}\succeq{0},~t\in\mathbb{N}_0^n} \frac{1}{2}\frac{1}{n+1}\sum_{t=0}^n\left[\log\frac{|\Sigma_{t|t-1}|}{|\Sigma_{t|t}|}\right]^{+},\\
&\qquad\Sigma_{t|t-1}~\text{is given by \eqref{kalman:2}}\\
&\qquad\Sigma_{t|t}~\text{is given by \eqref{kalman:4}}\\
\frac{1}{n+1}\sum_{t=0}^n&\trace\left((I-H_t)\Sigma_{t|t-1}(I-H_t)\T+\Sigma_{{\bf v}_t}\right)\leq{D}
\end{align*} 
\end{subequations}
for some $D \in  [0, \infty]$ and $[x]^{+}\triangleq \max \{0,x\}$.
\end{lemma}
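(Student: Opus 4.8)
The plan is to reduce everything to standard linear--Gaussian filtering once the correct parametrization of the test channel is fixed. First I would invoke the structural results quoted before Problem~\ref{problem2}: since the source \eqref{state_space_model} is Gaussian and first-order Markov, the infimizing test channel may be taken Gaussian of the form ${\bf P}^{\gp}(dy_t|y^{t-1},x_t)$ with mean affine in $(x_t,y^{t-1})$ and non-random conditional covariance. Hence any candidate admits the representation ${\bf y}_t=G_t{\bf x}_t+F_t({\bf y}^{t-1})+{\bf v}_t$, with $G_t$ a deterministic matrix, $F_t$ a deterministic linear map, and ${\bf v}_t$ a zero-mean Gaussian vector of non-random covariance, independent of ${\bf x}_0$ and $\{{\bf w}_s\}$.

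For part (1) the key observation is that this affine form can be \emph{centered} without affecting the objective. Setting $\widehat{{\bf x}}_{t|t-1}\triangleq\mathbb{E}\{{\bf x}_t|{\bf y}^{t-1}\}$, the innovations ${\tilde{\bf k}}_t\triangleq{\bf y}_t-\mathbb{E}\{{\bf y}_t|{\bf y}^{t-1}\}=G_t({\bf x}_t-\widehat{{\bf x}}_{t|t-1})+{\bf v}_t$ are orthogonal by the standard innovations argument, and their covariance $\Sigma_{\tilde{\bf k}_t}=G_t\Sigma_{t|t-1}G_t\T+\Sigma_{{\bf v}_t}$ depends only on $(G_t,\Sigma_{{\bf v}_t})$ and the deterministic prediction covariance $\Sigma_{t|t-1}$, not on the bias $F_t$. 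Because $I({\bf x}^t;{\bf y}_t|{\bf y}^{t-1})$ is a functional of ${\tilde{\bf k}}_t$ alone, the bias leaves the directed information untouched; minimizing the $\mse$ over $F_t$ for fixed $(G_t,\Sigma_{{\bf v}_t})$ then forces the best linear predictor $F_t({\bf y}^{t-1})=(I-G_t)\widehat{{\bf x}}_{t|t-1}$, which with $H_t:=G_t$ is precisely \eqref{CG_1} and gives \eqref{inn_2}. The main obstacle is to verify that this centering, applied simultaneously at every $t$, is genuinely lossless: one must check by induction that replacing each $F_t$ by its centering term leaves the whole trajectory $\{\Sigma_{s|s-1},\Sigma_{s|s}\}$ unchanged, so that the Kalman predictor generated by \eqref{CG_1} is self-consistent with the $\widehat{{\bf x}}_{s|s-1}$ used in the centering.

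Part (2) is then routine linear--Gaussian filtering. Treating \eqref{inn_2} as the measurement of the state \eqref{state_space_model} with $\Sigma_{\bf w}=I$, the one-step prediction \eqref{kalman:1}--\eqref{kalman:2} follows from $\widehat{{\bf x}}_{t|t-1}=A\widehat{{\bf x}}_{t-1|t-1}$ together with the independence of ${\bf w}_{t-1}$ from ${\bf y}^{t-1}$, while the update \eqref{kalman:3}--\eqref{kalman:4} is obtained by projecting ${\bf x}_t$ onto the orthogonal innovation ${\tilde{\bf k}}_t$, which yields the gain $N_t=\Sigma_{t|t-1}H_t\T\Sigma_{\tilde{\bf k}_t}^{-1}$; symmetry and positive semidefiniteness of $\Sigma_{t|t}$ and $\Sigma_{t|t-1}$ propagate along the recursion.

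Finally, for part (3) I would combine the per-letter decomposition \eqref{eq_s} with conditional Gaussianity to write $I({\bf x}^t;{\bf y}_t|{\bf y}^{t-1})=\frac12\log\frac{|\Sigma_{\tilde{\bf k}_t}|}{|\Sigma_{{\bf v}_t}|}$, and then use the information-form identity $\Sigma_{t|t}^{-1}=\Sigma_{t|t-1}^{-1}+H_t\T\Sigma_{{\bf v}_t}^{-1}H_t$ (equivalently the matrix determinant lemma) to obtain $\frac{|\Sigma_{\tilde{\bf k}_t}|}{|\Sigma_{{\bf v}_t}|}=\frac{|\Sigma_{t|t-1}|}{|\Sigma_{t|t}|}\ge 1$, so the $[\cdot]^{+}$ is harmless. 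The distortion identity $\mathbb{E}\{\|{\bf x}_t-{\bf y}_t\|_2^2\}=\trace\big((I-H_t)\Sigma_{t|t-1}(I-H_t)\T+\Sigma_{{\bf v}_t}\big)$ follows from \eqref{CG_1} and ${\bf v}_t\perp({\bf x}_t-\widehat{{\bf x}}_{t|t-1})$. Taking the infimum of the averaged information over $H_t\succeq 0$ and $\Sigma_{{\bf v}_t}\succeq 0$ subject to the averaged distortion bound produces \eqref{initial:optimization}.
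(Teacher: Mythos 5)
Your proposal is essentially correct, and it follows the route the paper intends: the paper itself omits the proof of this lemma, stating only that it is a ``simple extension'' of the scalar case in the cited reference, so the natural vectorization you give --- structural reduction to a Gaussian test channel affine in $(x_t,y^{t-1})$, centering of the bias term, the innovations representation \eqref{inn_2}, the Kalman recursions \eqref{kalman_filter}, and the determinant identity $|\Sigma_{\tilde{\bf k}_t}|/|\Sigma_{{\bf v}_t}|=|\Sigma_{t|t-1}|/|\Sigma_{t|t}|$ --- is exactly the expected argument.

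Two small points are worth tightening. First, the ``main obstacle'' you flag (that re-centering every $F_t$ simultaneously is lossless) closes cleanly without induction on the covariance trajectory: replacing $F_t({\bf y}^{t-1})$ by $(I-G_t)\widehat{\bf x}_{t|t-1}$ adds a $\sigma({\bf y}^{t-1})$-measurable, invertibly recoverable quantity to ${\bf y}_t$ at each step, so the filtration $\sigma({\bf y}^t)$ is unchanged for every $t$; hence all conditional expectations $\widehat{\bf x}_{s|s-1}$, all error covariances, and the directed information are preserved, while the distortion can only decrease. Second, part (3) of the lemma takes the infimum over $H_t\succeq 0$, whereas part (1) only gives $H_t$ as an arbitrary deterministic matrix; your proof does not address why restricting to positive semidefinite $H_t$ (and invertible $\Sigma_{{\bf v}_t}$, which your determinant computation implicitly needs) is without loss of optimality. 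This is a genuine step --- it is resolved in the paper's Theorem~\ref{theorem:alternative_optimization}, where the optimal $H_t=I-\Sigma_{t|t}\Sigma_{t|t-1}^{-1}$ and $\Sigma_{{\bf v}_t}=\Sigma_{t|t}H_t\T$ are exhibited explicitly --- but at the level of this lemma you should either justify the restriction or note that it is part of the parametrization being asserted rather than derived.
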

\par {The next theorem uses Lemma \ref{lemma:conditionally_gaussian} to identify the missing parameters in the realization of \cite[Theorem 5]{gorbunov-pinsker1974} and, therefore, to provide the complete characterization of $R_{\text{GM},0,n}^{\na}(D)$.}
{
\begin{theorem}(Characterization of Gaussian $\nrdf$)\label{theorem:alternative_optimization}{\ \\}
Consider Problem \ref{problem2}, \eqref{prob_stat:eq.2}. Then, the following holds.\\
(1) The ``test channel'' ${\bf P}(dy_t|y^{t-1},x_t)={\bf P}^{\gp}(dy_t|y_{t-1},x_t)$ and is realized by  
\begin{align}
{\bf y}_t={H}_t{\bf x}_t+(I-{H}_t)A{\bf y}_{t-1}+{\bf v}_t,~{\bf y}_{-1}=\bar{y},~t\in\mathbb{N}_0^n,\label{optimal_realization}
\end{align}
where
\begin{subequations}
\label{scalings}
\begin{align}
&H_t\triangleq{I}-\Sigma_{t|t}\Sigma^{-1}_{t|t-1}\succeq 0,~~\Sigma_{t|t}\succeq{0},~\Sigma_{t|t-1}\succeq{0},\label{scalings:1}\\
&\Sigma_{{\bf v}_t}\triangleq\Sigma_{t|t}H_t\T\succeq 0,\label{scalings:2}\\
&\Sigma_{t|t-1}~~\text{satisfies~ (\ref{kalman:2})},~~\Sigma_{0|-1}=\Sigma_{{\bf x}_0}.\label{scalings:3}
\end{align}
\end{subequations}
(2) Moreover, the above realization yields in Lemma \ref{lemma:conditionally_gaussian}
\begin{align}
&\widehat{\bf x}_{t|t}={\bf y}_t,~\widehat{\bf x}_{t|t-1}=A {\bf y}_{t-1}. \label{sim} 
\end{align}
(3) The characterization of $R_{\text{GM},0,n}^{\na}(D)$ is
\begin{subequations}
\label{alternative:optimization}
\begin{align}
R^{\na}_{\text{GM},0,n}&(D)=  \inf_{\Sigma_{t|t}\succeq{0}}\frac{1}{2}\frac{1}{n+1}\sum_{t=0}^n  \left[\log\frac{|\Sigma_{t|t-1}|}{|\Sigma_{t|t}|}\right]^{+},\label{scalar:eq.3a}\\
\text{s.t.}~
&0\preceq\Sigma_{t|t}\preceq\Sigma_{t|t-1},~t\in\mathbb{N}_0^n\label{vector_lmi}\\
&\frac{1}{n+1}\sum_{t=0}^n\trace(\Sigma_{t|t})\leq{D}\label{scalar:eq.2b}
\end{align}
\end{subequations}
for some $D \in [0, \infty]$.
\end{theorem}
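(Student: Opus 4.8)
The plan is to start from the general realization and the optimization supplied by Lemma~\ref{lemma:conditionally_gaussian} and to show that the particular design \eqref{scalings} is the one which, among all $(H_t,\Sigma_{{\bf v}_t})$ producing a prescribed filtering error covariance $\Sigma_{t|t}$, minimizes the per-stage distortion while leaving the rate untouched; this collapses the two matrix decision variables into the single variable $\Sigma_{t|t}$ and yields \eqref{alternative:optimization}.

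First I would establish part~(2), since parts~(1) and~(3) rest on it. Substituting the candidates $H_t={I}-\Sigma_{t|t}\Sigma^{-1}_{t|t-1}$ and $\Sigma_{{\bf v}_t}=\Sigma_{t|t}H_t\T$ into the innovations covariance $\Sigma_{\tilde{\bf k}_t}=H_t\Sigma_{t|t-1}H_t\T+\Sigma_{{\bf v}_t}$ of Lemma~\ref{lemma:conditionally_gaussian}, I would use the identity $H_t\Sigma_{t|t-1}=\Sigma_{t|t-1}-\Sigma_{t|t}$ to obtain $\Sigma_{\tilde{\bf k}_t}=\Sigma_{t|t-1}H_t\T$. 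Feeding this into the Kalman gain \eqref{kalman:3a} gives $N_t=\Sigma_{t|t-1}H_t\T\Sigma_{\tilde{\bf k}_t}^{-1}=I$, so the update \eqref{kalman:3} reduces to $\widehat{\bf x}_{t|t}=\widehat{\bf x}_{t|t-1}+\tilde{\bf k}_t={\bf y}_t$ via $\tilde{\bf k}_t={\bf y}_t-\widehat{\bf x}_{t|t-1}$ from \eqref{inn_2}. Combined with the prediction step \eqref{kalman:1} this gives $\widehat{\bf x}_{t|t-1}=A\widehat{\bf x}_{t-1|t-1}=A{\bf y}_{t-1}$, which is exactly \eqref{sim}.

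Part~(1) then follows by substitution: inserting $\widehat{\bf x}_{t|t-1}=A{\bf y}_{t-1}$ into the realization \eqref{CG_1} and collecting terms produces \eqref{optimal_realization}. Since its right-hand side is affine in $(x_t,y_{t-1})$ and driven by the independent Gaussian noise ${\bf v}_t$ with non-random covariance $\Sigma_{{\bf v}_t}$, the induced conditional law is Gaussian with linear conditional mean, non-random conditional covariance, and dependence on the past reproductions only through $y_{t-1}$; this is precisely the form ${\bf P}^{\gp}(dy_t|y_{t-1},x_t)$. Along the way I would verify the positivity requirements in \eqref{scalings}: writing $\Sigma_{{\bf v}_t}=\Sigma_{t|t}-\Sigma_{t|t}\Sigma_{t|t-1}^{-1}\Sigma_{t|t}$, which is symmetric, its positive semidefiniteness follows from $\Sigma_{t|t}\preceq\Sigma_{t|t-1}$ through $\Sigma_{t|t}^{1/2}\Sigma_{t|t-1}^{-1}\Sigma_{t|t}^{1/2}\preceq I$, while under the same ordering the spectrum of $\Sigma_{t|t}\Sigma^{-1}_{t|t-1}$ lies in $[0,1]$, consistent with $H_t\succeq 0$.

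The decisive step is part~(3). Here I would observe that the rate functional of Lemma~\ref{lemma:conditionally_gaussian} depends on $(H_t,\Sigma_{{\bf v}_t})$ only through $\Sigma_{t|t}$ (and recursively, via \eqref{kalman:2}, through past choices), whereas the distortion is the genuine reconstruction error $\mathbb{E}\{\|{\bf x}_t-{\bf y}_t\|_2^2\}=\trace((I-H_t)\Sigma_{t|t-1}(I-H_t)\T+\Sigma_{{\bf v}_t})$. Because $\widehat{\bf x}_{t|t}=\mathbb{E}\{{\bf x}_t|{\bf y}^t\}$ is the MMSE estimator and ${\bf y}_t$ is trivially ${\bf y}^t$-measurable, the MMSE property of conditional expectation gives $\mathbb{E}\{\|{\bf x}_t-{\bf y}_t\|_2^2\}\geq\trace(\Sigma_{t|t})$ for every admissible realization, with equality iff ${\bf y}_t=\widehat{\bf x}_{t|t}$. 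By part~(2) the design \eqref{scalings} attains this equality while reproducing the same $\Sigma_{t|t}$, hence the same rate and the same forward recursion \eqref{kalman:2}; it is therefore simultaneously rate-preserving and distortion-minimizing. Replacing each per-stage distortion by $\trace(\Sigma_{t|t})$ and reparametrizing the infimum over $\Sigma_{t|t}$ subject to $0\preceq\Sigma_{t|t}\preceq\Sigma_{t|t-1}$ (the upper bound expressing that the measurement update \eqref{kalman:4} can only decrease the covariance) delivers \eqref{alternative:optimization}. The main obstacle I anticipate is making this decoupling rigorous in spite of the temporal coupling induced by \eqref{kalman:2}: one must confirm that the distortion-minimizing reselection of $(H_t,\Sigma_{{\bf v}_t})$ at each $t$ returns the very $\Sigma_{t|t}$ that feeds the recursion, so that the reparametrized problem is genuinely equivalent to the original rather than a relaxation of it, and that every $\Sigma_{t|t}$ with $0\preceq\Sigma_{t|t}\preceq\Sigma_{t|t-1}$ is attainable by an admissible $(H_t,\Sigma_{{\bf v}_t})$.
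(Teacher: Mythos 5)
Your proposal is correct. Note that the paper itself does not prove Theorem~\ref{theorem:alternative_optimization} --- it defers entirely to \cite{stavrou-charalambous-charalambous-loyka2018siam} --- so there is no in-text argument to compare against; your derivation (unit Kalman gain $N_t=I$ from the identity $\Sigma_{\tilde{\bf k}_t}=\Sigma_{t|t-1}H_t\T=\Sigma_{t|t-1}-\Sigma_{t|t}$, hence $\widehat{\bf x}_{t|t}={\bf y}_t$; the MMSE lower bound $\mathbb{E}\{\|{\bf x}_t-{\bf y}_t\|_2^2\}\geq\trace(\Sigma_{t|t})$ attained with equality by the design \eqref{scalings}; and the resulting reparametrization of Lemma~\ref{lemma:conditionally_gaussian}(3) in the single variable $\Sigma_{t|t}$ with $0\preceq\Sigma_{t|t}\preceq\Sigma_{t|t-1}$) is the natural self-contained route and is consistent with the structure the paper builds on. The only technicality worth flagging is that $N_t=(\Sigma_{t|t-1}-\Sigma_{t|t})(\Sigma_{t|t-1}-\Sigma_{t|t})^{-1}$ requires a pseudo-inverse on the directions where $\Sigma_{t|t}=\Sigma_{t|t-1}$ (zero-rate dimensions), matching the $[\cdot]^{+}$ in \eqref{scalar:eq.3a}; this does not affect the validity of the argument.
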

\begin{proof} 
The proof is derived in \cite{stavrou-charalambous-charalambous-loyka2018siam}.
\end{proof}
}
\par {Next, we give sufficient conditions for existence of solution to Theorem \ref{theorem:alternative_optimization}, (3).
\begin{remark}(Existence of solution of \eqref{alternative:optimization})\label{remark:2}{\ \\}
A sufficient condition for existence of solution with finite value in \eqref{alternative:optimization} is to consider the strict linear matrix inequality ($\lmi$) constraint in \eqref{vector_lmi}, i.e., $0\prec\Sigma_{t|t}\preceq\Sigma_{t|t-1},~\forall{t}\in\mathbb{N}_0^n$,  because otherwise the value of $\nrdf$ takes the value of $+\infty$. Then, by construction, the minimization problem of \eqref{alternative:optimization} is strictly feasible, i.e., there always exists an optimal solution with finite value. The strict $\lmi$ further means that $D>0$ (non-zero distortion) and also $\Sigma_{t|t-1}\succ{0}$.  Then, from \eqref{kalman:2} the following conditions on matrices $A$ and $B$ are sufficient for existence of a finite solution: 
\begin{align}
\mbox{either $A$ is full rank or $B$ is square and full rank in \eqref{state_space_model}}. \label{cond_1}
\end{align}
\end{remark}
}
\section{Asymptotic Feedback Realization Scheme via Kalman Filtering for Problem \ref{problem2} }\label{sec:new_realization_scheme}

\par {In this section, we leverage results from \S \ref{subsec:specific_prob} to show that the asymptotic limit of \eqref{alternative:optimization} exists and it is finite. Then, we propose a new alternative realization scheme that makes use of joint diagonalization matrices, reverse-waterfilling design parameters by means of an \textit{innovations encoder}, an \textit{additive Gaussian channel}, and a \textit{decoder} which includes a Kalman filter. Recall that our feedback realization scheme is fundamentally different compared to the approach considered in \cite{tanaka:2017} because it builts upon the realization scheme of \cite[Theorem 5]{gorbunov-pinsker1974}  whereas the one in \cite{tanaka:2017} makes use of the so called ``sensor-estimation separation principle''.} 

\par {In the first result of this section, we provide sufficient conditions to show existence of solution with finite value to the asymptotic limit of \eqref{alternative:optimization} and then we give the asymptotic characterization of Theorem \ref{theorem:alternative_optimization}.  
\begin{theorem}(Existence of solution to the asymptotic characterization of \eqref{alternative:optimization})\label{theorem:asymptotic_limit}{\ \\}
Suppose condition (\ref{cond_1}) holds and the optimal test channel distribution ${\bf P}^{\gp}(dy_t|y_{t-1},x_t)$ is restricted to be time invariant and there is a unique invariant distribution of the transition probability ${\bf P}(dy_t|y_{t-1})$.  Then, the following statements hold.\\
(1) The limit
\begin{align}
R^{\na}_{\text{GM}}(D)=\lim_{n\longrightarrow\infty}R_{\text{GM},0,n}^{\na}(D)<\infty,\label{existence_of_limit}
\end{align}
i.e., exists and it is finite, if the solution  of the limiting problem   $R^{\na}_{\text{GM}}(D)$ for $D \in (0, \infty]$, is finite,    given by
\begin{align}
R^{\na}_{\text{GM}}&(D)=  \min_{\substack{0\prec\Pi\preceq\Lambda\\ \trace(\Delta)\leq{D}}} \frac{1}{2}\log\frac{|\Lambda|}{|\Pi|},\label{alternative:optimization:inf}
\end{align}
where $(\Lambda,~\Pi)$  are the corresponding time-invariant values of $\Sigma_{t|t-1}$ and $\Sigma_{t|t}$, respectively.\\
(2) The asymptotic limit of ${\bf P}^{\gp}(dy_t|y_{t-1},x_t)$ is realized by  
\begin{align}
{\bf y}_t={H}{\bf x}_t+(I-{H})A{\bf y}_{t-1}+{\bf v}_t,\label{optimal_realization_asymptotic}
\end{align}
where ${\bf v}_t\sim{\cal N}(0;\Sigma_{\bf v})$,
\begin{subequations}
\label{scalings_inf}
\begin{align}
&H\triangleq{I}-\Pi\Lambda^{-1}\succeq 0,~~\Pi\succeq {0},~\Lambda\succeq {0},\label{scalings:1:inf}\\
&\Sigma_{{\bf v}}\triangleq\Pi{H}\T\succeq 0,\label{scalings:2:inf}\\
&\Lambda={A}\Pi{A}\T+BB\T,\label{scalings:3:inf}
\end{align}
and ($H$,~$\Sigma_{\bf v}$) are the corresponding time-invariant values of $H_t$ and $\Sigma_{{\bf v}_t}$, respectively.
\end{subequations}
\end{theorem}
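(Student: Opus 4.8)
The plan is to exploit the finite-horizon characterization \eqref{alternative:optimization} from Theorem \ref{theorem:alternative_optimization} and to pass to the limit $n\to\infty$ by showing that, under the stated hypotheses, the optimal error-covariance sequences $\{\Sigma_{t|t-1}\}$ and $\{\Sigma_{t|t}\}$ converge to steady-state matrices $\Lambda$ and $\Pi$; once this is established, a Ces\`{a}ro-averaging argument will deliver the limit of the time-averaged objective in \eqref{scalar:eq.3a}.

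First I would restrict the test channel to be time invariant, so that $H_t\equiv H$ and $\Sigma_{{\bf v}_t}\equiv\Sigma_{\bf v}$ in \eqref{optimal_realization}. Under this restriction the covariances obey the deterministic recursions \eqref{kalman:2} and \eqref{kalman:4} with a constant gain, i.e.\ a standard filter Riccati recursion initialized at $\Sigma_{0|-1}=\Sigma_{{\bf x}_0}$; moreover, since \eqref{scalings:1} gives $\Sigma_{t|t}=(I-H)\Sigma_{t|t-1}$, this collapses to a Lyapunov-type update for $\Sigma_{t|t-1}$ driven by the closed-loop matrix $(I-H)A$ of \eqref{optimal_realization_asymptotic}. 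The hard part will be proving that this recursion converges: the hypothesis that the one-step transition ${\bf P}(dy_t|y_{t-1})$ has a unique invariant distribution forces $(I-H)A$ to be Schur stable, and condition \eqref{cond_1}---which by Remark \ref{remark:2} keeps $\Sigma_{t|t-1}\succ 0$---guarantees a unique stabilizing positive-definite fixed point. I would therefore invoke the classical convergence theory of the filter covariance recursion to conclude $\Sigma_{t|t-1}\to\Lambda\succ 0$ and $\Sigma_{t|t}\to\Pi\succ 0$, with $(\Lambda,\Pi)$ solving the steady-state equations \eqref{scalings:1:inf}--\eqref{scalings:3:inf}.

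Granted this convergence, the per-stage cost $\tfrac12\log(|\Sigma_{t|t-1}|/|\Sigma_{t|t}|)$ is continuous in its arguments and hence converges to $\tfrac12\log(|\Lambda|/|\Pi|)<\infty$, the $[\cdot]^{+}$ being inactive at steady state since $\Pi\preceq\Lambda$; by the Stolz--Ces\`{a}ro theorem the time average in \eqref{scalar:eq.3a} converges to this same value. Likewise the averaged distortion constraint \eqref{scalar:eq.2b} passes to its steady-state form $\trace(\Pi)\leq D$, equivalently $\trace(\Delta)\leq D$ once the identity $\widehat{\bf x}_{t|t}={\bf y}_t$ from Theorem \ref{theorem:alternative_optimization}(2) identifies $\Sigma_{t|t}$ with the reproduction-error covariance. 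Minimizing over admissible steady-state pairs subject to $0\prec\Pi\preceq\Lambda$ then reproduces \eqref{alternative:optimization:inf}, and strict feasibility under \eqref{cond_1} (Remark \ref{remark:2}) guarantees that the minimum is attained with finite value, which establishes \eqref{existence_of_limit}. The one subtle point I would still need to handle is the interchange of $\lim_n$ with the infimum over test channels, which I expect to resolve via the uniform convergence of $\Sigma_{t|t-1},\Sigma_{t|t}$ together with continuity and monotonicity of the per-stage map on the feasible set.

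Part (2) is then immediate: I would freeze the time-varying parameters of the finite-time realization \eqref{optimal_realization} at their limits, setting $H=I-\Pi\Lambda^{-1}$ and $\Sigma_{\bf v}=\Pi H\T$, which yields \eqref{optimal_realization_asymptotic}; here \eqref{scalings:1:inf}--\eqref{scalings:2:inf} are precisely the steady-state specialization of \eqref{scalings:1}--\eqref{scalings:2}, and \eqref{scalings:3:inf} is the fixed point of the prediction update \eqref{kalman:2}.
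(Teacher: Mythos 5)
Your overall strategy (freeze the test channel, prove Riccati convergence of $\Sigma_{t|t-1},\Sigma_{t|t}$ to a fixed point $(\Lambda,\Pi)$, then Ces\`{a}ro-average the per-stage cost) is a legitimate way to show that the best \emph{time-invariant} channel achieves the value $\min \frac{1}{2}\log\frac{|\Lambda|}{|\Pi|}$, and your part (2) is essentially the paper's: one simply evaluates \eqref{scalings:1}--\eqref{scalings:3} and \eqref{kalman:2} at the fixed point. But there is a genuine gap, and you have put your finger on it yourself without closing it: the interchange of $\lim_n$ with the infimum over (generally time-varying) test channels. What you actually establish is an upper bound of the form $\widehat{R}^{\na}_{\text{GM}}(D)\leq \min_{0\prec\Pi\preceq\Lambda,\ \trace(\Pi)\leq D}\frac{1}{2}\log\frac{|\Lambda|}{|\Pi|}$; it does not follow from uniform convergence of the covariances of a \emph{fixed} channel that $\lim_n R^{\na}_{\text{GM},0,n}(D)$ exists, nor that the finite-horizon optima (whose optimal $H_t,\Sigma_{{\bf v}_t}$ are time-varying) do not achieve something strictly smaller in the limit. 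The paper's proof disposes of existence of the limit in one line by observing that $\{(n+1)R^{\na}_{\text{GM},0,n}(D)\}$ is sub-additive (citing \cite[Lemma 1]{gorbunov-pinsker1973}), so that the per-unit-time limit exists by Fekete's lemma (possibly infinite), and then uses the time-invariance and unique-invariant-distribution hypotheses only to identify the limit with the steady-state program \eqref{alternative:optimization:inf} and to guarantee finiteness. Without the sub-additivity step (or an equivalent sandwich argument), your proof is incomplete on the lower-bound side of \eqref{existence_of_limit}.

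A secondary, smaller issue: your claim that the unique-invariant-distribution hypothesis ``forces $(I-H)A$ to be Schur stable,'' and hence that the classical filter-covariance convergence theory applies, is itself a nontrivial assertion that would need proof, particularly since the paper explicitly admits unstable $A$ (cf.\ Remark \ref{remark:1} and the examples of \S\ref{sub:unstable}); the standard convergence results for the Riccati recursion require detectability/stabilizability conditions that are not supplied by \eqref{cond_1} alone. The paper avoids this entirely by \emph{assuming} the limit point exists as part of the time-invariance hypothesis rather than deriving it from a dynamical argument. Also note that $\trace(\Delta)\leq D$ in \eqref{alternative:optimization:inf} should be read as $\trace(\Pi)\leq D$, as you correctly inferred.
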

\begin{proof} 
(1) Observe that the sequence $\left\{R_{\text{GM},0,n}^{\na}(D): n\in\mathbb{N}_0^n \right\}$ is sub-additive (see \cite[Lemma 1]{gorbunov-pinsker1973}). Hence, the limit in \eqref{existence_of_limit} always exists (although it can be infinite). However, since we assumed the optimal test channel distribution is time-invariant and there is a unique invariant distribution, we ensure that the limit is finite. The last part follows because we assume ${\bf P}^{\gp}(dy_t|y_{t-1},x_t)$ is time-invariant. (2) follows from (1) and Theorem \ref{theorem:alternative_optimization}. 
This completes the proof.
\end{proof}
The minimization problem of Theorem \ref{theorem:asymptotic_limit}, \eqref{alternative:optimization:inf} can be solved by employing, for instance, Karush-Kuhn-Tucker ($\kkt$) conditions \cite{stavrou-charalambous-charalambous-loyka2018siam} or semidefinite programming algorithm \cite{tanaka:2017}. We wish to remark that sufficient conditions which do not require the test channel in Theorem \ref{theorem:asymptotic_limit} to be time-invariant, can be identified upon solving the $\kkt$ conditions of the finite-time Gaussian $\nrdf$ of Theorem \ref{theorem:alternative_optimization} (through the solutions of the Riccati equations). This method is employed in \cite{stavrou-charalambous-charalambous-loyka2018siam}. In the next theorem, we evaluate numerically the optimization problem
\eqref{alternative:optimization:inf} by providing two equivalent semidefinite programming representations of $R_{\text{GM}}^{\na}(D)$. The first is similar to the one derived in \cite[equation (27)]{tanaka:2017} whereas the second is new. The utility of each of these semidefinite representations will be perceived in the sequel.
\begin{lemma}(Optimal solution of $R_{\text{GM}}^{\na}(D)$)\label{theorem:solution:semidefinite}{\ \\}
Suppose the conditions of Theorem \ref{theorem:asymptotic_limit} are satisfied. Then, the following statements hold.
\item[{\bf (1)}] Suppose matrix $B$ is full rank. Introduce the variable $Q_1\triangleq{\Pi}^{-1}-A{\T}(BB{\T})^{-1}A$, where $\Pi\succ{0}$. Then, for $D>0$, $R_{\text{GM}}^{\na}(D)$ is semidefinite representable as follows: 
\begin{subequations}\label{sdp:ver1}
\begin{align}
R_{\text{GM}}^{\na}(D)&= \min_{Q_1\succ 0} - \frac{1}{2} \log|Q_1| + \frac{1}{2} \log |BB{\T}|. \label{eq:semidefinite_representation1} \\
\text{s.t. } &  \;\; 0\prec\Pi\preceq \Lambda\\
& \trace(\Pi) \leq D \\
& \left[ \begin{array}{cc}
\Pi-Q_1 & \Pi{A}{\T} \\
A\Pi & \Lambda \end{array}\right]\succeq 0 \label{lmi:1}
\end{align}
\end{subequations}
\item[{\bf(2)}] Suppose matrix $A$ is full rank. Introduce the decision variable $Q_2\triangleq{I}+B{\T}(A\T)^{-1}\Pi^{-1}A^{-1}B$, where $\Pi\succ{0}$. Then, for $D>0$, $R_{\text{GM}}^{\na}(D)$ is semidefinite representable as follows:
\begin{subequations}\label{sdp:ver2}
\begin{align}
R_{\text{GM}}^{\na}(D)&= \min_{Q_2\succ 0} - \frac{1}{2} \log|Q_2| + \log\abs(|A|). \label{eq:semidefinite_representation2} \\
\text{s.t. } &  \;\; 0\prec\Pi\preceq\Lambda\\
& \trace(\Pi) \leq D \\
& \left[ \begin{array}{cc}
I-Q_2 & B{\T} \\
B & \Lambda \end{array}\right]\succeq 0 \label{lmi:2}
\end{align}
\end{subequations}
\end{lemma}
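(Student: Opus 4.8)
The plan is to start from the finite-dimensional convex program \eqref{alternative:optimization:inf} of Theorem \ref{theorem:asymptotic_limit}, i.e.\ the minimization of $\frac{1}{2}\log\frac{|\Lambda|}{|\Pi|}$ over $0\prec\Pi\preceq\Lambda$ with $\trace(\Pi)\leq D$, where throughout $\Lambda$ abbreviates the Lyapunov/Riccati relation $\Lambda=A\Pi A\T+BB\T$ of \eqref{scalings:3:inf}. Strict feasibility from Remark \ref{remark:2} ensures the minimum is attained and finite, so it suffices to rewrite the log-determinant ratio as a constant plus a single $\log\det$ term and then to realize the nonlinear dependence on $\Pi$ as a linear matrix inequality through a Schur complement. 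The routine ingredients are the matrix determinant lemma, Sylvester's identity $|I+XY|=|I+YX|$, and the Woodbury identity; the only genuine care is to track which rank hypothesis makes the relevant inverses exist, and to check that convexity of $-\log\det$ together with the LMI constraints certifies a bona fide $\sdp$ in each case.

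For \textbf{(1)} I would assume $B$ full rank, so $BB\T\succ 0$, and factor $|\Lambda|=|BB\T|\,|I+(BB\T)^{-1}A\Pi A\T|$. Sylvester's identity followed by division by $|\Pi|$ gives $\frac{|\Lambda|}{|\Pi|}=|BB\T|\,\big|\Pi^{-1}+A\T(BB\T)^{-1}A\big|$, so the $\Pi$-dependent part of the objective is $\tfrac12\log\big|\Pi^{-1}+A\T(BB\T)^{-1}A\big|$. Introducing the auxiliary variable $Q_1$ equal to the inverse of $\Pi^{-1}+A\T(BB\T)^{-1}A$ and invoking Woodbury in the form $\big(\Pi^{-1}+A\T(BB\T)^{-1}A\big)^{-1}=\Pi-\Pi A\T\Lambda^{-1}A\Pi$ turns the objective into $-\tfrac12\log|Q_1|+\tfrac12\log|BB\T|$, matching \eqref{eq:semidefinite_representation1}. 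The coupling of $Q_1$ to $\Pi$ is the L\"owner inequality $Q_1\preceq\Pi-\Pi A\T\Lambda^{-1}A\Pi$, which, since $\Lambda\succ 0$ (because $BB\T\succ 0$), is equivalent by a Schur complement on the lower-right block $\Lambda$ to the LMI \eqref{lmi:1}. As $-\log|Q_1|$ is strictly decreasing in the L\"owner order, the optimizer saturates this inequality, so the LMI reformulation is lossless.

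For \textbf{(2)} I would instead assume $A$ full rank and factor $|\Lambda|=|A|^2|\Pi|\,|I+(A\Pi A\T)^{-1}BB\T|$. Sylvester's identity rewrites the last factor as $|I+B\T(A\Pi A\T)^{-1}B|=|I+B\T(A\T)^{-1}\Pi^{-1}A^{-1}B|=|Q_2|$, whence $\frac{|\Lambda|}{|\Pi|}=|A|^2|Q_2|$ and the objective becomes $\tfrac12\log|Q_2|+\log\abs(|A|)$, consistent with \eqref{eq:semidefinite_representation2} after identifying the optimization variable with the Schur-complement block $I-B\T\Lambda^{-1}B$, which by the dual Woodbury identity $I-B\T\Lambda^{-1}B=\big(I+B\T(A\Pi A\T)^{-1}B\big)^{-1}$ equals $Q_2^{-1}$. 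The same Schur-complement argument (here $\Lambda\succ0$ because $A\Pi A\T\succ0$) converts the coupling inequality into the LMI \eqref{lmi:2}, again saturated at the optimum by monotonicity of $\log\det$. The constraints $\trace(\Pi)\leq D$ and $0\prec\Pi\preceq\Lambda$ are carried over unchanged in both parts, and the matching $\tfrac12\log|A|^2=\log\abs(|A|)$ is immediate.

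The step I expect to be the main obstacle is establishing that passing from the equality $\Lambda=A\Pi A\T+BB\T$ (equivalently, from the exact Woodbury value of the Schur complement) to the \emph{inequality} encoded by the LMI is without loss of optimality. This requires arguing that at any optimizer the auxiliary variable may be taken maximal in the L\"owner order: I would combine strict monotonicity of $X\mapsto-\log|X|$ on positive-definite matrices with the observation that enlarging $Q_1$ (respectively $I-B\T\Lambda^{-1}B$) up to its Woodbury value keeps the LMI feasible, so the determinant-maximizing choice necessarily attains equality and the relaxation is tight. The remaining bookkeeping, namely nonsingularity of $BB\T$, $A$, and $\Lambda$ under the respective hypotheses and the existence of the minimizer via Remark \ref{remark:2}, is routine.
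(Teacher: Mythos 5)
Your proposal is correct, and for part \textbf{(2)} it follows essentially the same route as the paper's own proof: factor $\lvert\Lambda\rvert/\lvert\Pi\rvert$ using the invertibility of $A$, apply Sylvester's determinant identity to move to $\lvert I+B\T(A\T)^{-1}\Pi^{-1}A^{-1}B\rvert$, introduce the decision variable via monotonicity of $\log\det$, convert with the Woodbury identity to $I-B\T\Lambda^{-1}B$, and finish with a Schur complement on the block $\Lambda$. The differences are worth noting. First, the paper does not prove part \textbf{(1)} at all --- it defers to Tanaka et al.\ --- so your parallel derivation (factor out $\lvert BB\T\rvert$, Sylvester, Woodbury in the form $(\Pi^{-1}+A\T(BB\T)^{-1}A)^{-1}=\Pi-\Pi A\T\Lambda^{-1}A\Pi$, Schur complement) supplies an argument the paper omits, and it is correct. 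Second, your explicit identification of the optimization variable with the Woodbury inverse (i.e.\ $Q_2=I-B\T\Lambda^{-1}B=(I+B\T(A\T)^{-1}\Pi^{-1}A^{-1}B)^{-1}$ at the optimum, and similarly $Q_1=(\Pi^{-1}+A\T(BB\T)^{-1}A)^{-1}$) is not mere bookkeeping: the definitions printed in the lemma statement ($Q_2\triangleq I+B\T(A\T)^{-1}\Pi^{-1}A^{-1}B$ and $Q_1\triangleq\Pi^{-1}-A\T(BB\T)^{-1}A$) do not literally match the SDPs --- only their Woodbury inverses make the objective $-\tfrac12\log\lvert Q\rvert+\text{const}$ reproduce $\tfrac12\log(\lvert\Lambda\rvert/\lvert\Pi\rvert)$ --- and the paper's own proof carries the same internal inconsistency (it defines $Q_2$ as the quantity and then constrains $Q_2$ by its inverse). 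Third, you make explicit why the inequality relaxation is lossless (saturation of the LMI at the optimum by strict monotonicity of $-\log\det$ in the L\"owner order), which the paper compresses into the phrase ``monotonicity of the determinant function.'' Your reading is the one under which the stated SDPs are correct.
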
}
\begin{proof}
{The derivation of the semidefinite representation of {\bf (1)} follows similar to \cite[equation (27)]{tanaka:2017}, hence we omit it. The derivation of {\bf (2)} is given in Appendix~\ref{proof:theorem:solution:semidefinite}}.
\end{proof}
The two semidefinite representations of $R^{\na}(D)$ in Lemma \ref{theorem:solution:semidefinite} gives the flexibility of working with different assumptions on matrices $A$ and $B$. The first semidefinite representation of $R_{\text{GM}}^{\na}(D)$ is suitable to evaluate a  $\mathbb{R}^p$-valued Gauss-Markov source of any dimension $p$. However, it excludes the possibility of evaluating $\mathbb{R}^p$-valued Gauss-Markov sources where matrix $B$ is singular. In this case, the second representation of $R_{\text{GM}}^{\na}(D)$ is the suitable one. However, even for this case, the restriction is that matrix $A$ has to be full rank. {These conditions on matrices $A$ and $B$ essentially guarantee existence of a solution with finite value (by construction) for the asymptotic limit (cf. Remark \ref{remark:2}).} 
\par{The next theorem is the main result of this section. We provide an equivalent alternative realization to \eqref{optimal_realization_asymptotic}. This realization reveals the reverse-waterfilling solution (in dimension) that characterizes the optimization problem \eqref{alternative:optimization:inf}. To the best of our knowledge this approach is new and has never been documented elsewhere. 
\begin{theorem}(Equivalent realization scheme to \eqref{optimal_realization_asymptotic})\label{theorem:alter:real:waterfilling}{\ \\}
An equivalent realization scheme to \eqref{optimal_realization_asymptotic} is the following:
\begin{align}
{\bf y}_t=E^{-1}\tilde{H}E{\bf x}_t+(I-E^{-1}\tilde{H}E)A{\bf y}_{t-1}+E^{-1}\Theta{\bf v}_t,\label{optimal:realization:diagonal}
\end{align} 
where ${\bf v}_t\sim{\cal N}(0;I),~I\in\mathbb{R}^{p\times{p}}$, $E\in\mathbb{R}^{p\times{p}}$ is a non-singular matrix that simultaneously diagonalizes $\Pi\succ{0},~\Lambda\succ{0}$, such that
\begin{align}
\tilde{\Lambda}\triangleq{E}\Lambda{E}\T\equiv\diag(\mu_{\Lambda,i}),~~\tilde{\Pi}\triangleq{E}\Pi{E}\T\equiv\diag(\mu_{\Pi,i}),\label{diagonalized_lambdas_deltas}
\end{align}
with $\mu_{\Lambda,1}\geq\mu_{\Lambda,2}\geq\ldots\geq\mu_{\Lambda,p}$ and $\mu_{\Pi,1}\geq\mu_{\Pi,2}\geq\ldots\geq\mu_{\Pi,p}$; the reverse-waterfilling design matrix $\tilde{H}\in\mathbb{R}^{p\times{p}}$ is defined as:
\begin{subequations}\label{scalings:no_waterfilling}
\begin{align}
&\tilde{H}=I-\tilde{\Pi}\tilde{\Lambda}^{-1}\equiv{\Theta}{\Phi},\label{design_parameters_no_waterfilling2}\\
&\Theta={\tilde{\Sigma}_{\bf v}}^{\frac{1}{2}},~\tilde{\Sigma}_{\bf v}=\tilde{\Pi}\tilde{H},~\Theta\in\mathbb{R}^{p\times{p}},~\label{scaling_no_waterfilling1}\\
&\Phi=({\tilde{H}\tilde{\Pi}^{-1}})^{\frac{1}{2}},~\Phi\in\mathbb{R}^{p\times{p}}.~\label{scaling_no_waterfilling2}
\end{align} 
\end{subequations}
\paragraph{Full-rank $\tilde{H}$} If $\tilde{H}\succ{0}$, i.e., $\tilde{H}\in\mathbb{R}^{p\times{p}}$, where $r\triangleq\rank(\tilde{H})=p$, then, no reverse-waterfilling occurs (in dimension) and $\Theta\succ{0}$, $\Phi\succ{0}$.
\paragraph{Rank-deficient $\tilde{H}$} If $\tilde{H}\succeq{0}$, i.e., $\tilde{H}\in\mathbb{R}^{p\times{p}}$,~where $r<p$, then, the reverse-waterfilling kicks in and $\Theta\succeq{0}$, $\Phi\succeq{0}$.
\end{theorem}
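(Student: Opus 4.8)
The plan is to show that the linear recursion \eqref{optimal:realization:diagonal} generates the same jointly Gaussian process $\{({\bf x}_t,{\bf y}_t)\}$ as \eqref{optimal_realization_asymptotic}. Both are first-order linear recursions driven by the common source $\{{\bf x}_t\}$ and by an independent Gaussian innovation, so it suffices to verify that (i) the matrix multiplying ${\bf x}_t$ coincides, whence the matrix multiplying $A{\bf y}_{t-1}$ also coincides, and (ii) the additive Gaussian driving term has the same covariance; matching these first- and second-order descriptions forces equality in distribution. The starting point is the joint diagonalizer: since $\Pi\succ 0$ and $\Lambda\succ 0$ are symmetric (by the hypotheses of Theorem \ref{theorem:asymptotic_limit} together with Remark \ref{remark:2}), the standard simultaneous-congruence theorem furnishes a nonsingular $E$ with $\tilde{\Pi}=E\Pi E\T$ and $\tilde{\Lambda}=E\Lambda E\T$ both diagonal, which is exactly \eqref{diagonalized_lambdas_deltas}.

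For step (i), I would substitute $\tilde{H}=I-\tilde{\Pi}\tilde{\Lambda}^{-1}$ together with $\tilde{\Lambda}^{-1}=(E\T)^{-1}\Lambda^{-1}E^{-1}$ into the conjugated matrix and telescope the $E^{-1}E$ and $E\T(E\T)^{-1}$ factors:
\[
E^{-1}\tilde{H}E=I-E^{-1}(E\Pi E\T)(E\T)^{-1}\Lambda^{-1}E^{-1}E=I-\Pi\Lambda^{-1}=H,
\]
where the last equality is \eqref{scalings:1:inf}. Consequently the coefficient on ${\bf x}_t$ reduces to $H$ and the coefficient on $A{\bf y}_{t-1}$ reduces to $I-H$, so \eqref{optimal:realization:diagonal} matches \eqref{optimal_realization_asymptotic} term by term.

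The load-bearing step is (ii). Writing $\Sigma_{\bf v}=\Pi H\T=\Pi-\Pi\Lambda^{-1}\Pi$ from \eqref{scalings:2:inf} and conjugating by $E$, the relation $\Lambda^{-1}=E\T\tilde{\Lambda}^{-1}E$ yields $E\Sigma_{\bf v}E\T=\tilde{\Pi}-\tilde{\Pi}\tilde{\Lambda}^{-1}\tilde{\Pi}$, whereas the diagonal definition \eqref{scaling_no_waterfilling1} gives $\tilde{\Sigma}_{\bf v}=\tilde{\Pi}\tilde{H}=\tilde{\Pi}-\tilde{\Pi}^2\tilde{\Lambda}^{-1}$. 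These two expressions agree precisely because $\tilde{\Pi}$ and $\tilde{\Lambda}$, being diagonal, \emph{commute}, so that $\tilde{\Lambda}^{-1}\tilde{\Pi}=\tilde{\Pi}\tilde{\Lambda}^{-1}$; hence $\tilde{\Sigma}_{\bf v}=E\Sigma_{\bf v}E\T$, i.e. $\Sigma_{\bf v}=E^{-1}\tilde{\Sigma}_{\bf v}(E^{-1})\T$. Since $\Theta=\tilde{\Sigma}_{\bf v}^{1/2}$ is diagonal we have $\Theta\Theta\T=\tilde{\Sigma}_{\bf v}$, and therefore the driving term $E^{-1}\Theta{\bf v}_t$ with ${\bf v}_t\sim{\cal N}(0;I)$ has covariance $E^{-1}\tilde{\Sigma}_{\bf v}(E^{-1})\T=\Sigma_{\bf v}$, matching the innovations of \eqref{optimal_realization_asymptotic}. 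Combining (i) and (ii) establishes the claimed equivalence.

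It remains to justify the factorization and the rank dichotomy. Since $\Pi\succ 0$ implies $\tilde{\Pi}\succ 0$, the inverse $\tilde{\Pi}^{-1}$ in \eqref{scaling_no_waterfilling2} is well defined, and because $\Theta=(\tilde{\Pi}\tilde{H})^{1/2}$ and $\Phi=(\tilde{H}\tilde{\Pi}^{-1})^{1/2}$ are diagonal, a per-entry computation yields $\Theta\Phi=\tilde{H}$, which is \eqref{design_parameters_no_waterfilling2}. When $\tilde{H}\succ 0$ every diagonal entry $1-\mu_{\Pi,i}/\mu_{\Lambda,i}$ is strictly positive, so $\Theta\succ 0$, $\Phi\succ 0$ and no dimension is switched off; when $\tilde{H}\succeq 0$ is rank-deficient the vanishing entries are exactly the inactive dimensions selected by the reverse-waterfilling, where both the allocated noise and the contributed rate are zero, so $\Theta\succeq 0$, $\Phi\succeq 0$ and the recursion degenerates consistently on the active subspace $r=\rank(\tilde{H})<p$. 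The main obstacle is not any single computation but keeping the congruence bookkeeping consistent and, above all, recognizing that passing to the simultaneously-diagonal coordinates makes $\tilde{\Pi}$ and $\tilde{\Lambda}$ commute: this commutativity is what turns the a priori non-symmetric-looking product $\Pi H\T$ into the symmetric diagonal $\tilde{\Pi}\tilde{H}$ and lets the whitened noise $E^{-1}\Theta{\bf v}_t$ reproduce $\Sigma_{\bf v}$ exactly, while the rank-deficient case requires the extra care of interpreting the recursion on the active subspace.
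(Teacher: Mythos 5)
Your proposal is correct and follows essentially the same route as the paper's Appendix B: a joint congruence diagonalizer $E$ for $(\Pi,\Lambda)$, the conjugation identities $H=E^{-1}\tilde{H}E$ and $\Sigma_{\bf v}=E^{-1}\tilde{\Sigma}_{\bf v}(E^{-1})\T$ (your explicit verification of the latter, hinging on the commutativity of the diagonal $\tilde{\Pi}$ and $\tilde{\Lambda}$, is exactly the step the paper leaves implicit), the entrywise factorization $\tilde{H}=\Theta\Phi$, and the full-rank versus rank-deficient dichotomy. The only cosmetic difference is that the paper constructs $E=\tilde{\Pi}^{\frac{1}{2}}V\Pi^{-\frac{1}{2}}$ explicitly via the cogredient diagonalization of \cite[Theorem 8.3.1]{bernstein2011}, whereas you invoke the simultaneous-congruence theorem abstractly.
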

\begin{proof}
See Appendix \ref{proof:theorem:alter:real:waterfilling}.
\end{proof}
}

{The realization scheme of Theorem \ref{theorem:alter:real:waterfilling}, \eqref{optimal:realization:diagonal} is illustrated in Fig. \ref{fig:noisy_communication_system}. Next, we briefly discuss the basic features of this scheme.}
\begin{figure*}[htp]
\centering
\includegraphics[width=\textwidth]{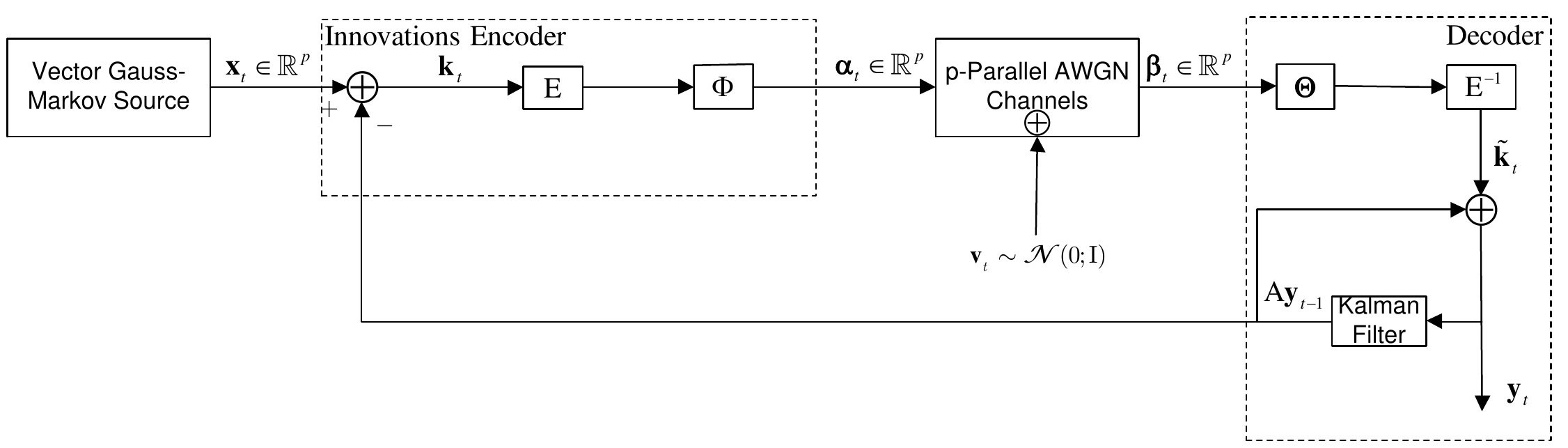}
\caption{Asymptotic feedback realization scheme of the optimal ``test-channel'' ${\bf P}^{\gp}(y_t|y_{t-1},x_t)$.}\label{fig:noisy_communication_system}
\end{figure*}
{\paragraph*{Realization scheme of Fig. \ref{fig:noisy_communication_system}} The {\it encoder} does not directly transmit the $\mathbb{R}^{p}-$valued Gauss-Markov process ${\bf x}_t$. Rather, it conveys the deviation from the linear estimate $\widehat{x}_{t|t-1}=A{\bf y}_{t-1}$ (see Theorem \ref{theorem:alternative_optimization}, \eqref{sim}) of ${\bf x}_t$ denoted by $\{{\bf k}_t\in\mathbb{R}^p:~{t\in\mathbb{N}_0}\}$, where ${\bf k}_t\triangleq{\bf x}_t-\widehat{\bf x}_{t|t-1}\sim{\cal N}(0;\Lambda)$. This method is known in least-squares estimation theory as {\it innovations approach} \cite{kailath1968} hence the encoder is an {\it innovations encoder}. The ``error'' process ${\bf k}_t$ has correlated temporal and spatial Gaussian components. However, by introducing the non-singular matrix $E$ we create $p$ independent spatial Gaussian components which are then scaled by the reverse-waterfilling design matrix $\Phi$ to create an independent (in dimension) Gaussian process $\{{\bm \alpha}_t\in\mathbb{R}^p:~t\in\mathbb{N}_0\}$. The resulting parallel Gaussian process is then conveyed through $p$-parallel $\awgn$ channels. This is compactly written as follows: 

\begin{align}
{\bm \beta}_t\triangleq{\bm \alpha}_t+{\bf v}_t,~~{\bf v}_t\sim{\cal N}(0;I),~t\in\mathbb{N}_0. \label{awgn:eq.1}
\end{align}
The output process $\{{\bm \beta}_t:~t\in\mathbb{N}_0\}$ is then scaled with another reverse-waterfilling design matrix $\Theta$. Afterwards, the invertible linear operator $E^{-1}$ is introduced to transform the independent spatial Gaussian components to correlated spatial components. At the {\it decoder} the resulting process is the innovations process $\{{\bf \tilde{k}}_t:~{t\in\mathbb{N}_0}\}$ on which we add $\widehat{x}_{t|t-1}=A{\bf y}_{t-1}$, to obtain the {\it estimate} $\widehat{x}_{t|t}={\bf y}_t$.
\paragraph*{Reverse-waterfilling} If Theorem \ref{theorem:alter:real:waterfilling}, a) occurs, then, $\tilde{H}\succ{0}$. This means that all dimensions are active, therefore, no reverse-waterfilling occurs (full rank case of $\Theta,~\Phi$ with $r=p$). If Theorem \ref{theorem:alter:real:waterfilling}, b) occurs, then, $\tilde{H}\succeq{0}$. This in turn means that the reverse-waterfilling (in dimension) kicks in and as a result certain dimensions are inactive, i.e., they convey zero-rate (rank-deficient case of $\Theta,~\Phi$ with $r<p$, where $p-r$ are the dimensions with zero information rate).}
\par {The following corollary demonstrates certain data processing equalities of the feedback realization scheme of Fig. \ref{fig:noisy_communication_system}. These are slightly different depending on whether the reverse-waterfilling kicks in.
\begin{corollary}(Data processing equalities)\label{corollary:dpe}{\ \\}
Consider the realization scheme of Fig. \ref{fig:noisy_communication_system}. Then the following data processing equalities hold.\\
{(1)} If $\Phi\succ{0},~\Theta\succ{0}$, then, 
\begin{align}
I({\bf x}_t;{\bf y}_t|{\bf y}_{t-1})\stackrel{(a)}=I({\bf k}_t;\tilde{\bf k}_t|{\bf y}_{t-1})\stackrel{(b)}=I({\bf k}_t;\tilde{\bf k}_t)\stackrel{(c)}=I({\bm \alpha}_t;{\bm \beta}_t),\label{dpe:eq1}
\end{align}
where ${\bf x}_t\in\mathbb{R}^p, {\bf y}_t\in\mathbb{R}^p, {\bf k}_t\in\mathbb{R}^p,\tilde{\bf k}_t\in\mathbb{R}^p,~{\bm \alpha}_t\in\mathbb{R}^p,~{\bm \beta}_t\in\mathbb{R}^p$.\\
{(2)} If $\Phi\succeq{0},~\Theta\succeq{0}$, then, \eqref{dpe:eq1} holds if ${\bf x}_t\in\mathbb{R}^r, {\bf y}_t\in\mathbb{R}^r, {\bf k}_t\in\mathbb{R}^r,\tilde{\bf k}_t\in\mathbb{R}^r,~{\bm \alpha}_t\in\mathbb{R}^r,~{\bm \beta}_t\in\mathbb{R}^r$, i.e., when the $p-r$ (inactive) dimensions are excluded from the realization scheme.
\end{corollary}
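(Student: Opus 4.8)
The plan is to reduce all three equalities in \eqref{dpe:eq1} to two elementary facts: (i) (conditional) mutual information is invariant under (conditionally) bijective deterministic transformations of either argument, and (ii) in the jointly Gaussian model the prediction error ${\bf k}_t={\bf x}_t-\widehat{\bf x}_{t|t-1}$ is \emph{statistically independent} of the entire past reproduction ${\bf y}^{t-1}$. First I would record the algebraic identities that tie the six processes together. Subtracting $A{\bf y}_{t-1}$ from both sides of \eqref{optimal:realization:diagonal} and using Theorem~\ref{theorem:alternative_optimization}, \eqref{sim} (so that $\widehat{\bf x}_{t|t-1}=A{\bf y}_{t-1}$) gives
\begin{align}
\tilde{\bf k}_t={\bf y}_t-A{\bf y}_{t-1}=E^{-1}\tilde{H}E\,{\bf k}_t+E^{-1}\Theta\,{\bf v}_t,\qquad {\bf k}_t={\bf x}_t-A{\bf y}_{t-1}.\nonumber
\end{align}
Inserting the factorization $\tilde{H}=\Theta\Phi$ from \eqref{design_parameters_no_waterfilling2} and setting ${\bm\alpha}_t=\Phi E{\bf k}_t$ together with ${\bm\beta}_t={\bm\alpha}_t+{\bf v}_t$ (consistent with \eqref{awgn:eq.1}) yields the compact relations $\tilde{\bf k}_t=E^{-1}\Theta\,{\bm\beta}_t$ and ${\bm\alpha}_t=\Phi E\,{\bf k}_t$, which expose the underlying chain ${\bf k}_t\to{\bm\alpha}_t\to{\bm\beta}_t\to\tilde{\bf k}_t$.

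Next I would dispatch equalities $(a)$ and $(c)$, which are pure invertible-map statements. For $(a)$, conditioning on ${\bf y}_{t-1}=y_{t-1}$ turns ${\bf k}_t={\bf x}_t-Ay_{t-1}$ and $\tilde{\bf k}_t={\bf y}_t-Ay_{t-1}$ into fixed translations of ${\bf x}_t$ and ${\bf y}_t$, respectively; since a translation is bijective, the conditional mutual information is unchanged for every value of $y_{t-1}$, and averaging over ${\bf y}_{t-1}$ gives $I({\bf x}_t;{\bf y}_t|{\bf y}_{t-1})=I({\bf k}_t;\tilde{\bf k}_t|{\bf y}_{t-1})$. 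For $(c)$, under the hypothesis $\Phi\succ0$, $\Theta\succ0$ (and $E$ non-singular), the maps ${\bf k}_t\mapsto{\bm\alpha}_t=\Phi E{\bf k}_t$ and ${\bm\beta}_t\mapsto\tilde{\bf k}_t=E^{-1}\Theta{\bm\beta}_t$ are both linear bijections, so applying invariance to each argument separately gives $I({\bf k}_t;\tilde{\bf k}_t)=I({\bm\alpha}_t;\tilde{\bf k}_t)=I({\bm\alpha}_t;{\bm\beta}_t)$.

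The crux is equality $(b)$, where the conditioning on ${\bf y}_{t-1}$ must be removed. Here I would invoke fact~(ii): because $\{({\bf x}_t,{\bf y}_t)\}$ is jointly Gaussian and ${\bf k}_t$ is the error of the conditional-mean predictor $\widehat{\bf x}_{t|t-1}=\mathbb{E}\{{\bf x}_t|{\bf y}^{t-1}\}$, the orthogonality principle together with Gaussianity upgrades orthogonality to full independence, so ${\bf k}_t$ is independent of ${\bf y}^{t-1}$ and in particular of ${\bf y}_{t-1}$. Since ${\bf v}_t$ is independent of $\{{\bf w}_s\}$ and ${\bf x}_0$, and hence of $({\bf k}_t,{\bf y}^{t-1})$, the pair $({\bf k}_t,{\bf v}_t)$ is jointly independent of ${\bf y}_{t-1}$; as $\tilde{\bf k}_t$ is a deterministic function of $({\bf k}_t,{\bf v}_t)$, the pair $({\bf k}_t,\tilde{\bf k}_t)$ is independent of ${\bf y}_{t-1}$. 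Conditioning on an independent variable leaves mutual information unchanged, giving $I({\bf k}_t;\tilde{\bf k}_t|{\bf y}_{t-1})=I({\bf k}_t;\tilde{\bf k}_t)$ and completing part~(1). I expect this independence step to be the only genuinely non-routine part of the argument.

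Finally, for part~(2) I would reduce to part~(1) on the active subspace. When $\tilde{H}$ is rank deficient, \eqref{scaling_no_waterfilling1}--\eqref{scaling_no_waterfilling2} force $\Theta$ and $\Phi$ to vanish on the $p-r$ inactive coordinates: there $\mu_{\Pi,i}=\mu_{\Lambda,i}$, so the $i$-th diagonal entry of $\tilde{H}=I-\tilde{\Pi}\tilde{\Lambda}^{-1}$ is zero, whence the corresponding entries of $\Theta=(\tilde{\Pi}\tilde{H})^{\frac{1}{2}}$ and $\Phi=(\tilde{H}\tilde{\Pi}^{-1})^{\frac{1}{2}}$ are zero. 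Consequently ${\bm\alpha}_t$ has no component along the inactive coordinates and $\tilde{\bf k}_t$ receives no signal there, so those coordinates contribute nothing to any of the mutual informations in \eqref{dpe:eq1}. Deleting them leaves an $r$-dimensional system in which the restricted $\Phi,\Theta$ are non-singular, and the identical chain of arguments $(a)$--$(c)$ applies verbatim, yielding \eqref{dpe:eq1} with all vectors taken in $\mathbb{R}^r$.
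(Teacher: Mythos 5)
Your proposal is correct and follows essentially the same route as the paper's proof: $(a)$ via measurability of $A{\bf y}_{t-1}$ with respect to $\sigma({\bf y}_{t-1})$, $(b)$ via the relation of the error/innovations processes to ${\bf y}_{t-1}$, $(c)$ via invertibility of $E,\Phi,\Theta$, and part (2) by deleting the inactive coordinates. The only difference is that you make step $(b)$ more rigorous than the paper, which merely invokes ``orthogonality'' to ${\bf y}_{t-1}$, whereas you correctly upgrade this to joint independence of the pair $({\bf k}_t,\tilde{\bf k}_t)$ from ${\bf y}_{t-1}$ via Gaussianity and the independence of ${\bf v}_t$ — which is what the equality actually requires.
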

\begin{proof}
(1) This holds in the absence of the reverse-waterfilling in dimension. Equality $(a)$ holds because ${\bf k}_t={\bf x}_t-A{\bf y}_{t-1}$ and $\tilde{k}_t={\bf y}_t-A{\bf y}_{t-1}$ where $A{\bf y}_{t-1}$ is measurable with respect to the $\sigma-$algebra (information) generated by ${\bf y}_{t-1}$; equality $(b)$ holds because the error process ${\bf k}_t$ and the innovations process $\tilde{\bf k}_t$ are orthogonal to ${\bf y}_{t-1}$; equality $(c)$ holds because ($E,~\Phi, \Theta$) are invertible matrices. (2) This holds if the reverse-waterfilling in dimension kicks in. Equalities $(a), (b)$ hold similarly to (1). Equality $(c)$ holds if $\Phi$ and $\Theta$ are full rank. This is established if we ``remove'' from the system the $p-r$ (inactive) dimensions that convey zero rate (because of the reverse-waterfilling). This completes the proof.
\end{proof}
} 
\par In the next remark, we show how to recover from our realization scheme the closed form expression of $R_{\text{GM}}^{\na}(D)$ that corresponds to a time-invariant scalar-valued Gaussian $\ar(1)$ source. This result appeared in many papers, see for instance, \cite{gorbunov-pinsker1974,tatikonda:2004,derpich:2012,stavrou-charalambous-charalambous:2018lcss} but we include it here for completeness.
\begin{remark}(Scalar-valued Gaussian $\ar(1)$ process)\label{remark:3}{\ \\}
Consider the scalar case of \eqref{state_space_model} with $A=\alpha\in\mathbb{R}$  and $B{\bf w}_t\sim{\cal N}(0;\sigma^2_{\bf w})$. This special case applied in Theorem \ref{theorem:alter:real:waterfilling} corresponds to $E=1$, $\Pi=\tilde{\Pi}=D$ and $\Lambda=\alpha^2D+\sigma^2_{\bf w}$. This means that \eqref{alternative:optimization:inf} becomes
\begin{equation}
{R}_{\text{GM}}^{\na}(D)=\frac{1}{2}\log\left(\frac{\alpha^2{D}+\sigma^2_{\bf w}}{D}\right)=\frac{1}{2}\log\left(\alpha^2+\frac{\sigma^2_{\bf w}}{D}\right).\label{optimal_solution:scalar:eq.1}
\end{equation}
For a stationary stable source, i.e., source where $|a|<1$, it was shown in \cite{gorbunov-pinsker1974,derpich:2012} that \eqref{optimal_solution:scalar:eq.1} has $D_{\max}=\frac{\sigma^2_{\bf w}}{1-\alpha^2}$. 
\end{remark}

%
%
%
%

\section{Upper Bounds on {Problem \ref{problem1}} via {Problem \ref{problem2}}}\label{sec:main_results}

\par In this section, we use the asymptotic feedback realization scheme of Fig.~\ref{fig:noisy_communication_system} {that corresponds to the optimal solution of Problem \ref{problem2}}, to construct a coding scheme with achievable upper bounds to {Problem \ref{problem1}}. The coding scheme employs joint entropy coded dithered quantization ($\ecdq$) and lattice codes. Further, we explain how to extend this result to time-invariant {vector-valued} Gauss-Markov processes of any finite order. The obtained theoretical upper bound is compared to existing bounds in the literature.

\subsection{Scalar Quantization}
\label{sec:upper_bounds_zero_delay_codes}

\par To achieve our goal, we invoke a universal quantization scheme based on a subtractive dither with uniform scalar quantization ($\sdusq$) \cite{zamir:2014} on the feedback realization scheme illustrated in Fig.~\ref{fig:noisy_communication_system}. The $\sdusq$ approach is common in the literature. However, it has never been documented elsewhere for the proposed realization setup. Some recent works where the use of $\sdusq$ is demonstrated under various setups, are found in \cite{derpich:2012,yuksel:2014,tanaka:2016,rabi:2016}. 

\par Before we proceed, we state the definition of a uniform scalar quantizer with subtractive dither \cite{zamir:2014}.
\par A scalar quantizer function is defined as 
\begin{align}
Q_\Delta({\bf x})=j\Delta~~\mbox{for}~~j\Delta-\frac{\Delta}{2}\leq{{\bf x}}\leq{j}\Delta+\frac{\Delta}{2},~j\in\mathbb{Z},\nonumber
\end{align}
where $\Delta>0$ is the quantization step which is freely designed by the designer. 
A scalar universal uniform quantizer with subtractive dither $Q^{\sd}_\Delta({\bf x})$ is defined as 
\begin{align}
Q^{\sd}_\Delta({\bf x})=Q_\Delta({\bf x}+{\bf q})-{\bf q}, \label{def:scalar_uniform_dithered_new}
\end{align}
where ${\bf q}$ is the realization of a uniformly distributed random variable ${\bf R}$ over the interval $[-\frac{\Delta}{2},\frac{\Delta}{2}]$. 
\par The execution of $Q^{\sd}_\Delta(\cdot)$ requires a common randomness both at the encoder's and the decoder's ends. In practice, the dither ${\bf r}$ acts as a synchronized pseudo-random noise generator that can be used at both encoder and decoder's end.

\subsubsection*{Componentwise uniform scalar quantization}\label{subsection:componentwise_quantization}  

\par Next, we use the asymptotic feedback realization scheme illustrated in Fig. \ref{fig:noisy_communication_system} to design an efficient \texttt{\{encoder/quantizer,decoder\}} pair. This procedure is described next.
\par We select the quantizer step size $\Delta$ so that the covariance of the resulting quantization error meets {the covariance of the Gaussian noise ${\bf v}_t$, i.e.,  $\Sigma_{\bf v}=I$}. {Recall that the encoder in Fig. \ref{fig:noisy_communication_system} is an innovations quantizer in the sense that it does not quantize the observed state ${\bf x}_t$ directly. Instead, it quantizes the error process ${\bf k}_t$}. In itself, innovation quantizer is known to be optimal in various setups, see for instance \cite{yuksel:2014,rabi:2016}. However, the novelty here lies in being able to
bound the performance of this particular realization scheme. {Note that in what follows, we consider $\mathbb{R}^r-$valued processes to present a coding scheme that takes into account the effect of the spatial reverse-waterfilling. This means that if $r=p$, then, there is no reverse-waterfilling (i.e, $\Phi\succ{0},~\Theta\succ{0}$) whereas when $r<p$ it means that the reverse-waterfilling kicks in (i.e., $\Phi\succeq{0},~\Theta\succeq{0}$) (see \S\ref{sec:new_realization_scheme}).}   
\par  We consider the zero-delay source coding setup illustrated in Fig.~\ref{fig:noisy_communication_system} with the $r-$parallel $\awgn$ channel replaced by $r$ independently operating $\sdusq$. This change is illustrated in Fig.~\ref{fig:replacement}. Note that, all {the designed matrices} adopted in Fig. \ref{fig:noisy_communication_system} still hold when the aforementioned change is applied.

\begin{figure}
    \centering
    \begin{subfigure}[b]{0.52\columnwidth}
        \includegraphics[width=\textwidth]{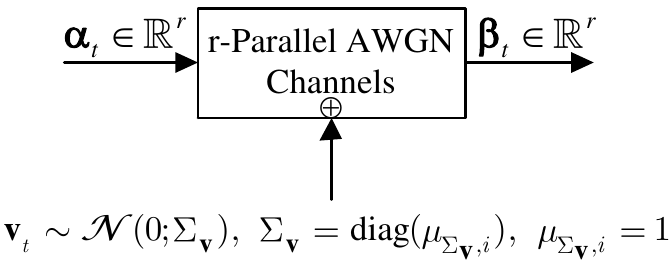}         
        \caption{$r-$parallel $\awgn$ channels.}
        \label{fig:awgn}
    \end{subfigure}
    \centering
        \begin{subfigure}[b]{\columnwidth}
        \includegraphics[width=\textwidth]{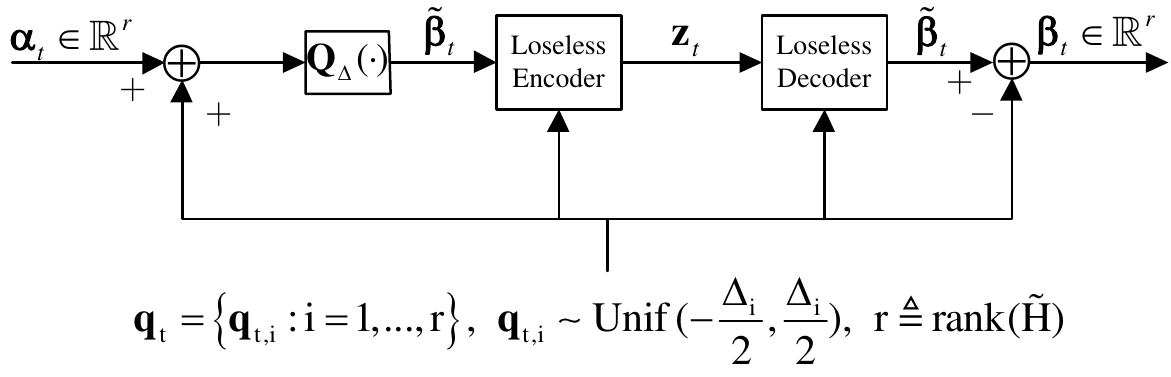}
        \caption{Realization over $r$ independently operating $\sdusq$.}
        \label{fig:sdusq}
    \end{subfigure}
\caption{Componentwise uniform scalar quantization by replacing $r-$ parallel $\awgn$ channels with $r$ independently operating $\sdusq$.}\label{fig:replacement}
\end{figure}
\par {Toward this end, for each time step $t$, the input to the quantizer, is the  scaled estimation error process $\{{\bm \alpha}_t\in\mathbb{R}^r:~t\in\mathbb{N}_0\}$ with $r$-independent spacial components which instead to the $r-$parallel $\awgn$ channels is conveyed (following Fig. \ref{fig:replacement}) through $r$ independently operating $\sdusq$. The goal it to design the covariance matrix $\Sigma_{\bf v}$ of the $\awgn$ corresponding to the $r$-parallel $\awgn$ channels in such a way, that for each $t$, each diagonal entry of $\mu_{\Sigma_{\bf v},i}=1, \forall{i}=1,\ldots,r$, i.e., $\Sigma_{\bf v}\triangleq{I}$ to correspond to a quantization step size $\Delta_i,~ i=1,\ldots,r$, such that} 
\begin{align}
{\mu_{\Sigma_{\bf v},i}}=\frac{\Delta^2_i}{12},~i=1,\ldots,r,~~D=\sum_{i=1}^r\frac{\Delta^2_i}{12},~{r\leq{p}}.\label{quantization_step}
\end{align}
This results into creating a multi-input multi-output ($\mimo$) transmission of parallel and independent $\sdusq$. We apply $\sdusq$ to each component of ${\bm \alpha}_t$, i.e.,
\begin{align}
{\bm \beta}_{t,i}=Q^{\sd}_{\Delta_i}({\bm \alpha}_{t,i}),~i=1,\ldots,{r},\label{output_quantizer}
\end{align}
and we let ${\bf q}_t$ be the $\mathbb{R}^{{r}}-$valued random process of dither signals whose individual components  $\{{\bf q}_{t,1},\ldots,{\bf q}_{t,{r}}\}$ are mutually independent and uniformly distributed $\rvs$ ${\bf q}_{t,i}\sim{\unif}(-\frac{\Delta_i}{2},\frac{\Delta_i}{2})$ independent of the corresponding source input components ${\bm{\alpha}}_{t,i},~\forall{t,i}$. The output of the quantizer is given by
\begin{align}
\tilde{\bm \beta}_{t,i}=Q_{\Delta_i}({\bm \alpha}_{t,i}+{\bf q}_{t,i}),~i=1,\ldots,{r}.\label{output_quantizer1} 
\end{align}
Note that $\tilde{\bm \beta}_{t}=\{\tilde{\bm \beta}_{t,1},\ldots,\tilde{\bm \beta}_{t,{r}}\}$ can take a countable number of possible values. In addition, by construction (see Fig.~\ref{fig:noisy_communication_system}), the sequences $\{{\bm \alpha}_t:~t\in\mathbb{N}_0\}$ and $\{\tilde{\bm \beta}_{t}:~t\in\mathbb{N}_0\}$ are not Gaussian any more since by applying the change illustrated in Fig.~\ref{fig:replacement}, $\{{\bm \alpha}_t:~t\in\mathbb{N}_0\}$ and $\{\tilde{\bm \beta}_{t}:~t\in\mathbb{N}_0\}$ contain samples of the uniformly distributed process $\{{\bm q}_t:~t\in\mathbb{N}_0\}$. As a result, the Kalman filter in Fig.~\ref{fig:noisy_communication_system} is no longer the least mean square estimator since the obtained quantized signals are no longer Gaussian.

\par For completeness, we illustrate in Fig.~\ref{fig:equivalent_additive_uniform_noise_channel} the relation between a $\sdusq$ and a uniform scalar additive noise channel \cite{zamir:2014}. This connection is further discussed in Remark \ref{remark:connection}.
\begin{remark}($\sdusq$ modeled as a uniform scalar additive noise channel)\label{remark:connection}{\ \\}
It is well known (see, e.g., \cite{zamir:2014}) that the $\sdusq$ considered in the realization of Fig.\ref{fig:noisy_communication_system} can be modeled as a uniform scalar noise channel. In both models, the source ${\bf x}_t$ and the output ${\bf y}_t$ possess the same statistics. The only difference in the realization scheme of Fig.~\ref{fig:noisy_communication_system} when the channels are $\awgn$ channels and when the channels are uniform additive noise channels as in Fig.~\ref{fig:equivalent_additive_uniform_noise_channel} lies on the statistics of the noise  process $\{{\bf v}_t:~t\in\mathbb{N}_0\}$ and $\{{\bm \xi}_t:~t\in\mathbb{N}_0\}$ respectively. In the latter case, the quantization noise is given by
\begin{align}
{\bm \xi}_t={\bm \beta}_t-{\bm \alpha}_t.\label{quantization_noise}
\end{align}
Note that ${\bm \xi}_t$ is orthogonal to ${\bf k}_{t^\prime}$ for any $t^\prime\leq{t}$, where $t^\prime=0,1,\ldots,t$.
\end{remark}
\paragraph*{Entropy coding} In what follows, we apply joint entropy coding across the vector dimension ${r}$ and memoryless coding across the time, that is, at each time step $t$ the output of the quantizer $\tilde{\bm \beta}_{t}$ is conditioned to the dither to generate a codeword ${\bf z}_t$. The decoder reproduces  ${\bm \beta}_{t}$ by subtracting the dithered signal ${{\bf q}_t}$ from $\tilde{\bm \beta}_{t}$. Specifically, at every time step $t$, we require that a message $\tilde{\bm \beta}_t$ is mapped into a codeword ${\bf z}_t\in\{0,1\}^{{\bf l}_t}$ designed using Shannon codes \cite[Chapter 5.4]{cover-thomas2006}. 
Recall that for a $\rv$ ${\bf x}$, the codes constructed based on Shannon coding scheme give an instantaneous (prefix-free) code with expected code length that satisfies the following bound:
\begin{align}
\mathbb{H}({\bf x})\leq\mathbb{E}({\bf l})\leq\mathbb{H}({\bf x})+1.\label{intantaneous_code_bounds}
\end{align}
If ${\bf x}\in\mathbb{R}^{{r}}$ then, the normalized version of \eqref{intantaneous_code_bounds} gives
\begin{align}
\frac{\mathbb{H}({\bf x})}{{r}}\leq\frac{\mathbb{E}({\bf l})}{{r}}\leq\frac{\mathbb{H}({\bf x})}{{r}}+\frac{1}{{r}}.\label{intantaneous_code_bounds:eq.1}
\end{align}
\begin{figure}[htp]
\centering
\includegraphics[width=0.5\columnwidth]{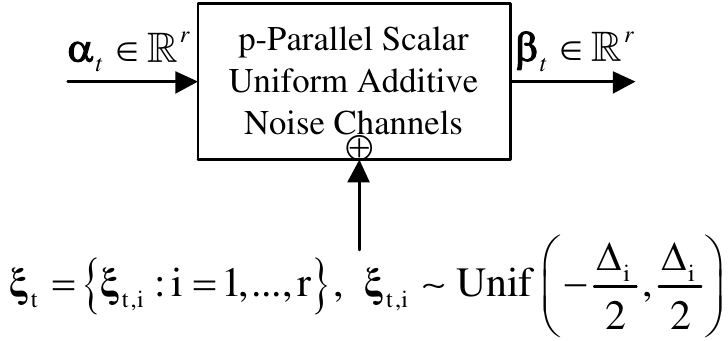}
\caption{An equivalent model to Fig.~\ref{fig:sdusq} based on uniform scalar additive noise channel.}
\label{fig:equivalent_additive_uniform_noise_channel}
\end{figure}
\par In view of the assumption that the uniform scalar quantizer with subtractive dither operates using memoryless entropy coding over time, the following theorem holds.
\begin{theorem}(Upper bound {to Problem \ref{problem1}})\label{theorem:general_bound}{\ \\}
Consider the realization of the zero-delay source-coding scheme illustrated in Fig.~\ref{fig:noisy_communication_system} with the ${r}-$ parallel $\awgn$ channels replaced by ${r}-$parallel independently operating $\sdusq$ illustrated in Fig. \ref{fig:replacement}. If the vector process $\{\tilde{\bm  \beta}_t:~t\in\mathbb{N}_0\}$ of the quantized output is jointly entropy coded conditioned to the dither signal values over spatial components at each time step $t$, then, the operational zero-delay rate, $R^{\op}_{\zd}(D)$, satisfies the following upper bound:
\begin{align}
R^{\op}_{\zd}(D)\leq{R}_{{\text{GM}}}^{\na}(D)+\frac{{r}}{2}\log_2\left(\frac{\pi{e}}{6}\right)+1,~{r\leq{p}},\label{operational_upper_bound:1}
\end{align}
while the $\mse$ distortion achieves the end-to-end  distortion $D$ of the system.
\end{theorem}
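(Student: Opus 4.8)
The plan is to analyze the concrete zero-delay code defined by the componentwise $\sdusq$ realization of Fig.~\ref{fig:replacement} and to bound its expected per-symbol codeword length; since $R^{\op}_{\zd}(D)$ is an infimum over all admissible zero-delay codes achieving distortion at most $D$, the rate of this particular (valid) code is an upper bound on it. First I would apply the Shannon-coding bound \eqref{intantaneous_code_bounds} to the quantized vector $\tilde{\bm\beta}_t$ conditioned on the common dither ${\bf q}_t$, which for every $t$ gives
\begin{align*}
\mathbb{E}({\bf l}_t)\leq\mathbb{H}(\tilde{\bm\beta}_t\mid{\bf q}_t)+1,
\end{align*}
so that, after averaging over time and taking $\limsup_n$, the problem reduces to controlling the (stationary) conditional entropy $\mathbb{H}(\tilde{\bm\beta}_t\mid{\bf q}_t)$.

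The crux is the entropy-coded dithered-quantization identity \cite{zamir:2014}: because the dither is subtractive and satisfies the Schuchman condition, the reconstruction error ${\bm\xi}_t={\bm\beta}_t-{\bm\alpha}_t$ of \eqref{quantization_noise} is uniform on the quantization cells and independent of the input ${\bm\alpha}_t$, and the conditional output entropy equals the mutual information of the equivalent uniform additive-noise channel of Fig.~\ref{fig:equivalent_additive_uniform_noise_channel},
\begin{align*}
\mathbb{H}(\tilde{\bm\beta}_t\mid{\bf q}_t)=I({\bm\alpha}_t;{\bm\beta}_t),\qquad {\bm\beta}_t={\bm\alpha}_t+{\bm\xi}_t.
\end{align*}
I would then move from the uniform-noise channel to the $\awgn$ channel \eqref{awgn:eq.1} by a maximum-entropy argument. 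By the step-size choice \eqref{quantization_step} ($\mu_{\Sigma_{\bf v},i}=\Delta_i^2/12=1$), the error ${\bm\xi}_t$ shares the covariance $\Sigma_{\bf v}=I$ of the Gaussian noise ${\bf v}_t$, hence ${\bm\beta}_t$ and the $\awgn$ output ${\bm\alpha}_t+{\bf v}_t$ have a common covariance $\Sigma_{\bm\beta}$. Writing $I({\bm\alpha}_t;{\bm\beta}_t)=h({\bm\beta}_t)-h({\bm\xi}_t)$ and bounding $h({\bm\beta}_t)\leq\tfrac12\log|(2\pi e)\Sigma_{\bm\beta}|$ (Gaussian maximizes entropy for fixed covariance), while the all-Gaussian realization attains this value exactly, gives
\begin{align*}
I({\bm\alpha}_t;{\bm\beta}_t)\leq I({\bm\alpha}_t;{\bm\alpha}_t+{\bf v}_t)+\big(h({\bf v}_t)-h({\bm\xi}_t)\big),
\end{align*}
where $h({\bf v}_t)-h({\bm\xi}_t)=\tfrac{r}{2}\log_2\tfrac{2\pi e}{12}=\tfrac{r}{2}\log_2\tfrac{\pi e}{6}$ is exactly $r$ times the relative entropy between a unit-variance uniform and a unit-variance Gaussian---the $\approx0.254r$ term. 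Only the $r$ active dimensions carry a quantizer, so the $p-r$ reverse-waterfilled dimensions contribute neither rate nor gap (Theorem~\ref{theorem:alter:real:waterfilling}).

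It remains to identify the Gaussian term with the $\nrdf$ and to settle the distortion. By the data-processing equalities of Corollary~\ref{corollary:dpe}, the $\awgn$ mutual information satisfies $I({\bm\alpha}_t;{\bm\alpha}_t+{\bf v}_t)=I({\bf x}_t;{\bf y}_t\mid{\bf y}_{t-1})$ over the active dimensions; averaging these per-step terms over $t$ and invoking the asymptotic characterization \eqref{alternative:optimization:inf} of Theorem~\ref{theorem:asymptotic_limit} yields $R^{\na}_{\text{GM}}(D)=\tfrac12\log\tfrac{|\Lambda|}{|\Pi|}$, while the constant gap and the unit Shannon overhead persist under the Ces\`aro average. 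For the fidelity, the subtractive dither renders the quantization error an additive, input-independent noise whose covariance matches that of ${\bf v}_t$; since the closed-loop recursion \eqref{optimal:realization:diagonal} is linear and the $\mse$ criterion depends only on second-order statistics, the closed-loop error covariance---and therefore the end-to-end distortion $\trace(\Pi)=D$---coincides with that of the $\awgn$ realization (cf. Remark~\ref{remark:connection}). Chaining the three displays and taking the time-averaged $\limsup$ then produces \eqref{operational_upper_bound:1}.

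The main obstacle is establishing the $\ecdq$ identity $\mathbb{H}(\tilde{\bm\beta}_t\mid{\bf q}_t)=I({\bm\alpha}_t;{\bm\beta}_t)$ rigorously \emph{inside} the feedback loop: one must verify that refreshing the dither at each step keeps the reconstruction error uniform and independent of the current input ${\bm\alpha}_t$ despite the memory injected through the Kalman predictor $A{\bf y}_{t-1}$, and that the loss of Gaussianity of the in-loop signals (noted after \eqref{output_quantizer1}) does not invalidate either the single-letter identity or the covariance matching used in the maximum-entropy step. Once these are in place, the remaining inequalities are routine.
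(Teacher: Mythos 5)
Your proposal is correct and follows essentially the same route as the paper's proof: the Shannon-code bound $\mathbb{E}({\bf l}_t)\leq\mathbb{H}(\tilde{\bm\beta}_t|{\bf q}_t)+1$, the ECDQ identity $\mathbb{H}(\tilde{\bm\beta}_t|{\bf q}_t)=I({\bm\alpha}_t;{\bm\beta}_t)$ from \cite[Theorem 5.2.1]{zamir:2014}, a Gaussian-extremality step yielding the additive gap $\mathbb{D}({\bm\xi}_t\|{\bf v}_t)=\tfrac{r}{2}\log_2\tfrac{\pi e}{6}$ (your maximum-entropy bound $h({\bm\beta}_t)\le h({\bm\beta}_t^{\rm G})$ is exactly the paper's dropping of the nonnegative divergence $\mathbb{D}({\bm\alpha}_t+{\bm\xi}_t\|{\bm\alpha}_t^{\rm G}+{\bf v}_t)$), the data-processing equalities of Corollary~\ref{corollary:dpe} to identify the Gaussian term with $I({\bf x}^n;{\bf y}^n)$, and the passage to $R^{\na}_{\text{GM}}(D)$ via Theorem~\ref{theorem:asymptotic_limit}. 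The concerns you flag at the end (validity of the dithered-quantization identity inside the feedback loop, and second-order matching for the distortion) are likewise handled in the paper only by citation and assertion, so they do not mark a gap relative to its argument.
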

\begin{proof}
See Appendix~\ref{proof:theorem:general_bound}.
\end{proof}
\par {Recall that Theorem \ref{theorem:general_bound} provides an upper bound to Problem \ref{problem1} by taking into account the fact that ${R}_{{\text{GM}}}^{\na}(D)$ is characterized by a reverse-waterfilling in dimension. If $r=p$, then $\rank({\tilde{H}})$ is full rank, whereas if $r<p$, then, $\rank({\tilde{H}})$ is rank deficient and to compute we ``remove'' the $p-r$ dimensions with zero rate.}
\par {The previous main result combined with the lower bound of \S\ref{sec:new_realization_scheme}, leads to the following corollary.}
{
\begin{corollary}(General Bounds on Problem \ref{problem1})\label{corollary:bounds_upper_lower}{\ \\}
Consider the coding scheme of \S\ref{sec:upper_bounds_zero_delay_codes}. Then, for $\mathbb{R}^p-$valued Gauss-Markov sources the following bounds hold:
\begin{align}
{R}_{{\text{GM}}}^{\na}(D)\leq{R}^{\op}_{\zd}(D)\leq{R}_{{\text{GM}}}^{\na}(D)+\frac{{r}}{2}\log\left(\frac{\pi{e}}{6}\right)+1,\label{general_bounds:eq.1}
\end{align}
where $r\leq{p}$. 
\end{corollary}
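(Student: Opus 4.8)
The plan is to obtain the two-sided bound by assembling results already proved in the excerpt, since the corollary is a synthesis rather than a fresh computation. First I would establish the lower inequality $R_{\text{GM}}^{\na}(D)\leq R_{\zd}^{\op}(D)$. The starting point is the chain of inequalities in Theorem \ref{theorem:bounds}, which yields $R^{\na}(D)\leq R_{\zd}^{\op}(D)$ for any Gaussian source under an asymptotic $\mse$ constraint. The remaining step is to identify the generic $\nrdf$ $R^{\na}(D)$ with its Gauss--Markov specialization $R_{\text{GM}}^{\na}(D)$. This identification is supplied by the structural discussion preceding Problem \ref{problem2}: because the source \eqref{state_space_model} is Gaussian and first-order Markov, the infimum defining $R^{\na}(D)$ may be restricted, without loss of optimality, to Gaussian test channels of the form ${\bf P}^{\gp}(dy_t|y_{t-1},x_t)$, which is exactly the class over which $R_{\text{GM}}^{\na}(D)$ is defined. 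Hence $R^{\na}(D)=R_{\text{GM}}^{\na}(D)$, and substituting into the chain delivers $R_{\text{GM}}^{\na}(D)\leq R_{\zd}^{\op}(D)$.

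For the upper inequality $R_{\zd}^{\op}(D)\leq R_{\text{GM}}^{\na}(D)+\frac{r}{2}\log(\pi e/6)+1$, I would simply quote Theorem \ref{theorem:general_bound}, which establishes precisely this inequality for the coding scheme of \S\ref{sec:upper_bounds_zero_delay_codes} while simultaneously meeting the asymptotic distortion level $D$. That scheme is built on the asymptotic feedback realization of Fig.~\ref{fig:noisy_communication_system}, which attains $R_{\text{GM}}^{\na}(D)$; the achievable operational rate therefore inherits this quantity plus the $\sdusq$ and Shannon-coding overhead captured by the additive constant, where $r\leq p$ denotes the number of active (nonzero-rate) dimensions selected by the reverse-waterfilling.

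Concatenating the two halves gives the displayed sandwich for all $r\leq p$, completing the argument. I do not anticipate a genuine obstacle here, as both bounds are corollaries of earlier theorems; the only subtle point worth stating explicitly is the collapse $R^{\na}(D)=R_{\text{GM}}^{\na}(D)$, which is the sole place where the Gaussian-and-Markov hypothesis is used and which rests on the test-channel structural result combined with \cite[Theorem 1.8.6]{ihara1993}. One bookkeeping item I would verify is that the distortion constraint is feasible at the same operating point that achieves $R_{\text{GM}}^{\na}(D)$, so that the rate and distortion claims refer to a single coding scheme rather than two separate ones.
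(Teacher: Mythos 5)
Your proposal is correct and follows exactly the paper's own argument: the lower bound is Theorem \ref{theorem:bounds} restricted to the Gauss--Markov source of Problem \ref{problem1} (with the identification $R^{\na}(D)=R_{\text{GM}}^{\na}(D)$ justified by the test-channel structural results), and the upper bound is a direct citation of Theorem \ref{theorem:general_bound}. The extra care you take in spelling out the collapse to Gaussian Markov test channels and the single-operating-point feasibility check is sound but not beyond what the paper itself implicitly assumes.
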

\begin{proof}
This is obtained using Theorem \ref{theorem:bounds} (restricted to the class of Gaussian sources of Problem \ref{problem1}), and Theorem \ref{theorem:general_bound}.
\end{proof}
}
\par In the next remark we explain how the bounds derived in \eqref{general_bounds:eq.1} apply to vector Gauss $\ar$ sources of any order provided, the distortion function is chosen appropriately. {This result generalize the $\mathbb{R}^p-$valued Gauss-Markov source model of Problem \ref{problem1} to any vector-valued Gauss-Markov model of order higher than one.}

\begin{remark}(Generalization)\label{theorem:any_order}{\ \\}
Consider a vector Gaussian $\ar(s)$ process, in state space representation,  where $s$ is a positive integer.
\begin{align}
{\bf x}_{t+1}=\sum_{j=1}^sA_j{\bf x}_{t-j+1}+B{\bf w}_t,\label{vector_ar-p}
\end{align}
where $A_j\in\mathbb{R}^{p\times{p}}$, and $B\in\mathbb{R}^{{p}\times{q}}$ are deterministic matrices, ${\bf x}_0\in\mathbb{R}^{p}\sim{\cal N}(0;\Sigma_{{\bf x}_0})$, and ${\bf w}_t\in\mathbb{R}^q\sim{\cal N}(0;I)$ is an $\IID$ Gaussian sequence, independent of ${\bf x}_0$. Clearly, for $s=1$, \eqref{vector_ar-p} gives as a special case the source model described by \eqref{state_space_model}.\\
Then, \eqref{vector_ar-p} can be expressed as an augmented vector-valued Gauss-Markov process as follows:
\begin{align}
\tilde{\bf x}_{t+1}=\tilde{A}\tilde{\bf x}_{t}+\tilde{B}\tilde{\bf w}_t,\label{vector_ar1}
\end{align}
where $\tilde{A}\in\mathbb{R}^{sp\times{sp}}$, and $\tilde{B}\in\mathbb{R}^{sp\times{sq}}$ are deterministic matrices, $\tilde{\bf x}_0\in\mathbb{R}^{sp}\sim{\cal N}(0;\Sigma_{\tilde{\bf x}_0})$, and $\tilde{\bf w}_t\in\mathbb{R}^{sq}\sim{\cal N}(0;\Sigma_{\tilde{\bf w}_t})$ is an $\IID$ Gaussian sequence, independent of $\tilde{\bf x}_0$. \eqref{vector_ar1} is precisely the source model described by \eqref{state_space_model} under the assumption of full observation of the ``new'' source vector $\tilde{\bf x}_{t+1}\in\mathbb{R}^{sp}$.
\end{remark}
\begin{proof}
See Appendix~\ref{proof:theorem:any_order}.
\end{proof}

\par Clearly, Theorem~\ref{theorem:any_order}  can also be used to the special case where one wants to reformulate a {\it scalar}, i.e., $p=1$, Gaussian $\ar(s)$ process, where $s$ is a positive integer, into a {\it $\mathbb{R}^s$-valued} Gauss-Markov process. 

\subsubsection*{\bf Connections to existing bounds in the literature}\label{subsection:special_cases}

\par In the next remark, we use \eqref{general_bounds:eq.1} to draw connections to existing results in the literature.

\begin{remark}(Special cases)\label{remark:discussion}
\begin{itemize}
\item[(1)] For {\it scalar Gaussian $\ar$ sources}, i.e., {${r}=p=1$ ($\tilde{H}$ is always full rank in this special case, i.e., no reverse-waterfilling kicks in)}, \eqref{operational_upper_bound:1} degenerates to
\begin{align}
R^{\op}_{\zd}(D)\leq{R}_{{\text{GM}}}^{\na}(D)+\frac{1}{2}\log_2\left(\frac{\pi{e}}{6}\right)+1,\label{operational_upper_bound:scalar}
\end{align}
{where ${R}_{{\text{GM}}}^{\na}(D)$ is given by $\eqref{optimal_solution:scalar:eq.1}$.} This result coincides with the bound obtained in \cite[Theorem 7]{derpich:2012}. However, the upper bound in \cite{derpich:2012} is obtained using a realization scheme with four filters instead of only one that we use in our scheme. Moreover, our result holds for unstable Gaussian sources too.
\item[(2)] Compare to \cite{silva:2011}, we use $\ecdq$ based on a different realization setup (without control signals) that results into obtaining different lower and upper bounds. Moreover, \cite{silva:2011} provides an upper bound which holds only for {\it scalar} Gaussian $\ar$ sources (see \cite[Equation (22)]{silva:2011}). 
\end{itemize}
\end{remark} 

\subsection{Vector Quantization}\label{subsection:vector_quantization}

\par It is interesting to observe that if instead of $\sdusq$ we quantize over a vector lattice quantizer followed by memoryless entropy coded conditioned to the dither, then, the upper bound in \eqref{operational_upper_bound:1} becomes
\begin{align}
R^{\op}_{\zd}(D)\leq{R}_{{\text{GM}}}^{\na}(D)+\frac{{r}}{2}\log_2\left({2\pi{e}G_{{r}}}\right)+1,~{r\leq{p}},\label{operational_upper_bound_vector_quantization:1}
\end{align}
where $G_{{r}}$ is the normalized second moment of the lattice \cite{zamir:2014} defined by $G_{{r}}=\frac{1}{{r}}\frac{\mathbb{E}\{||{\bf q}||^2\}}{{\vol}^{\frac{2}{{r}}}},~{r\leq{p}}$, and $\vol$ is the volume of the basic cell of a vector quantizer.
\par If we take the average rate normalized per {total number of dimensions ``p''}, then, \eqref{operational_upper_bound_vector_quantization:1} becomes
\begin{align}
\frac{1}{p}R^{\op}_{\zd}(D)\leq\frac{1}{p}{R}_{\text{GM}}^{\na}(D)+\frac{{r}}{2p}\log_2\left({2\pi{e}G_{{r}}}\right)+\frac{1}{p},~{r\leq{p}}.\label{operational_upper_bound_vector_quantization:111}
\end{align}
Additionally, by assuming {$p\longrightarrow\infty$, i.e., infinite dimensional vector-valued Gaussian source}, then, we obtain
\begin{align}
\lim_{p\rightarrow\infty}\frac{1}{p}R^{\op}_{\zd}(D)\leq\lim_{p\rightarrow\infty}\frac{1}{p}{R}_{\text{GM}}^{\na}(D),\label{operational_upper_bound_vector_quantization:2}
\end{align} 
{because even if $r=p$ we know from \cite[Corollary 7.2.1]{zamir:2014}, that $G_p\rightarrow\frac{1}{2\pi{e}}$ which still cancels the second right hand side term in \eqref{operational_upper_bound_vector_quantization:111}.}
Interestingly, using vector quantization for infinite dimensional vector Gaussian source, the term due to space-filling loss and the loss due to entropy coding asymptotically goes to zero. Hence, utilizing {the latter and Theorem \ref{theorem:bounds}} we obtain
\begin{align}
\lim_{p\rightarrow\infty}\frac{1}{p}{R}_{\text{GM}}^{\na}(D)\leq\lim_{p\rightarrow\infty}\frac{1}{p}R^{\op}_{\zd}(D)\leq\lim_{p\rightarrow\infty}\frac{1}{p}{R}_{\text{GM}}^{\na}(D),\label{tight_bound}
\end{align}
i.e., ${R}_{\text{GM}}^{\na}(D)$ approximates {$R^{\op}_{\zd}(D)$ of Problem \ref{problem1}}.

\par In the next remark, we summarize two important observations that arise as an outcome of the previous result.

\begin{remark}(Improvements on bounds of {Theorem \ref{theorem:bounds}}){\ \\}
{The observation in \S\ref{subsection:vector_quantization}, \eqref{tight_bound} implies that in Theorem \ref{theorem:bounds}, inequalities ${R}^{\na}(D)\leq\widehat{R}^{\na}(D)\leq{R}^{\op}_c(D)\leq{R}_{\zd}^{\op}(D)$ hold with equality if one assumes infinite dimensional Gauss-Markov sources subject to an asymptotic $\mse$ distortion.} The latter also holds for scalar-valued Gaussian $\ar$ sources of an infinite order because this class of sources can be modified to infinite dimensional vector-valued Gauss-Markov sources using a trivial extension of Remark \ref{theorem:any_order}.
\end{remark}

\section{Examples}\label{sect:examples}

\par In this section, we present examples to demonstrate the validity of our theoretical framework of \S\ref{sec:new_realization_scheme}, \S\ref{sec:main_results}. In our numerical experiments, we also compute $R^{\op}_c(D)$ since the desired class of zero-delay codes, i.e., $R^{\op}_{\zd}(D)$, is a subclass of the general class of causal codes. We note that in each of the following examples, the corresponding lower and upper bounds to {Problem 1} are evaluated by invoking the $\sdp$ solver operating in the CVX platform \cite{cvx}. The operational rates are computed based on a $\sdusq$, however, in one example we consider a vector quantizer, namely, a $D_4$-lattice which is a 4-dimensional lattice \cite{conway-sloane1999}.  

\subsection{Stable Gaussian $\ar$ Sources}\label{sub:stable}

\par In what follows we discuss two examples of stable Gaussian $\ar$ sources.

\begin{example}($\mathbb{R}^4$-valued stable Gauss-Markov source){\ \\}\label{example2}
We consider a $\mathbb{R}^4$-valued Gauss-Markov source given  by
\begin{align}
{\bf x}_{t+1} 
=\underbrace{\begin{bmatrix}
0.0551 &   0.0893  &  0.0051  &  0.0649\\
0.0708  &  0.0896 &   0.0441  &  0.0278\\
0.0291   & 0.0126  &  0.0030  &  0.0676\\
0.0511   & 0.0207   & 0.0457  &  0.0591
 \end{bmatrix}}_{A}
 {\bf x}_{t} 
+{B}{\bf w}_{t},\label{nonstationary:two:dimension:equation2}
\end{align}
where $B{\bf w}_t\sim{\cal N}(0;\Sigma_{{\bf w}})$, $\Sigma_{{\bf w}}\sim{\cal N}(0;I)$. This example corresponds to \eqref{state_space_model} for $p=q=4$. The eigenvalues of matrix $A$ are {$(\mu_{A,1},\mu_{A,2},\mu_{A,3},\mu_{A,4})=(0.1953, -0.0383, -0.0045, 0.0542)$. Since $|\mu_{A,i}|<1,~\forall{i},$ then, the source is stable.} By invoking in the $\sdp$ solver to the semidefinite representation of $R_{\text{GM}}^{\na}(D)$ given by \eqref{eq:semidefinite_representation1}, we plot the theoretical lower and upper bounds on $R_{c}^{\op}(D)$ and $R_{\zd}^{\op}(D)$. The actual achievable rates are implemented using a $\sdusq$ followed by Huffman entropy coding that corresponds to the scheme of Fig. \ref{fig:replacement}. The rate-distortion curves are illustrated in Fig. \ref{fig:bounds:stable:dim4:SDUSQ}. 

\begin{figure}[htp]
\centering
\includegraphics[width=0.7\columnwidth]{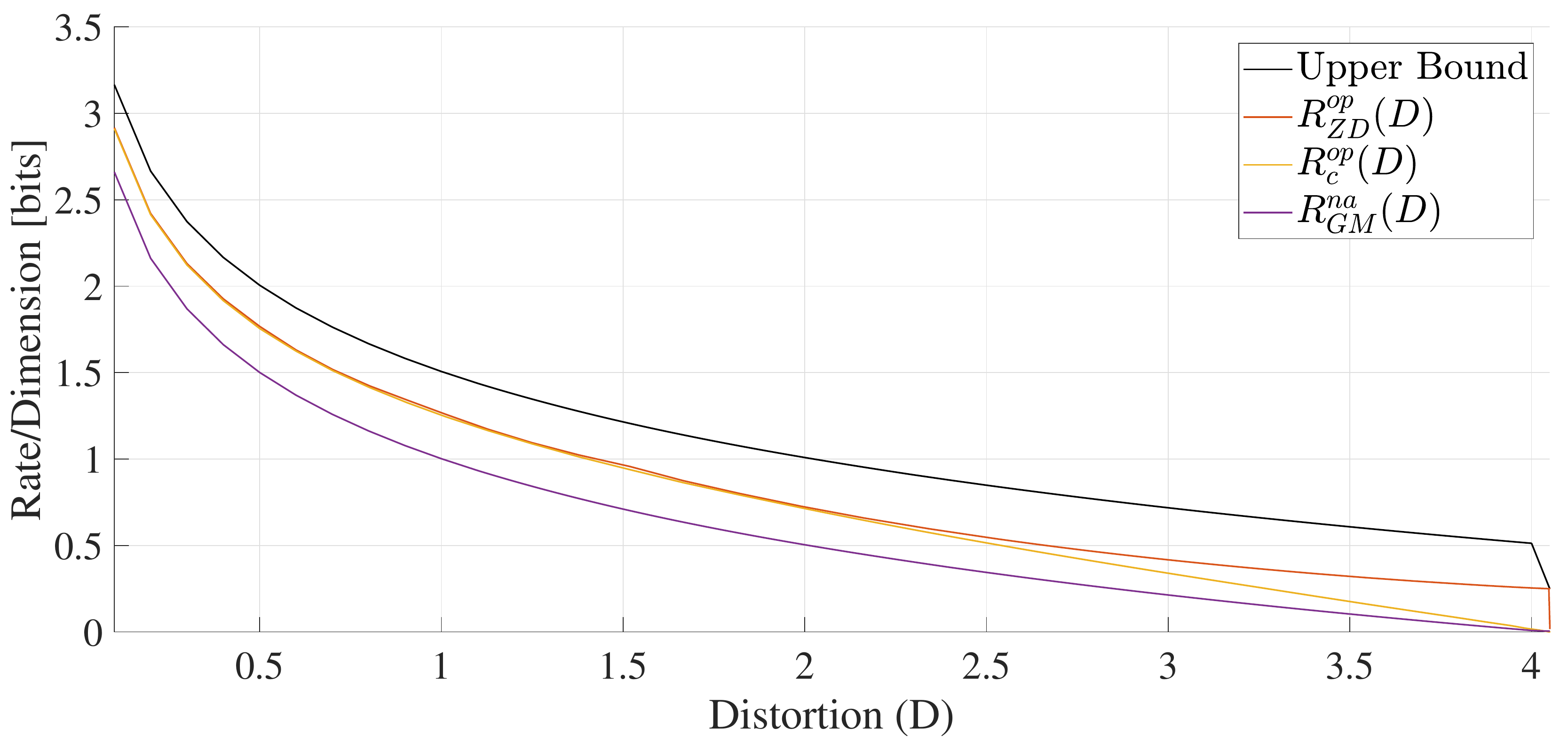}
\caption{Theoretical bounds on $R_{\zd}^{\op}(D)$ and achievable rates of $R_{c}^{\op}(D)$ and $R_{\zd}^{\op}(D)$ normalized per dimension that correspond to a $\mathbb{R}^4$-valued stable Gauss-Markov source.}
\label{fig:bounds:stable:dim4:SDUSQ}
\end{figure}
{For this example, the reverse-waterfilling kicks in when $D>3.95$. This means that the number of active dimension is $r<4$. Off course, when $D=D_{\max}$, then, $r=0$ because all dimensions are inactive.}
\end{example}

\begin{example}(Scalar-valued stable Gaussian $\ar(2)$ source){\ \\}\label{example2}
We consider a scalar Gaussian $\ar(2)$ source\footnote{The Gaussian source of \eqref{scalar_stable_example:ar2} is stable because for $a=0.3$ and $b=0.5$ it satisfies $a+b<1$ and $b-a<1$ and $|b|<1$ (see cf. \cite[Chapter 5]{shanmugan:1988}).} given by
\begin{align}
{\bf x}_{t+1}=0.3{\bf x}_{t}+0.5{\bf x}_{t-1}+{\bf w}_t, {\bf w}_t\sim{\cal N}(0;1).\label{scalar_stable_example:ar2}
\end{align}
Using Remark~\ref{theorem:any_order}, \eqref{scalar_stable_example:ar2} can be reformulated to a $\mathbb{R}^2$-valued Gauss-Markov source as follows:
\begin{align}
\underbrace{\begin{bmatrix}
   {\bf x}_{t+1}\\
   {\bf x}_{t} 
 \end{bmatrix}}_{\tilde{\bf x}_{t+1}} 
=\underbrace{\begin{bmatrix}
    0.3 & 0.5 \\
    1 & 0 
 \end{bmatrix}}_{A}
\underbrace{\begin{bmatrix}
   {\bf x}_{t}\\
   {\bf x}_{t-1} 
 \end{bmatrix}}_{\tilde{\bf x}_{t}}  
+\underbrace{\begin{bmatrix}
    1 & 0 \\
    0 & 0 
 \end{bmatrix}}_{B} 
\underbrace{\begin{bmatrix}
{\bf w}_{t} \\
0
\end{bmatrix}}_{\tilde{\bf w}_t},\label{modified:var1}
\end{align}
where $B\tilde{\bf w}_t\sim{\cal N}(0;\begin{bmatrix}
1 & 0\\
0 & 0
\end{bmatrix})$.
The modified Gaussian source corresponds to \eqref{state_space_model} for $p=q=2$. Since the original source is stable, the modified source is also stable. This can be verified by {observing that the eigenvalues of the augmented matrix $A$ are $(\mu_{A,1},~\mu_{A,2})=(0.8728, -0.5728)$. Since $|\mu_{A,i}|<1,~\forall{i}$, the source is stable as expected.} By invoking the $\sdp$ solver to the semidefinite representation of $R_{\text{GM}}^{\na}(D)$ given by \eqref{eq:semidefinite_representation2}, we plot the theoretical lower and upper bounds to $R_{c}^{\op}(D)$ and $R_{\zd}^{\op}(D)$ whereas the actual achievable rates are implemented using a $\sdusq$ followed by Huffman entropy coding that corresponds to the scheme of Fig. \ref{fig:replacement}. The rate-distortion curves are illustrated in Fig. \ref{fig:bounds:stable:dim2:ar2}. 

\begin{figure}[htp]
\centering
\includegraphics[width=0.7\columnwidth]{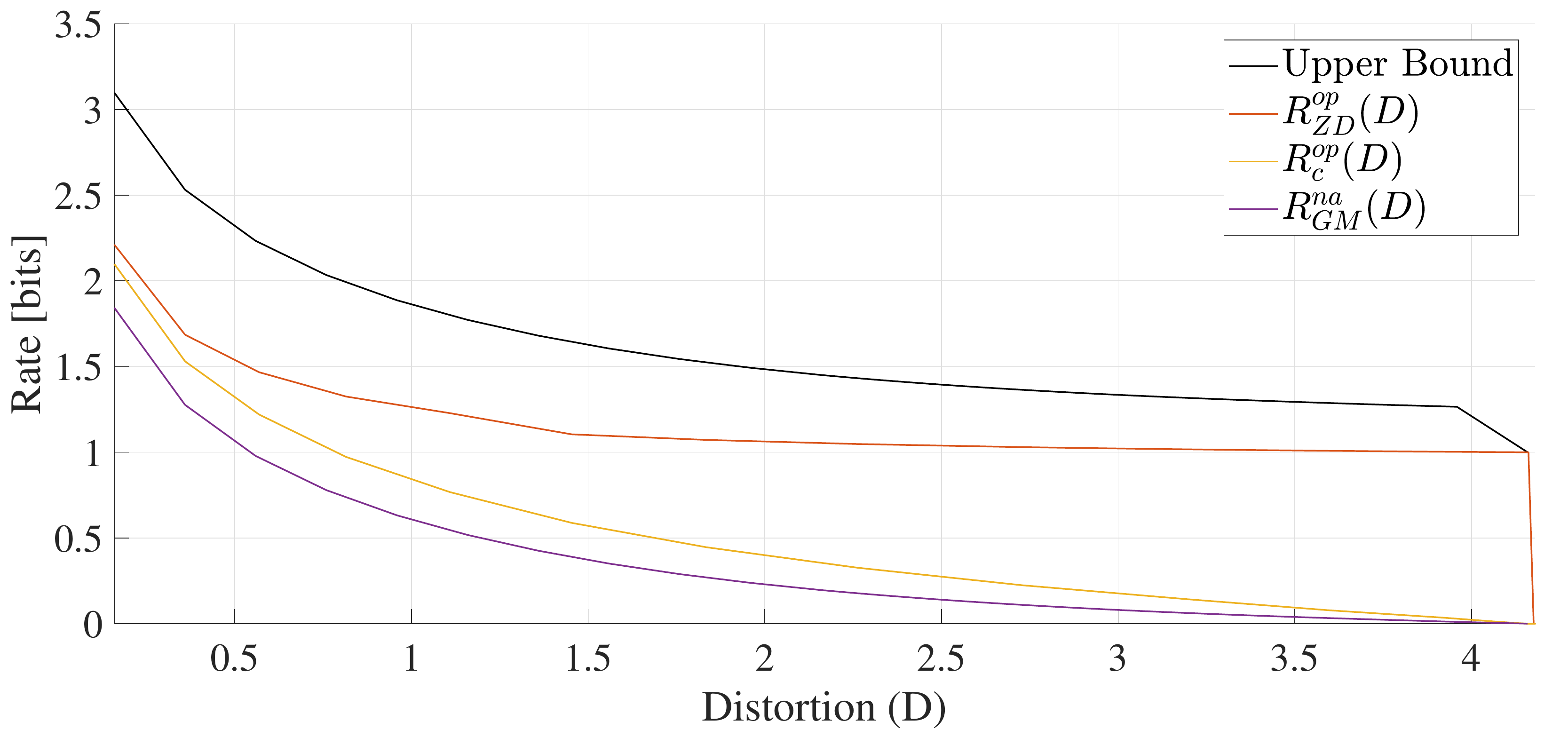}
\caption{Theoretical bounds on $R_{\zd}^{\op}(D)$ and achievable rates of $R_{c}^{\op}(D)$ and $R_{\zd}^{\op}(D)$ normalized per dimension that corresponds to a scalar-valued stable Gaussian $\ar(2)$ source.}
\label{fig:bounds:stable:dim2:ar2}
\end{figure}
{It is interesting to point out that the augmented state space representation of \eqref{modified:var1} does not change the number of active dimensions. Instead, for $D\in(0,D_{\max}]$, only the first dimension is active ($r=1$) whereas $p=2$. This is expected for two reasons; first the original source \eqref{scalar_stable_example:ar2} is scalar and the reverse-waterfilling is reasonable to affect such source, and second, the second dimension of the augmented representation yields a ``noiseless'' source which in turn means that it can be estimated without using a budget of the distortion. Again, when $D=D_{\max}$, then, $r=0$ because all dimensions are inactive.}
\end{example}

\subsection{Unstable Gaussian $\ar$ Sources}\label{sub:unstable}

\par In what follows we discuss two examples of unstable Gaussian $\ar$ sources.

\begin{example}($\mathbb{R}^4$-valued unstable Gauss-Markov source){\ \\}\label{example4}
We consider a $\mathbb{R}^4$-valued Gauss-Markov source given by
\begin{align}
{\bf x}_{t+1} 
=\underbrace{\begin{bmatrix}
0.8147  &  0.6324  &  0.9575  &  0.9572 \\
    0.9058 &   0.0975  &  0.9649 &   0.4854\\
    0.1270  &  0.2785  &  0.1576  &  0.8003\\
    0.9134   & 0.5469  &  0.9706   & 0.1419
 \end{bmatrix}}_{A}
 {\bf x}_{t} 
+{B}{\bf w}_{t},
  \label{unstable:source:dim_4}
\end{align}
where $B{\bf w}_t\sim{\cal N}(0;\Sigma_{{\bf w}})$, $\Sigma_{{\bf w}}\sim{\cal N}(0;I)$. This example corresponds to \eqref{state_space_model} for $p=q=4$. The eigenvalues of matrix $A$ are {$(\mu_{A,1},\mu_{A,2},\mu_{A,3},\mu_{A,4})=(2.4022, -0.0346, -0.7158, -0.4400)$. Since $|\mu_{A,1}|>1$, then, the source is unstable.} Hence, using \eqref{unstable_eig}, we require rates (normalized per dimension):
\begin{align}
\frac{1}{4}{R}_{\text{GM}}^{\na}(D)\geq\frac{1}{4}\log|\mu_{A,1}|=0.3161~\text{bits}.\label{sum_log_dim4}
\end{align} 
By invoking the $\sdp$ solver to the semidefinite representation of $R_{\text{GM}}^{\na}(D)$ given by \eqref{eq:semidefinite_representation1}, we plot the theoretical lower and upper bounds on $R_{c}^{\op}(D)$ and $R_{\zd}^{\op}(D)$ and the bound of \eqref{sum_log_dim4}. The actual achievable rates are implemented using a $D_4$-lattice followed by Huffman entropy coding that corresponds to the scheme of Fig. \ref{fig:replacement}. The $D_4$-lattice \cite{conway-sloane1999} has a normalized second moment $G_4=0.076603$. Hence, by substituting this value in \eqref{operational_upper_bound_vector_quantization:1} (normalized per dimension), we obtain:
\begin{align}
\frac{1}{4}{R}_{\text{GM}}^{\na}(D)&\leq\frac{1}{4}R^{\op}_{\zd}(D)\leq\frac{1}{4}{R}_{\text{GM}}^{\na}(D)+0.4439~\text{bits}.\label{example:unstable:dim4_lower}
\end{align}
The rate-distortion curves, for the distortion region $D\in(0,3]$, are illustrated in Fig. \ref{fig:bounds:unstable:dim4:z4}. 

\begin{figure}[htp]
\centering
\includegraphics[width=0.7\columnwidth]{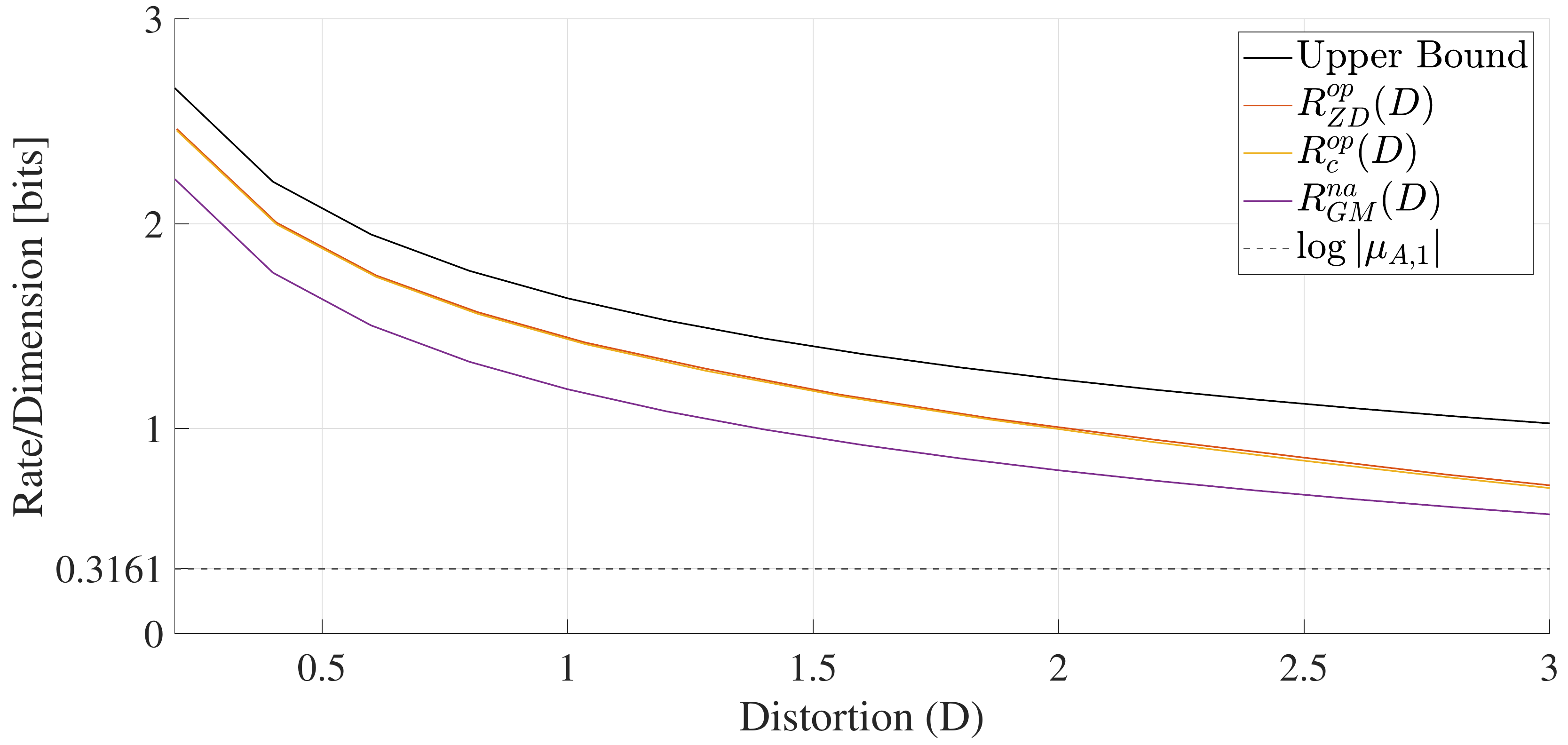}
\caption{Theoretical bounds on $R_{\zd}^{\op}(D)$ and achievable rates of $R_{c}^{\op}(D)$ and $R_{\zd}^{\op}(D)$ normalized per dimension that correspond to an $\mathbb{R}^4$-valued unstable Gauss-Markov source.}
\label{fig:bounds:unstable:dim4:z4}
\end{figure}
{We note that $D_4$-lattice is appropriate to use as long as $r=p$. If however the reverse-waterfilling kicks in then one has to use the vector quantizer that matches the active number of dimensions, i.e., $r\in\{1,2,3\}$. It should be remarked that, in general, for unstable sources it is expected that $r\neq{0}$ because at least for dimensions with $|\mu_{A,i}|>1$, for some $i$, $D_{\max}$ is infinite.}
\end{example}

\begin{example}(Scalar-valued unstable Gaussian $\ar(2)$ source){\ \\}\label{example5}
We consider a scalar Gaussian $\ar(2)$ source given by
\begin{align}
{\bf x}_{t+1}=1.2{\bf x}_{t}+0.5{\bf x}_{t-1}+{\bf w}_t,~{\bf w}_t\sim{\cal N}(0;1).\label{scalar_unstable_example:ar2}
\end{align}
Using Remark~\ref{theorem:any_order}, \eqref{scalar_unstable_example:ar2} can be reformulated to a $\mathbb{R}^2$-valued Gauss-Markov source as follows:
\begin{align}
\tilde{\bf x}_{t+1} 
=\underbrace{\begin{bmatrix}
    1.2 & 0.5 \\
    1 & 0 
 \end{bmatrix}}_{A}
\tilde{\bf x}_{t} 
+\underbrace{\begin{bmatrix}
    1 & 0 \\
    0 & 0 
 \end{bmatrix}}_{B} 
\tilde{\bf w}_{t},\label{modified:unstable:var1}
\end{align}
where $B\tilde{\bf w}_t\sim{\cal N}(0;\begin{bmatrix}
1 & 0\\
0 & 0
\end{bmatrix})$.
The modified Gaussian source now corresponds to \eqref{state_space_model} for $p=q=2$. Since the original source was unstable, the modified source is also unstable. This can also be verified by observing that the eigenvalues of the augmented matrix $A$ are {$(\mu_{A,1},\mu_{A,2})=(1.5274,-0.3274)$. Since $|\mu_{A,1}|>1$, then, the source is unstable.} 
Moreover, using \eqref{unstable_eig}, we require rates (normalized per dimension):
\begin{align}
{R}_{\text{GM}}^{\na}(D)\geq\log|\mu_{A,1}|=0.611~\text{bits}.\label{sum_log_dim2}
\end{align}
By invoking the $\sdp$ solver of the semidefinite representation of $R_{\text{GM}}^{\na}(D)$ given by \eqref{eq:semidefinite_representation2}, we plot the theoretical lower and upper bounds on $R_{c}^{\op}(D)$ and $R_{\zd}^{\op}(D)$ and the bound of \eqref{sum_log_dim2}. The actual achievable rates are implemented using a $\sdusq$ followed by Huffman entropy coding that corresponds to the scheme of Fig. \ref{fig:replacement}. The rate-distortion curves, {for a distortion region $D\in(0,3]$,} are illustrated in Fig. \ref{fig:bounds:unstable:dim2:ar2}. 

\begin{figure}[htp]
\centering
\includegraphics[width=0.7\columnwidth]{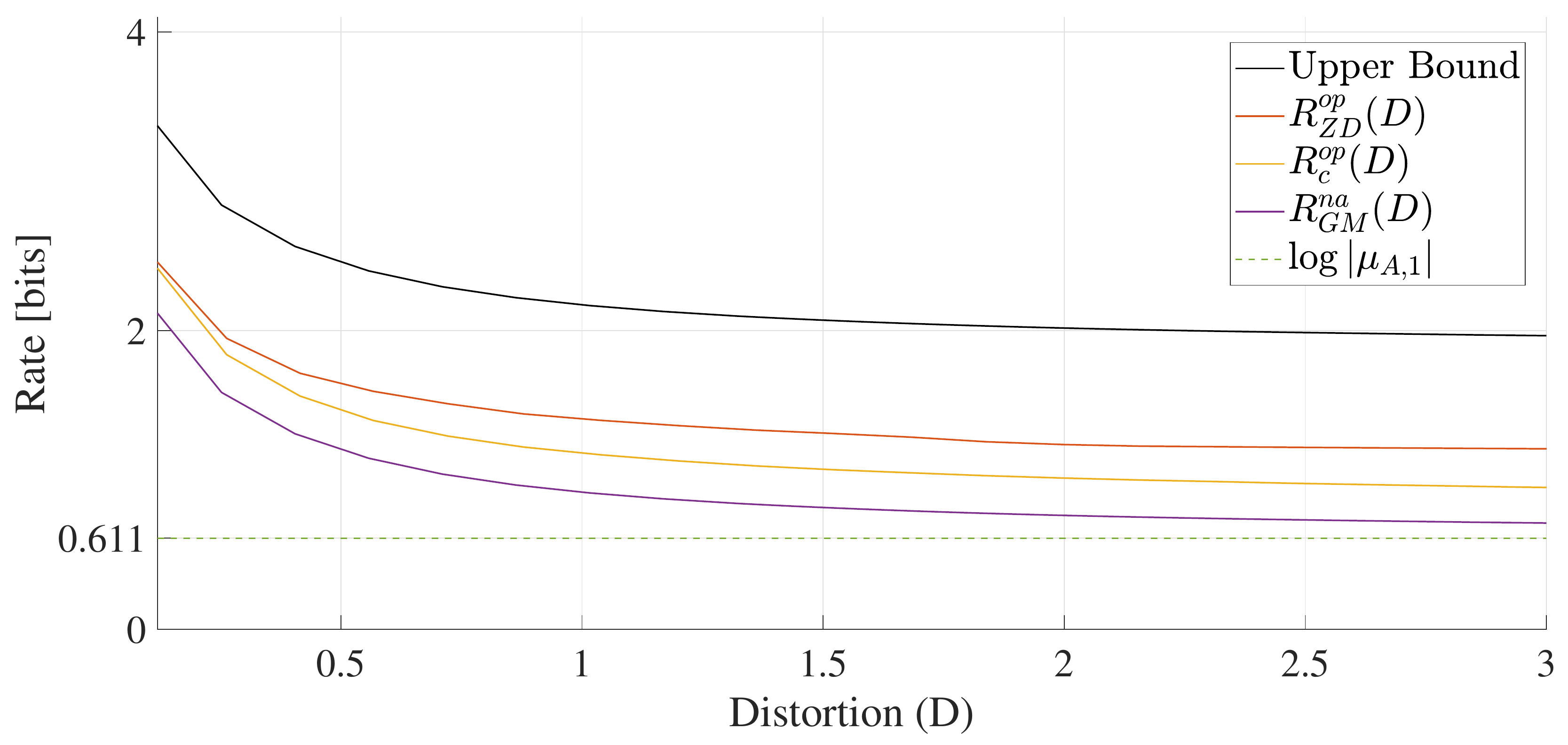}
\caption{Theoretical bounds on $R_{\zd}^{\op}(D)$ and achievable rates of $R_{c}^{\op}(D)$ and $R_{\zd}^{\op}(D)$ normalized per dimension that corresponds to a scalar unstable Gaussian $\ar(2)$ source.}
\label{fig:bounds:unstable:dim2:ar2}
\end{figure}
{Similar to Example \ref{example2}, the augmented state space representation of \eqref{modified:unstable:var1} does not alternate the original source model of \eqref{scalar_unstable_example:ar2} because for the whole distortion region of $D\in(0,D_{\max}]$, $r=1$ (only the first dimension is active).}
\end{example}

\subsection{Discussion of the results}\label{subsec:discussion}
From the rate-distortion performances shown in Fig.~\ref{fig:bounds:stable:dim4:SDUSQ} - \ref{fig:bounds:unstable:dim2:ar2}, it is clear that the proposed feedback-quantization based scheme is able to get very close to the $\opta$.  At high rates, it can be observed that the theoretical upper bound tends to be approximately $\frac{1}{r}$~bits above the operational rates. This is expected since at high rates the loss of the entropy coder vanishes and the only excess rate will be due to that of the space-filling loss of the quantizer \cite{linder-zamir2006}. At the other extreme, when the bit rates are very small, the impact due to the space-filling loss will vanish but the excess rate of the entropy coder will increase. To reduce the excess rate of the entropy coder, one could either apply fixed-rate coding instead of variable rate coding, or avoid the use of prefix codes in the entropy coder as suggested for instance in \cite{szpankowski:2011}. The latter approach assumes that individual codewords can be distinguished by other means than using a prefix.

%
%
%
%

\section{Conclusions and Future directions}\label{sec:conclusions}
\paragraph*{Conclusions} In this work, we considered zero-delay source coding of a vector-valued Gauss-Markov source subject to a $\mse$ fidelity criterion. We derived lower and upper bounds to the operational zero-delay Gaussian $\rdf$ by means of a new feedback realization scheme which is characterized by a resource allocation problem in spatial dimension and predictive coding. 
The performance of this scheme is analyzed when using scalar and vector lattice quantization followed by memoryless entropy coding. For infinite dimensional Gauss-Markov sources, we observed that the $\nrdf$ approximates the operational zero-delay $\rdf$. Our theoretical results are corroborated via various simulation examples.
\paragraph*{Future Directions} The results derived in this paper makes it interesting to consider the extension of this framework to closed-loop control systems where the goal is to identify the fundamental performance limitations of controlled processes under communication constraints. {The extension of this framework to partially observable Gaussian sources with various channel parameters similar to \cite[Chapter 11]{matveev-savkin:2009} is also under consideration.} 


%
%
%
%
%
%

\appendices

\section{Proof of Lemma~\ref{theorem:solution:semidefinite}}\label{proof:theorem:solution:semidefinite}

{\bf(2)} First, we note that \eqref{alternative:optimization:inf} can be reformulated as follows
\begin{align}
R_{\text{GM}}^{\na}(D)&=\min_{\substack{0\prec\Pi\preceq\Lambda \\ \trace{(\Pi)}\leq{D}}}\frac{1}{2}\log|\Pi+A^{-1}BB{\T}(A{\T})^{-1}|\nonumber\\
&-\frac{1}{2}\log|\Pi|+\log\abs(|A|).\label{reformulate_1}
\end{align}
However, \eqref{reformulate_1} can be written as
\begin{align}
&R_{\text{GM}}^{\na}(D)\nonumber\\
&=\min_{\substack{\Pi\preceq\Lambda\\ \trace{(\Pi)}\leq{D}}}\frac{1}{2}\log|I+\Pi^{-\frac{1}{2}}A^{-1}BB{\T}(A{\T})^{-1}\Pi^{-\frac{1}{2}}|\nonumber\\
&+\log\abs(|A|)\label{reformulate_2}\\
&\stackrel{(a)}=\min_{\substack{\Pi\preceq\Lambda\\ \trace{(\Pi)}\leq{D}}}\frac{1}{2}\log|I+B{\T}(A{\T})^{-1}\Pi^{-1}A^{-1}B|\nonumber\\
&+\log\abs(|A|),\label{reformulate_3}
\end{align}
where $(a)$ follows from Sylvester's determinant identity \cite[Corollary 18.1.2]{harville:1997}. Next, we introduce a decision variable $Q_2\triangleq{I}+B{\T}(A{\T})^{-1}\Pi^{-1}A^{-1}B$. The monotonicity of the determinant function, results into writing \eqref{reformulate_2} as
\begin{align}
\min_{\substack{0\prec\Pi\preceq\Lambda\\ \trace(\Pi)\leq{D}\\ 0\prec Q_2 \preceq ({I}+B{\T}(A{\T})^{-1}\Pi^{-1}A^{-1}B)^{-1}}}\log\abs(|A|)-\frac{1}{2}\log|Q_2|.\label{monotonicity_2}
\end{align} 
Applying the Woodbury matrix identity \cite[Theorem 18.2.8]{harville:1997} in the inequality constraint $0\prec Q_2 \preceq ({I}+B{\T}(A{\T})^{-1}\Pi^{-1}A^{-1}B)^{-1}$, we obtain 
\begin{align}
0\prec Q_2 \preceq I-B{\T}(A\Pi{A}{\T}+BB{\T})^{-1}B. \label{woodbury_identity_2}
\end{align}
{Using \eqref{scalings:3:inf} in \eqref{woodbury_identity_2}, we can write the latter as} the linear matrix inequality condition of \eqref{lmi:2}.
This completes the proof.\qed

\section{Proof of Theorem \ref{theorem:alter:real:waterfilling}}\label{proof:theorem:alter:real:waterfilling}

{The proof depends on the joint diagonalization of the pair of symmetric positive definite matrices $(\Pi,~\Lambda)$. This is a version of the cogredient diagonalization approach derived in \cite[Theorem 8.3.1]{bernstein2011}.    \\
Consider the ordered pair ($\Pi$, $\Lambda$) where $\Pi\in\mathbb{R}^{p\times{p}}$ and $\Lambda\in\mathbb{R}^{p\times{p}}$. Denote the eigenvalue decomposition of $\Pi$ by:
\begin{align}
\Pi=U\T\tilde{\Pi}{U},\label{appendix:eig1}
\end{align}
where $U\in\mathbb{R}^{p\times{p}}$ is an orthogonal matrix, $\tilde{\Pi}=\diag(\mu_{\Pi,i})$ and $\mu_{\Pi,1}\geq\mu_{\Pi,2}\geq\ldots\geq\mu_{\Pi,p}$. Observe that from \eqref{appendix:eig1}, the square root $\Pi^{\frac{1}{2}}=U\T\tilde{\Lambda}^{\frac{1}{2}}{U}$. Also, denote the eigenvalue decomposition of $\Pi^{-\frac{1}{2}}\Lambda\Pi^{-\frac{1}{2}}$ by:
\begin{align}
\Pi^{-\frac{1}{2}}\Lambda\Pi^{-\frac{1}{2}}=V\T{S}{V}.\label{appendix:eig2}
\end{align}
where such that $V\in\mathbb{R}^{p\times{p}}$ is an orthogonal matrix, $S=\diag(\mu_{\Pi^{-\frac{1}{2}}\Lambda\Pi^{-\frac{1}{2}},i})\equiv\diag(\mu_{S,i})$ and $\mu_{S,1}\geq\mu_{S,2}\geq\ldots\geq\mu_{S,p}$. \\
The {\it simultaneous diagonalizer} $E\in\mathbb{R}^{p\times{p}}$ of the pair ($\Pi$, $\Lambda$) is defined as $E\triangleq\tilde{\Pi}^{\frac{1}{2}}V\Pi^{-\frac{1}{2}}$. Then, each matrix in the pair ($\Pi$, $\Lambda$) is diagonalized as in \eqref{diagonalized_lambdas_deltas}
where the diagonal matrix $\tilde{\Lambda}=\diag(\mu_{{\Lambda},i})$ is defined as $\tilde{\Lambda}\triangleq\tilde{\Pi}S$. Moreover, using \eqref{diagonalized_lambdas_deltas} then from \eqref{scalings:1:inf} and \eqref{scalings:2:inf} we obtain
\begin{align}
H=E^{-1}\tilde{H}E,~\Sigma_{\bf v}=E^{-1}\tilde{\Sigma}_{\bf v}(E\T)^{-1},\label{appendix:eig3}
\end{align}
where $\tilde{H}$ is the diagonal matrix given in \eqref{design_parameters_no_waterfilling2} and $\tilde{\Sigma}_{\bf v}$ is the diagonal matrix given in \eqref{scaling_no_waterfilling1}. The diagonal matrix $\tilde{H}$ can be further decomposed to the diagonal matrices $\Theta$ and $\Phi$ chosen as in \eqref{scaling_no_waterfilling1} and \eqref{scaling_no_waterfilling2}, respectively. The choice of $\Theta,~\Phi$ is not random but depends on the realization scheme of \eqref{optimal_realization_asymptotic}. $\Theta={\tilde{\Sigma}_{\bf v}}^{\frac{1}{2}}$ because the independent Gaussian noise process ${\bf v}_t\sim{\cal N}(0;\Pi{H\T})$ of \eqref{optimal_realization_asymptotic} can be equivalently reformulated using \eqref{appendix:eig3} to $E^{-1}\Theta{\bf v}_t\sim{\cal N}(0;I)$. Using the latter transformation and \eqref{appendix:eig3}, the realization of \eqref{optimal_realization_asymptotic} can be reformulated as in \eqref{optimal:realization:diagonal}.  $a)$ If $\tilde{H}\succ{0}$, then, $\forall{i}=1,\ldots,p$, $\mu_{\Pi,i}<\mu_{\Lambda,i}$, i.e., all $p$-dimensions of the scheme are active (no resource allocation). Hence $\tilde{H}$ is full rank, i.e., $r=\rank{(\tilde{H})}=p$.  $b)$ If $\tilde{H}\succeq{0}$, then, for some $i\in\{1,\ldots,p\}$, $\mu_{\Pi,i}=\mu_{\Lambda,i}$, i.e., some $p-r$ dimensions are inactive and for these dimensions the rate is zero. This in turn means that the data rate budget of the system is experiencing a resource allocation. Hence $\tilde{H}$ is rank deficient, i.e., $r=\rank{(\tilde{H})}<p$. This completes the proof. \qed


\section{Proof of Theorem~\ref{theorem:general_bound}}\label{proof:theorem:general_bound}

{The coding scheme described in Fig. \ref{fig:replacement} operates with an operational rate for each $t$ equal to the conditional entropy $H(\tilde{\bm \beta}_t|{\bf q}_t)$ where $\tilde{\bm \beta}_{t}=\{\tilde{\bm \beta}_{t,1},\ldots,\tilde{\bm \beta}_{t,r}\}$, $\tilde{\bm \beta}_{t,i}=Q_{\Delta_i}({\bm \alpha}_{t,i}+{\bf r}_{t,i}),~i=1,\ldots,r$, i.e., the entropy of the quantized output $\tilde{\bm \beta}_t$ conditioned on the $t-$value of the dither signal ${\bf q}_t$.} This leads to the following analysis:
\begin{align}
&\mathbb{H}(\tilde{\bm \beta}_t|{{\bf q}}_t)\stackrel{(a)}=I({\bm \alpha}_{t};{\bm \beta}_{t})\stackrel{(b)}=I({\bm \alpha}_{t};{\bm \alpha}_{t}+{\bm \xi}_{t})\nonumber\\
&={h}({\bm \alpha}_{t}+{\bm \xi}_{t})-h({\bm \xi}_{t})\nonumber\\
&\stackrel{(c)}={h}({\bm \alpha}\G_{t}+{\bf v}_{t})-h({\bf v}_{t})+\mathbb{D}({\bm \xi}_{t}||{\bf v}_t)-\mathbb{D}({\bm \alpha}_{t}+{\bm \xi}_{t}||{\bm \alpha}\G_{t}+{\bf v}_{t})\nonumber\\
&\stackrel{(d)}\leq{I}({\bm \alpha}\G_{t};{\bm \alpha}\G_{t}+{\bf v}_{t})+\mathbb{D}({\bm \xi}_{t}||{\bf v}_t)\nonumber\\
&\stackrel{(e)}=I({\bm \alpha}\G_{t};{\bm \alpha}\G_{t}+{\bf v}_{t})+\frac{{r}}{2}\log_2\left(\frac{\pi{e}}{6}\right)\nonumber\\
&=I({\bm \alpha}\G_{t};{\bm \beta}_t\G)+\frac{{r}}{2}\log_2\left(\frac{\pi{e}}{6}\right),~~{r\leq{p}},\label{proof:eq.1}
\end{align}
where $(a)$ stems from \cite[Theorem 5.2.1]{zamir:2014}; $(b)$ stems from the fact that the quantization noise is ${\bm \xi}_t={\bm \beta}_t-{\bm \alpha}_t$ (see Fig.~\ref{fig:equivalent_additive_uniform_noise_channel}); $(c)$ stems as follows; first note that the relative entropy between the distributions of two random vectors, a uniform random vector ${\bf n}$, and a Gaussian random vector ${\bf n}\G$,  both with the same moments of the quadratic form $\log_2{\bf P}_{{\bf n}\G}$, implies that $\mathbb{D}({\bf n}||{\bf n}\G)=h({\bf n}\G)-h({\bf n})$ (for details see, e.g., \cite[Proof of Theorem 8.6.5]{cover-thomas2006}). Then, using this property we can immediately deduce that $\mathbb{D}({\bm \xi}_{t}||{\bf v}_t)=h({\bf v}_t)-h({\bm \xi}_t)$ and $\mathbb{D}({\bm \alpha}_{t}+{\bm \xi}_{t}||{\bm \alpha}\G_{t}+{\bf v}_{t})=h({\bm \alpha}\G_{t}+{\bf v}_{t})-h({\bm \alpha}_{t}+{\bm \xi}_{t})$; $(d)$ stems from the fact that $D({\bm \alpha}_{t}+{\bm \xi}_{t}||{\bm \alpha}\G_{t}+{\bf v}_{t})\geq{0}$, with equality if and only if $\{{\bm \xi}_t:~t\in\mathbb{N}_0\}$ becomes a Gaussian distribution; $(e)$ follows from the fact that the differential entropy $h({\bf v}_t)$ of a Gaussian random vector with covariance $\Sigma_{\bf v}\triangleq\diag({\mu_{\Sigma_{\bf v},i}})$ is 
\begin{align*}
h({\bf v}_t)=\frac{1}{2}\log_2(2\pi{e})^{{r}}|\Sigma_{\bf v}|=\sum_{i=1}^{{r}}\frac{1}{2}\log_2(2\pi{e}){\mu_{\Sigma_{\bf v},i}},
\end{align*}
and the entropy $h({\bm \xi}_t)$ of the uniformly distributed random vector ${\bm \xi}_t=\{{\bm \xi}_{t,i}:~i=1,2,\ldots,{r}\}, {\bm \xi}_{t,i}\sim{\unif}\left(-\frac{\Delta_i}{2},\frac{\Delta_i}{2}\right)$ is $h({\bm \xi}_t)=\sum_{i=1}^{{r}}\frac{1}{2}\log_2\Delta^2_i,~{r\leq{p}}$. Since we have that ${\mu_{\Sigma_{\bf v},i}}=\frac{\Delta^2_i}{12}$, $i=1,\ldots,{r}$, the result follows.
\par Recall {from Corollary \ref{corollary:dpe} that 
\begin{align}
{I}({\bf x}^n;{\bf y}^n)=\sum_{t=0}^{n}{I}({\bm \alpha}_t;{\bm \beta}_t).\label{proof:eq.2}
\end{align}
}

\par Since we are assuming joint memoryless entropy coding of ${r}$ independently operating uniform scalar quantizers with subtractive dither, then, by \eqref{intantaneous_code_bounds}, for $t\in\mathbb{N}_0^n$, we obtain
\begin{align}
&\sum_{t=0}^{n}\mathbb{E}({\bf l}_t)\leq\sum_{t=0}^{n}\left(\mathbb{H}(\tilde{\bm \beta}_t|{{\bf q}}_t)+1\right)\nonumber\\
&\stackrel{(a)}\leq\sum_{t=0}^{n}\left(I({\bm \alpha}\G_{t};{\bm \beta}\G_{t})+\frac{{r}}{2}\log_2\left(\frac{\pi{e}}{6}\right)+1\right),~{r\leq{p}},\nonumber\\
&\stackrel{(b)}\leq{I}({\bf x}^{n,\text{G}};{\bf y}^{n,\text{G}})+(n+1)\frac{{r}}{2}\log_2\left(\frac{\pi{e}}{6}\right)+(n+1),\label{proof:eq.3}
\end{align}
where $(a)$ follows by \eqref{proof:eq.1} and $(b)$ follows from \eqref{proof:eq.2}.
\par {Then, by taking the  per unit time asymptotic limit of \eqref{proof:eq.3}} and then the infimum, we obtain 
\begin{align}
&\inf\limsup_{n\longrightarrow\infty}\frac{1}{n+1}\sum_{t=0}^{n}\mathbb{E}({\bf l}_t)\nonumber\\
&\leq\inf\limsup_{n\longrightarrow\infty}\frac{1}{n+1}{I}({\bf x}^{n,\text{G}};{\bf y}^{n,\text{G}})+\frac{{r}}{2}\log_2\left(\frac{\pi{e}}{6}\right)+1\nonumber\\
&\stackrel{(a)}\Longrightarrow{R}_{\zd}^{\op}\leq{\widehat{R}_{\text{GM}}^{\na}(D)}+\frac{{r}}{2}\log_2\left(\frac{\pi{e}}{6}\right)+1,\label{proof:eq.4}
\end{align}
where $(a)$ follows by \eqref{def:operational_zero_delay} and \eqref{infinite_time_nrdf_upper_bound} respectively, and ${\widehat{R}_{\text{GM}}^{\na}(D)}$ is the upper bound expression of ${R_{\text{GM}}^{\na}(D)}$ of the {$\mathbb{R}^r-$valued Gauss-Markov source modeled as in \eqref{state_space_model},~where $r\leq{p}$}.
\par {However, we know from the conditions of Theorem \ref{theorem:asymptotic_limit}, that the asymptotic limit ${R_{\text{GM}}^{\na}(D)}$ exists and is finite and the corresponding asymptotic values of its statistics exist and they are stationary.} Hence, we establish that ${\widehat{R}_{\text{GM}}^{\na}(D)=R_{\text{GM}}^{\na}(D)}$. This completes the proof.\qed

\section{Proof of Remark~\ref{theorem:any_order}}\label{proof:theorem:any_order}

\par The proof employs a simple augmentation of the state process. This is a standard method in state space models, see. e.g., \cite{caines1988}, {however, in what follows we sketch the proof for completeness}. In particular, observe that the state space model of \eqref{vector_ar-p} can be modified as follows:
{\allowdisplaybreaks
\begin{align}
\underbrace{\begin{bmatrix}
     {\bf x}_{t+1} \\ 
     {\bf x}_{t}\\
     \vdots\\
     {\bf x}_{t-s+2}
\end{bmatrix}}_{\tilde{\bf x}_{t+1}}&=
\underbrace{\begin{bmatrix}
      A_1 & {A}_2 &\ldots & A_{s-1}&A_s\\ 
      I & 0 & \ldots &0 & 0\\
      0& I & \ddots &0 & 0\\
      \vdots & \vdots&\ddots & \vdots & \vdots\\
      0 & 0 & \ldots &I & 0\\
\end{bmatrix}}_{\tilde{A}\in\mathbb{R}^{sp\times{sp}}}
\underbrace{\begin{bmatrix}
     {\bf x}_{t} \\ 
     {\bf x}_{t-1}\\
     \vdots\\
     {\bf x}_{t-s+1}
\end{bmatrix}}_{\tilde{\bf x}_t}\nonumber\\
&+\underbrace{\begin{bmatrix}
      B & 0 &\ldots & 0\\ 
      0 & 0 & \ldots & 0\\
     \vdots & \vdots & \ldots & \vdots\\
      0 & 0 & \ldots & 0\\
\end{bmatrix}}_{\tilde{B}\in\mathbb{R}^{sp\times{sq}}}
\underbrace{\begin{bmatrix}
     {\bf w}_{t} \\ 
     0\\
     \vdots\\
     0
\end{bmatrix}}_{\tilde{\bf w}_t}.\label{proof:aug.eq.3}
\end{align}
Thus, \eqref{proof:aug.eq.3} can be written in an augmented state space form as follows:
\begin{align}
\tilde{\bf x}_{t+1}=\tilde{A}\tilde{\bf x}_t+\tilde{B}\tilde{\bf w}_t. \label{proof:aug.eq.4}
\end{align}
}
The augmented state space representation in \eqref{proof:aug.eq.4} is a $\mathbb{R}^{sp}$-valued Gauss-Markov source under the assumption of full observation of the vector $\tilde{\bf x}_{t+1}\in\mathbb{R}^{sp}$.\qed 
%
%
%
%
%
%

\bibliographystyle{IEEEtran}
\bibliography{string,photis_quantization}

\end{document}